\documentclass[%
 reprint,
superscriptaddress,
 amsmath,amssymb,
 aps,
 prx,
]{revtex4-2}

\usepackage[utf8]{inputenc}
\usepackage[english]{babel}
\usepackage{CJKutf8}

\usepackage{graphicx}% Include figure files
\usepackage{dcolumn}% Align table columns on decimal point
\usepackage{bm}% bold math
\usepackage{physics}
\usepackage{amsthm}
\usepackage{xcolor}
\usepackage[percent]{overpic}
\usepackage[normalem]{ulem}
\usepackage{booktabs}
\usepackage{hyperref}% add hypertext capabilities
\hypersetup{colorlinks=true,linkcolor=blue,citecolor=blue,filecolor=blue,urlcolor=blue}

\newtheorem{theorem}{Theorem}

\newtheorem{proposition}{Proposition}

\newtheorem{assumption}{Assumption}
\newtheorem{corollary}{Corollary}

\newcommand{\R}{\mathbb{R}}
\newcommand{\E}{\mathbb{E}}
\newcommand{\Cov}{\operatorname{Cov}}
\newcommand{\cF}{\mathcal{F}}

\newcommand{\cG}{\mathcal{G}}

\begin{document}

% \preprint{APS/123-QED}

\title{
Repairability of Inexact Solvers in Recursive State Estimation with Machine Learning
}% Force line breaks with \\
% \thanks{A footnote to the article title}%

\begin{CJK*}{UTF8}{gbsn}
\author{Yanjun Ji (冀彦君)}%
\email{y.ji@fz-juelich.de}
\affiliation{%
Jülich Supercomputing Centre, Forschungszentrum Jülich, 52425 Jülich, Germany
}%

\author{Dennis Willsch}
\email{d.willsch@fz-juelich.de}
\affiliation{%
Jülich Supercomputing Centre, Forschungszentrum Jülich, 52425 Jülich, Germany
}%
\affiliation{
FH Aachen University of Applied Sciences, 52066 Aachen, Germany
}

\author{Orkun \c{S}ensebat}
\affiliation{%
Jülich Supercomputing Centre, Forschungszentrum Jülich, 52425 Jülich, Germany
}%

\author{Priyanka Arkalgud Ganeshamurthy}
\affiliation{%
Institute for Automation of Complex Power Systems, RWTH Aachen University, Aachen, Germany
}%

\author{Zhi Pei}
\affiliation{Key Laboratory of Specialty Fiber Optics and Optical Access Networks, School of Communication and Information Science, Shanghai University, Shanghai 200444, China}

\author{M. Sahnawaz Alam}
\affiliation{%
Institute for Automation of Complex Power Systems, RWTH Aachen University, Aachen, Germany
}%

\author{Ivelina Stoyanova}%
\affiliation{%
Institute for Automation of Complex Power Systems, RWTH Aachen University, Aachen, Germany
}%

\author{Frank K. Wilhelm}
\affiliation{%
Institute for Quantum Computing Analytics (PGI-12), Forschungszentrum Jülich, 52425 Jülich, Germany
}%
\affiliation{Theoretical Physics, Saarland University, 66123 Saarbrücken, Germany}

\author{Bo Zhao}
\affiliation{
Department of Computer Science, Aalto University, 02150 Espoo, Finland
}

\author{Chao Wang (王潮)}
\affiliation{Key Laboratory of Specialty Fiber Optics and Optical Access Networks, School of Communication and Information Science, Shanghai University, Shanghai 200444, China}

\author{Kristel Michielsen}
\affiliation{%
Jülich Supercomputing Centre, Forschungszentrum Jülich, 52425 Jülich, Germany
}%
\affiliation{Faculty of Mathematics and Natural Sciences, University of Cologne, 50923 K\"oln, Germany}

\date{\today}% It is always \today, today,
             %  but any date may be explicitly specified

\begin{abstract}

Recursive state estimation often executes approximate numerical solutions inside a feedback loop, where highly accurate local steps do not guarantee better overall results. For a fixed linear Kalman model, we characterize when a correction within a prescribed subspace and norm budget can meet a local admissibility tolerance, and how the defects actually executed affect the finite-horizon covariance response. Centering each defect on the exact gain for the implemented covariance separates current solve error from inherited gain drift. Expanding the exact residual–drift identity reveals opposing quartic contributions beyond the quadratic response: innovation-covariance inflation enters positively, while local-gain reoptimization enters subtractively. Under matched initialization, an absolute sixth-order remainder bound, uniform over bounded defect sequences at fixed horizon, gives sufficient conditions for quadratic under- or overprediction. Machine learning proposes bounded corrections, while a learner-independent residual certificate and verified fallback govern execution of classical and quantum candidates without changing the reference estimator. In a power-grid tolerance study, learned correction lowers the minimum conjugate-gradient iteration count for deployment without fallback relative to uncorrected solves under the same residual certificate. Gains reconstructed from a variational quantum linear solver and from an annealing-based binary encoding, with small-scale terminal measurements on superconducting hardware and sampling on a quantum annealer, are executed through the same interface. By linking local repairability to nonlinear error propagation, the framework evaluates approximate solvers and learned corrections through independent certification and finite-horizon response, providing a practical basis for studying hybrid quantum--classical computation under shared estimator-level criteria.

\end{abstract}

%\keywords{Suggested keywords}%Use showkeys class option if keyword
                              %display desired
\maketitle
\end{CJK*}

% \tableofcontents

\section{Introduction}
\label{sec:intro}

Recursive state estimation tracks a dynamical system by alternating model-based prediction with measurement updates. It underlies power-grid dynamic state estimation \cite{zhao2019dse}, large-scale data assimilation \cite{BardsleyEtAl2013,Freitag2020}, adaptive-optics feedback \cite{poyneer2023lqg}, and feedback control of levitated mechanical systems \cite{setter2018realtime,magrini2021realtime}. In Kalman-type estimators, each update applies a gain obtained by solving a linear system involving the innovation covariance. When this gain is recomputed during operation, finite computational resources enter the inference loop directly. If the gain system is solved approximately, the resulting error does not remain local: the executed gain changes the posterior covariance, which in turn affects subsequent gain systems. The central question is therefore not only how accurately each gain is computed, but which computational errors can be repaired before execution and how the defect actually executed propagates through the subsequent recursion.

\begin{figure*}[t]
  \centering
  \includegraphics[width=\textwidth]{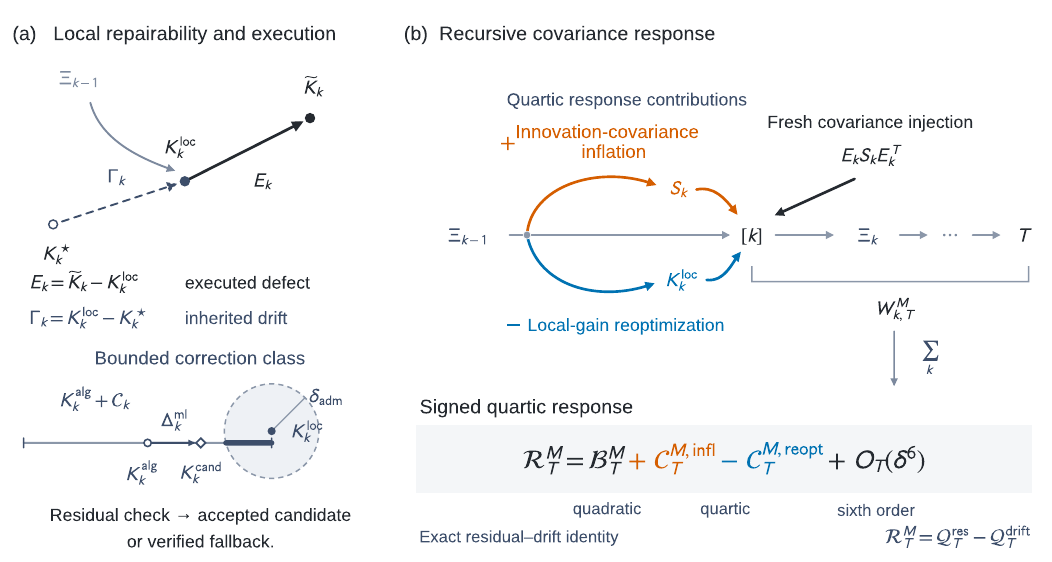}
  \caption{Schematic of local repairability, certified execution, and recursive response. (a) Gain coordinates separate the executed defect \(E_k=\widetilde K_k-K_k^{\rm loc}\) from inherited drift \(\Gamma_k=K_k^{\rm loc}-K_k^\star\). The inset shows a one-dimensional bounded correction class \(\mathcal C_k=\{\Delta\in\mathcal V_{\rm corr}:\|\Delta\|_F\le\delta_c\}\). The reachable segment \(K_k^{\rm alg}+\mathcal C_k\) intersects the admissible ball of radius \(\delta_{\rm adm}\) centered on \(K_k^{\rm loc}\). The intersection contains admissible gains, while the learned candidate \(K_k^{\rm cand}=K_k^{\rm alg}+\Delta_k^{\rm ml}\) (diamond) lies outside: correction existence does not guarantee learned attainment. Independent residual checking determines whether the candidate or a verified fallback is executed. (b) Executed defects inject \(E_kS_kE_k^{\mathsf T}\) into the covariance recursion; \(\Xi_k:=\widehat P_k-\widehat P_k^\star\) denotes the posterior covariance mismatch. The inherited mismatch \(\Xi_{k-1}\) generates opposing quartic contributions through innovation-covariance inflation (\(+\), orange) and local-gain reoptimization (\(-\), blue). The operators \(W_{k,T}^M\) combine fixed-reference propagation and task weighting over the remaining horizon. Summing the weighted residual-minus-drift contributions gives the exact finite-horizon response. The small-defect expansion has an absolute \(O_T(\delta^6)\) remainder, uniform over the prescribed bounded defect class at fixed horizon under matched initialization, with \(\delta=\max_{1\le j\le T}\|E_j\|_F\).}
  \label{fig:conceptual_response}
\end{figure*}

Computation inside recursive estimation has been studied from several perspectives. Stable Kalman implementations and filters designed under implementation uncertainty treat finite-precision and gain-perturbation effects \cite{verhaegen1986numerical,mahmoud2004resilient,deoliveira2005implementation}. Krylov methods have been applied directly to approximate Kalman computation \cite{BardsleyEtAl2013}, while large-scale and low-rank methods address related computational bottlenecks \cite{Freitag2020,LeProvost2022LowRankEnKF}; inexact Krylov theory analyzes approximate numerical operations within iterative solvers \cite{simoncini2003inexact}. Probabilistic numerical and computation-aware formulations instead incorporate incomplete computation into the inference uncertainty \cite{cockayne2019bayescg,pfortner2025compaware}.
Analytically, finite-horizon Kalman policy optimization gives exact performance-difference relations between gain sequences \cite{li2025policy}, Riccati perturbation theory characterizes covariance sensitivity \cite{sun1998sensitivity,sun1998perturbation,aalto2018spatial}, and goal-oriented error analysis relates numerical residuals to quantities of interest \cite{meidner2009goal,endtmayer2020twoside}. We build on these ingredients to separate the error of the current gain computation from gain drift inherited through the recursive covariance, to determine when the resulting locally recentered defect is repairable within a constrained correction class, and to characterize how covariance feedback modifies its finite-horizon response beyond leading quadratic order.

We work with a fixed linear model, matched initialization, and a deterministic, predeclared covariance schedule. The exact reference gain sequence is therefore measurement independent and can be computed offline. This controlled construction isolates computational approximation from model or covariance adaptation; it does not establish an online computational requirement for the reference estimator. State-dependent linearization or parameter updates \cite{fan2013ekfdse}, topology changes \cite{wang2018pmujacobian}, and data-dependent covariance adaptation \cite{akhlaghi2017adaptive} are the settings in which gains must be updated during operation. Extending the analysis to them requires relaxing the present assumptions.

We first derive an exact residual--drift representation of the finite-horizon task-weighted covariance response for a fixed model and reference estimator. Each computational defect is measured relative to the exact gain associated with the current implemented covariance, separating current solve error from gain drift induced by preceding covariance displacement. The resulting identity expresses the response through the residual of each implemented gain system and a nonnegative subtractive gain-drift contribution, which is the exact slack in the corresponding residual-based upper bound. On the model and objective class shared with Ref.~\cite{li2025policy}, this identity follows by algebraic recentering of the established finite-horizon performance-difference relation. At fixed current covariance, these coordinates isolate the solve error that additional iterations or a gain correction can modify from the gain drift inherited from earlier steps, and provide the starting point for the higher-order response analysis.

The distinction already matters in the simplest case. Consider, for example, a scalar random walk with unit observation coefficient and noisy measurements, with the implemented and reference filters starting from the same covariance. An inexact gain at the first update produces a larger posterior error variance than the exact reference gain would. The displaced covariance changes the exact gain associated with the implemented filter at the next update, so a gain difference can persist even when the current gain equation is solved exactly. This difference is inherited gain drift rather than unresolved current computation. At a fixed current covariance, by contrast, the gain-equation residual measures the local computational defect that additional solution or correction can reduce.

Beyond this exact identity, we derive the leading correction to the fixed-reference quadratic response arising from defect-induced covariance feedback. Under matched initialization, the covariance displacement is second order in the executed defects, so its feedback first contributes at quartic order. Two explicit nonnegative quartic functionals enter with opposite signs: innovation-covariance inflation contributes positively, whereas local-gain reoptimization contributes subtractively. Reoptimization here denotes the shift of the exact local gain as the covariance changes. The difference between these functionals gives the signed quartic correction, with an absolute sixth-order remainder bound uniform over the prescribed bounded defect class at fixed horizon. When the magnitude of the net quartic correction exceeds this bound, its sign determines whether the quadratic prediction underestimates or overestimates the exact response.

The computational defects that drive the finite-horizon response are also the quantities on which a local gain correction acts. At fixed current covariance, we establish an exact one-step repairability boundary for corrections restricted to a prescribed subspace and Frobenius-norm budget. Its necessary and sufficient condition determines whether the current computational defect can be brought within the prescribed local admissibility tolerance and distinguishes obstruction by unavailable correction directions from obstruction by insufficient correction budget. This boundary specifies what is achievable within the correction class, independently of how a candidate correction is constructed.

Machine learning has been used more broadly to learn or augment recursive estimators under model and statistical uncertainty \cite{revach2022kalmannet,mortada2025recursivekalmannet,Shlezinger2025AIAidedKF}. Learned numerical solvers also modify iterative updates while preserving convergence guarantees~\cite{hsieh2019learning} or accelerate linear solves through learned conjugate directions~\cite{kaneda2023deep}. Here the learned model has a narrower role: it uses numerical information from the approximate gain computation to propose a bounded correction within a prescribed correction subspace and norm budget. The target estimator and residual certificate are defined independently of the learner, and the corrected gain is evaluated by the same certificate as any other represented candidate. Geometric repairability, learned attainment, and residual acceptance are therefore distinct questions. Only gains that satisfy the execution criterion enter the recursion. The response analysis then quantifies the finite-horizon consequences of the resulting executed defects under the stated model assumptions.

Figure~\ref{fig:conceptual_response} summarizes this complete chain from local computation to recursive consequence. Figure~\ref{fig:conceptual_response}(a) separates repairability, learned correction, residual-based execution, and inherited gain drift, while Fig.~\ref{fig:conceptual_response}(b) shows how the resulting executed defects and covariance mismatch generate the exact finite-horizon response and its opposing quartic feedback mechanisms.

Quantum linear-system methods provide alternative mechanisms for generating approximate gain candidates \cite{Stoyanova2025Quantum,ganeshamurthy2024bridging,golestan2023quantum,feng2024noisy,shi2024quantum}. Coherent quantum linear-system algorithms encode solutions in quantum states \cite{harrow2009quantum,childs2017quantum}, while variational solvers \cite{bravo2023variational} and quadratic unconstrained binary optimization (QUBO) encodings for annealing architectures \cite{rogers2020floating} provide alternative approximate representations. In the present framework, such a method enters only after all required right-hand-side solutions have been reconstructed or decoded and assembled into an explicit gain whose represented residual can be recomputed. The resulting candidate is then subject to the same correction, certification, fallback, and finite-horizon response definitions as a classical candidate. The quantum routine is thus one component of an otherwise classical inference pipeline, consistent with the broader benchmarking principle that hybrid quantum--classical workflows should be assessed against explicit classical references \cite{ji2026quantumdeep}; no quantum-advantage claim is made. We demonstrate this interface at small scale with a variational quantum linear solver (VQLS), including terminal measurements on superconducting hardware, and with QUBO-encoded candidates sampled on a quantum annealer.

The principal contributions are threefold. First, we establish a finite-horizon response theory that separates current computational defect from inherited gain drift. The leading response is quadratic, while recursive covariance feedback first enters at quartic order through opposing innovation-covariance inflation and local-gain reoptimization, with an explicit sixth-order remainder that is uniform over bounded defect sequences at fixed horizon. Second, we characterize the exact local repairability boundary for a prescribed correction subspace and norm budget, separating subspace and budget obstructions and distinguishing correction existence from learned attainment and residual acceptance. Third, we define a learner independent execution interface based on residual certification and verified fallback, and quantify the resulting reduction in certified solver depth while assessing recursive response separately from local acceptance. Complete gains reconstructed from quantum solvers enter through the same interface, providing a concrete pathway for integrating quantum resources into certified recursive estimation.

The remainder of the paper is organized as follows.
Section~\ref{sec:inexa_gain_compu} defines the recursive-estimation and computational framework. Section~\ref{sec:finit_horiz_respon} develops the finite-horizon response theory, and Sec.~\ref{sec:ml_correc_and_local} establishes local repairability, machine-learning correction, and certification. Section~\ref{sec:results} presents the results and Sec.~\ref{sec:discu_and_conclu} concludes. Supporting derivations and numerical methods are given in the Appendixes.

\section{Inexact Gain Computation in Recursive State Estimation}
\label{sec:inexa_gain_compu}

\subsection{Reference and implemented recursions}
\label{subsec:refer_and_implem_recur}

We consider the finite-horizon linear-Gaussian model
\begin{equation}
\begin{aligned}
    x_k &= F x_{k-1} + B\bar u_{k-1} + B\xi_{k-1} + w_{k-1},\\
    z_k &= H x_k + v_k,
    \qquad k=1,\ldots,T,
    \label{eq:state_space_model}
\end{aligned}
\end{equation}
where \(x_k\in\R^n\) is the hidden state, \(z_k\in\R^m\) is the measurement, and \(F\in\R^{n\times n}\), \(H\in\R^{m\times n}\), and \(B\in\R^{n\times p}\) are the state-transition, observation, and input matrices, respectively. The known input is \(\bar u_{k-1}\in\R^p\), with \(n,m,p,T\in\mathbb N\). The input, process, and measurement perturbations are zero-mean Gaussian,
\[
\xi_{k-1}\sim\mathcal N(0,V_{k-1}),\;
w_{k-1}\sim\mathcal N(0,Q_{k-1}),\;
v_k\sim\mathcal N(0,R_k),
\]
with covariance matrices \(V_{k-1}\in\R^{p\times p}\), \(Q_{k-1}\in\R^{n\times n}\), and \(R_k\in\R^{m\times m}\) satisfying
\begin{equation}
V_{k-1}\succeq0,\quad
Q_{k-1}\succeq0,\quad
R_k\succeq\underline r I_m,\quad
\underline r>0.
\label{eq:noise_covariance_bounds}
\end{equation}
Here \(I_d\) denotes the \(d\times d\) identity matrix, and for symmetric matrices \(M\) and \(N\), \(M\succeq N\) means that \(M-N\) is positive semidefinite.
The initial state satisfies \(x_0\sim\mathcal N(\mu_0,P_0)\), with initial mean \(\mu_0\in\R^n\) and covariance \(P_0\in\R^{n\times n}\), \(P_0\succeq0\).
The initial state and noise variables are mutually independent, with each noise sequence independent across time. The input sequence \(\bar u_{0:T-1}\) and covariance schedule \((V_{0:T-1},Q_{0:T-1},R_{1:T})\) are deterministic and predeclared, although the covariances may vary with \(k\).

For any covariance matrix \(P\succeq0\) and gain
\(K\in\R^{n\times m}\), define the Joseph covariance map
\begin{equation}
\mathcal{J}_k(P,K)
:=
(I_n-KH)P(I_n-KH)^{\mathsf T}
+KR_kK^{\mathsf T},
\label{eq:exact_cov}
\end{equation}
where \((\cdot)^{\mathsf T}\) denotes matrix transpose.
The reference recursion is the correctly specified conditional-mean Kalman estimator. Its prediction and update are
\begin{equation}
\begin{aligned}
\widetilde x_k^\star
&= F\widehat x_{k-1}^\star+B\bar u_{k-1},
\\
\widetilde P_k^\star
&= F\widehat P_{k-1}^\star F^{\mathsf T}
+BV_{k-1}B^{\mathsf T}+Q_{k-1},
\\
S_k^\star
&= H\widetilde P_k^\star H^{\mathsf T}+R_k,
\\
K_k^\star
&= \widetilde P_k^\star H^{\mathsf T}(S_k^\star)^{-1},
\\
r_k^\star&=z_k-H\widetilde x_k^\star,\\
\widehat x_k^\star
&= \widetilde x_k^\star
+K_k^\star r_k^\star,
\\
\widehat P_k^\star
&= \mathcal{J}_k(\widetilde P_k^\star,K_k^\star).
\end{aligned}
\label{eq:reference_recursion}
\end{equation}
As noted in Sec.~\ref{sec:intro}, the reference covariance and gain sequences do not depend on the realized measurements in this setting.
A tilde denotes a prediction quantity, a hat denotes a posterior
quantity, and \(\star\) denotes the exact reference recursion. Thus \(r_k^\star\), \(S_k^\star\), and \(K_k^\star\) are the reference innovation, innovation covariance, and Kalman gain, respectively. Because \(S_k^\star\succeq R_k\succeq\underline r I_m\), \(K_k^\star\) is well defined. Under the stated linear-Gaussian assumptions and matched initialization, \(\widehat x_k^\star=\E[x_k\mid z_1,\ldots,z_k]\) is the conditional-mean estimate and \(\widehat P_k^\star=\Cov(x_k\mid z_1,\ldots,z_k)\) its posterior error covariance.
Table~\ref{tab:notation_kalman_filter} in Appendix~\ref{app:notation} summarizes the notation; Appendix~\ref{app:joseph_identities} collects the Joseph-update identities.

The implemented estimator propagates the same physical model but uses the gain actually executed at each step. Let \(\widetilde K_k\in\R^{n\times m}\) denote this executed gain; its construction is specified in Sec.~\ref{subsec:gain_hiera}. The implemented recursion is
\begin{equation}
\begin{aligned}
\widetilde x_k
&=F\widehat x_{k-1}+B\bar u_{k-1},
\\
\widetilde P_k
&=F\widehat P_{k-1}F^{\mathsf T}
+BV_{k-1}B^{\mathsf T}+Q_{k-1},
\\
S_k
&=H\widetilde P_kH^{\mathsf T}+R_k,
\\
r_k&=z_k-H\widetilde x_k,
\\
\widehat x_k
&=\widetilde x_k+\widetilde K_k r_k,
\\
\widehat P_k
&=\mathcal{J}_k(\widetilde P_k,\widetilde K_k).
\end{aligned}
\label{eq:executed_recursion}
\end{equation}
Here \((\widetilde x_k,\widetilde P_k)\) is the implemented prediction pair, \(S_k\) is the implemented innovation covariance, and \(r_k\) is the implemented innovation. The same executed gain is used in the state and Joseph covariance updates and therefore determines the covariance entering the next prediction.
The Joseph update preserves positive semidefiniteness for any executed gain. Hence \(P_0\succeq0\) implies
\[
\widetilde P_k,\widehat P_k\succeq0,
\qquad
S_k\succeq R_k\succeq\underline r I_m\succ0.
\]
Thus the implemented innovation system is nonsingular at every step.

\begin{assumption}[Matched initialization]
\label{ass:matched_initialization}
At the beginning of the analyzed horizon, the reference and implemented estimators have matched state estimates and covariances,
\[
\widehat x_0=\widehat x_0^\star=\mu_0,
\qquad
\widehat P_0=\widehat P_0^\star=P_0.
\]
\end{assumption}

Matched initialization isolates the finite-horizon response generated by the executed computational defects. Nonmatched initialization introduces an additional initial-response contribution and is outside the defect-only result considered here.

\subsection{Gain hierarchy and computational defects}
\label{subsec:gain_hiera}

The executed gain acts on the implemented covariance, which may already differ from the reference covariance. To separate current solve error from this inherited covariance effect, define the exact \emph{local} Kalman gain \(K_k^{\mathrm{loc}}\in\R^{n\times m}\) associated with \(\widetilde P_k\) by
\begin{equation}
    K_k^{\mathrm{loc}} := \widetilde{P}_k H^{\mathsf T} S_k^{-1}.
    \label{eq:local_gain}
\end{equation}
Here ``local" denotes the exact solution of the current implemented gain system with \(\widetilde P_k\) fixed. It is a mathematical comparator and need not be evaluated during deployment. The reference gain \(K_k^\star\) is instead associated with \(\widetilde P_k^\star\). Thus \(K_k^{\mathrm{loc}}=K_k^\star\) when the prediction covariances coincide; otherwise their difference represents gain drift inherited from earlier covariance mismatch.

At step \(k\), the approximate solver returns \(K_k^{\mathrm{alg}}\) and the learned corrector proposes a bounded correction \(\Delta_k^{\mathrm{ml}}\), giving
\[
K_k^{\mathrm{cand}}
:=
K_k^{\mathrm{alg}}+\Delta_k^{\mathrm{ml}}.
\]
Setting \(\Delta_k^{\mathrm{ml}}=0\) recovers the uncorrected case.
The candidate and, if needed, the fallback are evaluated under the same residual certificate. The gain actually propagated is denoted \(\widetilde K_k\).
The computational path is therefore
\[
K_k^{\mathrm{alg}}
\xrightarrow{\ +\Delta_k^{\mathrm{ml}}\ }
K_k^{\mathrm{cand}}
\xrightarrow{\ \text{certificate/fallback}\ }
\widetilde K_k.
\]

We distinguish
\begin{equation}
\begin{aligned}
A_k
&:=
K_k^{\mathrm{alg}}-K_k^{\mathrm{loc}},\\
E_k
&:=
\widetilde K_k-K_k^{\mathrm{loc}},
\\
\Gamma_k
&:=
K_k^{\mathrm{loc}}-K_k^\star.
\end{aligned}
\label{eq:gain_error_definitions}
\end{equation}
Thus \(A_k\) is the raw current-step defect, \(E_k\) the defect actually executed, and \(\Gamma_k\) the gain drift inherited from preceding covariance mismatch. Consequently, \(\widetilde K_k-K_k^\star=E_k+\Gamma_k\), and, when the corrected candidate is executed, \(E_k=A_k+\Delta_k^{\mathrm{ml}}.\) The correction acts on the current computational defect, whereas the recursive response depends on the executed defect and inherited drift.

\subsection{Solver residual interface and local admissibility}
\label{subsec:solver_residual_interface}

For any gain \(L\in\R^{n\times m}\), define the residual of the current implemented gain system by
\begin{equation}
\rho_k(L)
:=
LS_k-\widetilde P_kH^{\mathsf T}.
\label{eq:matrix-residual}
\end{equation}
Because \(K_k^{\rm loc}S_k = \widetilde P_kH^{\mathsf T},\) the residual satisfies \(\rho_k(L) = (L-K_k^{\rm loc})S_k.\)
Since \(S_k\succ0\), \(L-K_k^{\rm loc} = \rho_k(L)S_k^{-1}\), so the residual exactly determines the local gain defect for the current gain system.
For the raw and executed gains,
\[
\begin{aligned}
\rho_k^{\rm alg}
&:=\rho_k(K_k^{\rm alg})
=A_kS_k,
\\
\rho_k^{\rm exec}
&:=\rho_k(\widetilde K_k)
=E_kS_k.
\end{aligned}
\]

Fix a local gain-defect tolerance \(\delta_{\rm adm}>0\). A gain \(L\) is locally admissible at step \(k\) when
\[
\|L-K_k^{\rm loc}\|_F\le\delta_{\rm adm}.
\]
Let $\ell_k$ be a positive lower bound on the smallest singular value of $S_k$,
\begin{equation}
0<\ell_k\le\sigma_{\min}(S_k),
\label{eq:certified_innovation_lower_bound}
\end{equation}
with \(\ell_k=\underline r\) as a valid uniform choice. Then
\[
\|L-K_k^{\rm loc}\|_F
\le
\frac{\|\rho_k(L)\|_F}{\sigma_{\min}(S_k)}
\le
\frac{\|\rho_k(L)\|_F}{\ell_k},
\]
so
\begin{equation}
\|\rho_k(L)\|_F
\le
\ell_k\delta_{\mathrm{adm}}
\label{eq:accept_rule}
\end{equation}
is sufficient for local admissibility. The condition is not necessary: failure of Eq.~\eqref{eq:accept_rule} does not imply that \(L\) is inadmissible. For \(L=\widetilde K_k\), satisfaction of the criterion guarantees \(\|E_k\|_F\le\delta_{\rm adm}.\)

Numerical candidates are assembled from approximate solutions of the transposed gain systems
\begin{equation}
S_k x_k^{(j)}=[H\widetilde P_k]_{:,j},
\qquad j=1,\ldots,n,
\label{eq:column_gain_systems}
\end{equation}
whose exact solutions form the columns of \((K_k^{\rm loc})^{\mathsf T}\). Our primary classical realization applies zero-start conjugate gradient (CG) independently to these systems, using the iteration count \(t\) to vary the computational effort and approximation depth~\cite{hestenes1952methods,greenbaum1997iterative}. Figure~\ref{fig:local_solver_landscape} in the Appendix illustrates how the local gain defect varies with CG iteration count and innovation dimension. The returned vectors are assembled into \(K_k^{\rm alg}\), and its matrix residual is recomputed explicitly before correction and certification. Implementation details are given in Appendix~\ref{app:classi_solve_implem}.

The framework itself does not depend on CG: any candidate-generation method can enter the same execution chain provided that its outputs can be assembled into an explicit represented gain \(K_k^{\rm alg}\) for which the residual in Eq.~\eqref{eq:matrix-residual} can be evaluated.

\section{Finite-Horizon Response to Inexact Gain Computation}
\label{sec:finit_horiz_respon}

\subsection{Covariance mismatch and residual-drift identity}
\label{subsec:covari_misma_and_task_respo}

Define the covariance mismatch
\begin{equation}
\Xi_k:=\widehat P_k-\widehat P_k^\star,
\qquad
k=0,\ldots,T,
\label{eq:cov_error_def}
\end{equation}
so that \(\Xi_0=0\) under Assumption~\ref{ass:matched_initialization}, and the reference and local closed-loop matrices
\[
\Phi_k^\star:=(I-K_k^\star H)F,
\qquad
\Phi_k^{\rm loc}:=(I-K_k^{\rm loc}H)F.
\]
Since
\[
\widetilde P_k-\widetilde P_k^\star
=
F\Xi_{k-1}F^{\mathsf T},
\quad
S_k-S_k^\star
=
(HF)\Xi_{k-1}(HF)^{\mathsf T},
\]
the inherited gain drift satisfies the exact identity
\begin{equation}
\Gamma_k
=
\Phi_k^\star\Xi_{k-1}(HF)^{\mathsf T}S_k^{-1}.
\label{eq:gain_drift_exact}
\end{equation}
A derivation is given in Appendix~\ref{app:covariance_mismatch_and_gain_drift}.

\begin{proposition}[Exact covariance-mismatch recursions] \label{prop:exact_covariance_recursions}
Under the standing model assumptions,
\begin{align}
\Xi_k
={}&\Phi_k^{\mathrm{loc}}\Xi_{k-1}
(\Phi_k^{\mathrm{loc}})^{\mathsf T}
+E_kS_kE_k^{\mathsf T}
+\Gamma_kS_k^\star\Gamma_k^{\mathsf T}
\nonumber\\
={}&\Phi_k^\star\Xi_{k-1}(\Phi_k^\star)^{\mathsf T}
+E_kS_kE_k^{\mathsf T}
-\Gamma_kS_k\Gamma_k^{\mathsf T}.
\label{eq:covariance_mismatch_recursions}
\end{align}
If $\Xi_0\succeq0$, then $\Xi_k\succeq0$ for \(k=0,\ldots,T\).
\end{proposition}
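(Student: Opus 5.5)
The plan rests on the standard Joseph-form ``completion of squares'' identity. For any covariance \(P\succeq0\), with \(S=HPH^{\mathsf T}+R_k\) and \(K^{\rm opt}=PH^{\mathsf T}S^{-1}\), we have
\[
\mathcal J_k(P,K)=\mathcal J_k(P,K^{\rm opt})+(K-K^{\rm opt})S(K-K^{\rm opt})^{\mathsf T}.
\]
Here \(\mathcal J_k(P,K^{\rm opt})=P-K^{\rm opt}SK^{{\rm opt}\mathsf T}\). This identity is collected in Appendix~\ref{app:joseph_identities}, and I will take it as given. Applying it at the implemented covariance \(\widetilde P_k\) with \(K=\widetilde K_k\) gives
\[
\widehat P_k=\mathcal J_k(\widetilde P_k,K_k^{\rm loc})+E_kS_kE_k^{\mathsf T}.
\]
The proof then reduces to comparing \(\mathcal J_k(\widetilde P_k,K_k^{\rm loc})\) with \(\widehat P_k^\star=\mathcal J_k(\widetilde P_k^\star,K_k^\star)\). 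Both recursions in Eq.~\eqref{eq:covariance_mismatch_recursions} will come from expanding this difference in two ways.

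\textbf{First form.} Apply the completion identity again, this time at the reference covariance \(\widetilde P_k^\star\) with gain \(K_k^{\rm loc}\). Since \(K_k^{\rm loc}-K_k^\star=\Gamma_k\), this gives
\[
\mathcal J_k(\widetilde P_k^\star,K_k^{\rm loc})=\widehat P_k^\star+\Gamma_kS_k^\star\Gamma_k^{\mathsf T}.
\]
The Joseph map is affine in \(P\) for fixed \(K\), and \((I-K_k^{\rm loc}H)(\widetilde P_k-\widetilde P_k^\star)(I-K_k^{\rm loc}H)^{\mathsf T}=\Phi_k^{\rm loc}\Xi_{k-1}(\Phi_k^{\rm loc})^{\mathsf T}\) by the stated relation \(\widetilde P_k-\widetilde P_k^\star=F\Xi_{k-1}F^{\mathsf T}\). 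Hence
\[
\mathcal J_k(\widetilde P_k,K_k^{\rm loc})-\mathcal J_k(\widetilde P_k^\star,K_k^{\rm loc})=\Phi_k^{\rm loc}\Xi_{k-1}(\Phi_k^{\rm loc})^{\mathsf T}.
\]
Summing the three pieces yields the first line of Eq.~\eqref{eq:covariance_mismatch_recursions}.

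\textbf{Second form.} Swap the roles and use \(K_k^\star\) as the common gain. The affine step now gives
\[
\mathcal J_k(\widetilde P_k,K_k^\star)-\widehat P_k^\star=\Phi_k^\star\Xi_{k-1}(\Phi_k^\star)^{\mathsf T}.
\]
The completion identity at \(\widetilde P_k\) with gain \(K_k^\star=K_k^{\rm loc}-\Gamma_k\) gives
\[
\mathcal J_k(\widetilde P_k,K_k^\star)=\mathcal J_k(\widetilde P_k,K_k^{\rm loc})+\Gamma_kS_k\Gamma_k^{\mathsf T}.
\]
Subtracting, and adding \(E_kS_kE_k^{\mathsf T}\), gives the second line with the minus sign on \(\Gamma_kS_k\Gamma_k^{\mathsf T}\).

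I expect the main obstacle to be purely bookkeeping: tracking which innovation covariance (\(S_k\) or \(S_k^\star\)) appears in each completed square. The rule is that it is always the innovation covariance belonging to the covariance at which the square is completed, so I will state the completion identity generically once and instantiate it carefully. As a consistency check, equating the two forms yields an identity that should follow from Eq.~\eqref{eq:gain_drift_exact}, though this check is not needed for the proof.

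\textbf{Positivity.} This follows by induction from the first form. If \(\Xi_{k-1}\succeq0\), then \(\Phi_k^{\rm loc}\Xi_{k-1}(\Phi_k^{\rm loc})^{\mathsf T}\succeq0\) by congruence. The terms \(E_kS_kE_k^{\mathsf T}\) and \(\Gamma_kS_k^\star\Gamma_k^{\mathsf T}\) are also positive semidefinite, because \(S_k,S_k^\star\succ0\). Starting from \(\Xi_0\succeq0\), this gives \(\Xi_k\succeq0\) for all \(k=0,\ldots,T\).
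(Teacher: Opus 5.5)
Your proposal is correct and follows essentially the same route as the paper's proof. For the local-centered form, both use the three-way split through $K_k^{\mathrm{loc}}$: Joseph completion at $\widetilde P_k$, the fixed-gain affine difference, and completion at $\widetilde P_k^\star$. For the reference-centered form, both add and subtract $\mathcal J_k(\widetilde P_k,K_k^\star)$, and both obtain positive semidefiniteness by induction from the local-centered form.
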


\begin{proof}
See Appendix~\ref{app:covariance_mismatch_and_gain_drift}.
\end{proof}

The first identity in Eq.~\eqref{eq:covariance_mismatch_recursions} is the local-centered form; the second is the reference-centered form.
The local-centered form makes positive semidefiniteness explicit, whereas the reference-centered form isolates the subtractive gain-drift term used in the finite-horizon accounting below.

For fixed task metrics \(M_k\in\R^{n\times n}\), \(M_k\succeq0\), define \(\|y\|_{M_k}^2:=y^{\mathsf T}M_ky\) and the finite-horizon task response
\begin{equation}
\mathcal R_T^M := \sum_{k=1}^{T}\tr(M_k\Xi_k).
\label{eq:task_risk_definition}
\end{equation}
Thus \(\mathcal R_T^M\ge0\). Under Proposition~\ref{prop:statistical_identification}, it equals the conditional excess estimation-error risk.

For \(j\le t\), define the reference propagator
\begin{equation}
\Psi_{t,j}^\star
:=
\begin{cases}
I_n, & t=j,\\
\Phi_t^\star\Phi_{t-1}^\star\cdots\Phi_{j+1}^\star, & t>j,
\end{cases}
\label{eq:reference_propagator}
\end{equation}
and the finite-horizon task-response operator
\begin{equation}
W_{j,T}^M
:=
\sum_{t=j}^{T}
(\Psi_{t,j}^\star)^{\mathsf T}
M_t
\Psi_{t,j}^\star
\succeq0.
\label{eq:task_response_operator}
\end{equation}
A covariance injection \(X_j\succeq0\) at step \(j\) therefore contributes
\(\tr(W_{j,T}^M X_j)\) to the accumulated reference-propagated response over \(j{:}T\).

Unrolling the reference-centered covariance recursion gives the exact decomposition
\begin{equation}
\mathcal R_T^M
=
\mathcal Q_T^{\rm res}
-
\mathcal Q_T^{\rm drift},
\label{eq:exact_residual_risk_accounting}
\end{equation}
where
\begin{align}
\mathcal Q_T^{\rm res}
&:=
\sum_{j=1}^{T}\tr\!\left[
W_{j,T}^M E_jS_jE_j^{\mathsf T}\right] \notag\\
&{}\;=
\sum_{j=1}^{T}
\operatorname{tr}\!\left[
W_{j,T}^M
\rho_j^{\rm exec}S_j^{-1}
(\rho_j^{\rm exec})^{\mathsf T}
\right],
\label{eq:residual_risk_contribution}
\\
\mathcal Q_T^{\rm drift}
&:=
\sum_{j=1}^{T}
\operatorname{tr}\!\left[
W_{j,T}^M
\Gamma_jS_j\Gamma_j^{\mathsf T}
\right]
\ge0.
\label{eq:gain_drift_correction}
\end{align}
Consequently,
\begin{equation}
0\le\mathcal R_T^M\le\mathcal Q_T^{\rm res},
\qquad
\mathcal Q_T^{\rm res}-\mathcal R_T^M
=
\mathcal Q_T^{\rm drift}.
\label{eq:residual_risk_slack}
\end{equation}
Thus the residual contribution is an exact upper bound on the finite-horizon response, with slack given by the recursively induced gain-drift contribution. The derivation is given in Appendix~\ref{app:exact_residual_accounting}. This recentered identity supplies the accounting coordinate; the additional result below is the explicit expansion of the endogenous matrices \(S_k\) and \(\Gamma_k\) generated by covariance feedback.

\subsection{Signed-quartic finite-horizon response}
\label{subsec:quartic_finit_horiz_respon}

Fix a finite horizon $T$ and define
\begin{equation}
\delta:=\max_{1\le k\le T}\|E_k\|_F.
\label{eq:finite_horizon_defect_amplitude}
\end{equation}
The fixed-reference quadratic form associated with the executed local defects is
\begin{equation}
\mathcal B_T^M(E)
:=
\sum_{k=1}^{T}
\tr\!\left[
W_{k,T}^M E_kS_k^\star E_k^{\mathsf T}
\right].
\label{eq:fixed_reference_quadratic_form}
\end{equation}
To isolate the leading covariance response, define
\begin{equation}
\Omega_0:=0,
\;
\Omega_k
:=
\Phi_k^\star\Omega_{k-1}(\Phi_k^\star)^{\mathsf T}
+
E_kS_k^\star E_k^{\mathsf T},
\;
1\le k\le T.
\label{eq:leading_covariance_response}
\end{equation}
The matrices \(\Omega_k\) are positive semidefinite and homogeneous quadratic in the prescribed defect sequence \(E_{1:k}\).

Define the quartic innovation-inflation and local-gain-reoptimization functionals
\begin{align}
\mathcal C_T^{M,\mathrm{infl}}(E)
&:=
\sum_{k=1}^{T}
\tr\!\left[
W_{k,T}^M
E_kHF\,\Omega_{k-1}(HF)^{\mathsf T}E_k^{\mathsf T}
\right],
\label{eq:quartic_inflation_response}
\\
\mathcal C_T^{M,\mathrm{reopt}}(E)
&:=
\sum_{k=1}^{T}
\tr\!\left[
W_{k,T}^M
\Phi_k^\star\Omega_{k-1}(HF)^{\mathsf T}
(S_k^\star)^{-1}
\right.
\notag\\
&\qquad\left.
{}\times HF\,\Omega_{k-1}(\Phi_k^\star)^{\mathsf T}
\right].
\label{eq:quartic_reoptimization_response}
\end{align}
Both functionals are nonnegative and homogeneous quartic in \(E_{1:T}\). Here, a prescribed intrinsic defect sequence \(E_{1:T}\) denotes an argument of the deterministic finite-horizon response map and does not imply advance knowledge of future execution defects by the online estimator.

\begin{figure*}[t]
  \centering
  \includegraphics[width=\textwidth]{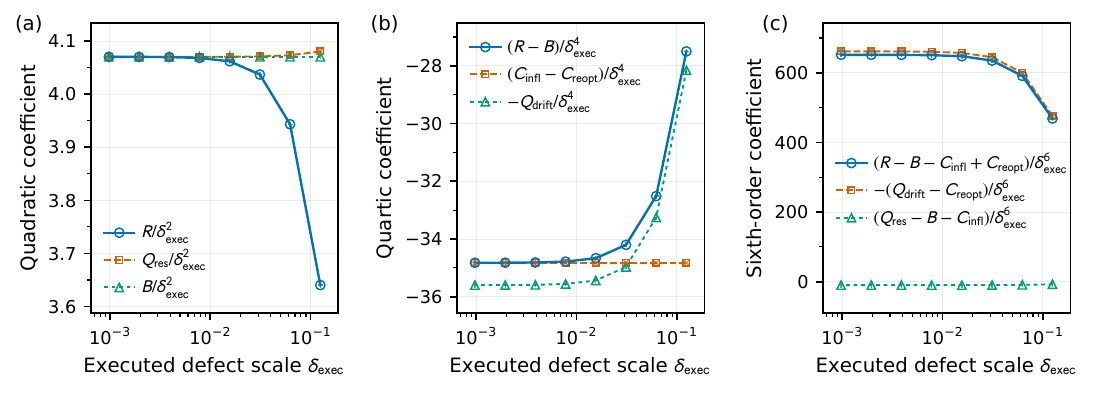}
  \caption{
  Signed-quartic response and sixth-order remainder for one synthetic linear-Gaussian instance constructed from the network geometry of the pandapower IEEE 14-bus case~\cite{pandapower} (IEEE 14; Sec.~\ref{subsec:benchmark}) with \(n=27\), \(m=64\), \(T=100\), and \(M_k=I\). The reference and implemented recursions are initialized identically, and a single prescribed defect-direction sequence is used to construct gains at eight amplitudes from \(2^{-10}\) to \(2^{-3}\), where \(\delta_{\rm exec}:=\max_k\|E_k\|_F\) is the represented executed-defect amplitude and corresponds to \(\delta\) in Eq.~\eqref{eq:finite_horizon_defect_amplitude}. In the figure, \(R,B,Q_{\rm res},Q_{\rm drift},C_{\rm infl},C_{\rm reopt}\) denote \(\mathcal R_T^M,\mathcal B_T^M,\mathcal Q_T^{\rm res},\mathcal Q_T^{\rm drift},\mathcal C_T^{M,\mathrm{infl}},\mathcal C_T^{M,\mathrm{reopt}}\), respectively. (a) Quadratic effect. (b) Quartic effect. (c) Sixth-order remainder and its residual and drift components.
  }
  \label{fig:signed_quartic}
\end{figure*}

\begin{theorem}[Signed-quartic finite-horizon response]
\label{thm:quartic_finit_horiz_respon}
Suppose the deterministic model and covariance conditions of Sec.~\ref{subsec:refer_and_implem_recur} and the matched initialization in Assumption~\ref{ass:matched_initialization} hold. Fix $T<\infty$, the reference schedule, task metrics \(M_{1:T}\) with \(M_k\succeq0\), and a finite cap \(\bar\delta<\infty\). Uniformly over all prescribed intrinsic defect sequences satisfying
\[
\delta=\max_{1\le k\le T}\|E_k\|_F\le\bar\delta,
\]
the covariance mismatch and induced gain drift satisfy
\begin{equation}
\max_{1\le k\le T}\|\Xi_k\|_F=O_T(\delta^2),
\qquad
\max_{1\le k\le T}\|\Gamma_k\|_F=O_T(\delta^2).
\label{eq:second_order_covariance_gain_drift}
\end{equation}
Moreover,
\begin{align}
\mathcal Q_T^{\rm res}
&=
\mathcal B_T^M(E)
+
\mathcal C_T^{M,\mathrm{infl}}(E)
+
O_T(\delta^6),
\label{eq:residual_quartic_expansion}
\\
\mathcal Q_T^{\rm drift}
&=
\mathcal C_T^{M,\mathrm{reopt}}(E)
+
O_T(\delta^6).
\label{eq:drift_quartic_expansion}
\end{align}
Consequently,
\begin{equation}
\mathcal R_T^M
=
\mathcal B_T^M(E)
+
\mathcal C_T^{M,\mathrm{infl}}(E)
-
\mathcal C_T^{M,\mathrm{reopt}}(E)
+
O_T(\delta^6).
\label{eq:signed_quartic_response}
\end{equation}
More precisely, there exists a finite constant \(C_{6,T}^M(\bar\delta)\), independent of the defect directions and of \(\delta\) within the prescribed cap, such that
\begin{equation}
\left|
\mathcal R_T^M
-
\mathcal B_T^M(E)
-
\mathcal C_T^{M,\mathrm{infl}}(E)
+
\mathcal C_T^{M,\mathrm{reopt}}(E)
\right|
\le
C_{6,T}^M(\bar\delta)\delta^6.
\label{eq:quartic_uniform_remainder}
\end{equation}
\end{theorem}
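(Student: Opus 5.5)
The plan is to work with the reference-centered recursion of Proposition~\ref{prop:exact_covariance_recursions} together with the exact drift identity \eqref{eq:gain_drift_exact}, and to treat everything as a perturbation of the fixed reference quantities \(\Phi_k^\star\), \(S_k^\star\). Two exact relations do most of the work:
\[
S_k-S_k^\star=(HF)\Xi_{k-1}(HF)^{\mathsf T}
\]
and \(\Gamma_k=\Phi_k^\star\Xi_{k-1}(HF)^{\mathsf T}S_k^{-1}\). Since \(S_k\succeq\underline r I_m\), we have \(\|S_k^{-1}\|\le 1/\underline r\) for every executed defect sequence. This uniform inverse bound is what makes all constants depend only on \(T\), the schedule, \(M_{1:T}\), and \(\bar\delta\), and not on the defect directions.

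\textbf{Step 1: second-order bounds \eqref{eq:second_order_covariance_gain_drift}.} I will prove these by induction on \(k\), using the local-centered form of Proposition~\ref{prop:exact_covariance_recursions}, which is a sum of PSD terms. It gives
\[
\|\Xi_k\|_F\le\|\Phi_k^{\rm loc}\|^2\|\Xi_{k-1}\|_F+\|S_k\|\delta^2+\|\Gamma_k\|_F^2\|S_k^\star\|.
\]
The difficulty is that \(\Phi_k^{\rm loc}\), \(S_k\) and \(\Gamma_k\) themselves depend on \(\Xi_{k-1}\). The cap \(\bar\delta\) resolves this: inductively \(\|\Xi_{k-1}\|_F\le c_{k-1}\bar\delta^2\) is bounded, so \(\|S_k\|\), \(\|K_k^{\rm loc}\|\) and \(\|\Gamma_k\|\) are bounded by constants depending on \(\bar\delta\).

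This gives \(\|\Xi_k\|_F\le c_k\delta^2\) with \(c_k=c_k(\bar\delta)\), because the only term not already proportional to \(\|\Xi_{k-1}\|\) is \(E_kS_kE_k^{\mathsf T}\). Alternatively one can use the reference-centered form, dropping the subtractive PSD term to get an upper bound in the Loewner order; since \(\Xi_k\succeq0\), this controls the norm. The drift identity then yields \(\|\Gamma_k\|_F\le\|\Phi_k^\star\|\|HF\|^2\underline r^{-1}c_{k-1}\delta^2\).

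\textbf{Step 2: \(\Xi_k=\Omega_k+O_T(\delta^4)\).} Subtract the recursion \eqref{eq:leading_covariance_response} from the reference-centered form. The difference \(D_k:=\Xi_k-\Omega_k\) satisfies
\[
D_k=\Phi_k^\star D_{k-1}(\Phi_k^\star)^{\mathsf T}+E_k(S_k-S_k^\star)E_k^{\mathsf T}-\Gamma_kS_k\Gamma_k^{\mathsf T}.
\]
The two forcing terms are \(O(\delta^2\cdot\delta^2)\) and \(O(\delta^4)\) by Step~1, so \(\|D_k\|_F\le d_k\delta^4\) uniformly, with \(D_0=0\) under matched initialization.

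\textbf{Step 3: expand \(\mathcal Q_T^{\rm res}\).} Write \(E_kS_kE_k^{\mathsf T}=E_kS_k^\star E_k^{\mathsf T}+E_kHF\,\Xi_{k-1}(HF)^{\mathsf T}E_k^{\mathsf T}\). Replacing \(\Xi_{k-1}\) by \(\Omega_{k-1}+D_{k-1}\) gives \(\mathcal B_T^M+\mathcal C_T^{M,\rm infl}\) plus an error bounded by \(\|W_{k,T}^M\|\,\|HF\|^2\delta^2 d_{k-1}\delta^4\). This error is \(O_T(\delta^6)\), and it is exact with no further expansion needed.

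\textbf{Step 4: expand \(\mathcal Q_T^{\rm drift}\).} Using \(\Gamma_kS_k\Gamma_k^{\mathsf T}=\Phi_k^\star\Xi_{k-1}(HF)^{\mathsf T}S_k^{-1}HF\,\Xi_{k-1}(\Phi_k^\star)^{\mathsf T}\), I replace \(S_k^{-1}\) by \((S_k^\star)^{-1}\) via the resolvent identity
\[
S_k^{-1}-(S_k^\star)^{-1}=-S_k^{-1}(HF)\Xi_{k-1}(HF)^{\mathsf T}(S_k^\star)^{-1}.
\]
This difference has norm at most \(\underline r^{-2}\|HF\|^2c_{k-1}\delta^2\) and contributes \(O(\delta^2\cdot\delta^2\cdot\delta^2)\). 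Then I replace each \(\Xi_{k-1}\) by \(\Omega_{k-1}+D_{k-1}\). The cross terms are \(O(\delta^2\cdot\delta^4)\) and the \(D\)-\(D\) term is \(O(\delta^8)\le\bar\delta^2 O(\delta^6)\). What remains is \(\mathcal C_T^{M,\rm reopt}\).

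\textbf{Conclusion.} Subtracting Step~4 from Step~3 via the exact decomposition \eqref{eq:exact_residual_risk_accounting} gives \eqref{eq:signed_quartic_response}. The constant \(C_{6,T}^M(\bar\delta)\) is the explicit sum of the constants accumulated above.

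The main obstacle is Step~1's uniformity. The closed-loop and innovation matrices depend on the unknown \(\Xi\), so one must make sure every constant is controlled by \(\bar\delta\), \(\underline r\), \(T\) and the fixed schedule alone. I would handle this by carrying the explicit induction hypothesis \(\|\Xi_{k-1}\|_F\le c_{k-1}\delta^2\le c_{k-1}\bar\delta^2\) and bounding \(\|S_k\|\le\|S_k^\star\|+\|HF\|^2c_{k-1}\bar\delta^2\). This is legitimate because \(T\) is finite, so the constants may grow with \(k\).
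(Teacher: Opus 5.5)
Your proposal is correct and follows essentially the same route as the paper: second-order bounds by induction under the cap $\bar\delta$, the exact recursion for $\Xi_k-\Omega_k$ forced by $E_kHF\,\Xi_{k-1}(HF)^{\mathsf T}E_k^{\mathsf T}-\Gamma_kS_k\Gamma_k^{\mathsf T}$, the exact split of $\mathcal Q_T^{\rm res}$, and the resolvent identity combined with the substitution $\Xi_{k-1}=\Omega_{k-1}+D_{k-1}$ in the drift product. The paper's Step~1 uses your ``alternative'', the trace of the reference-centered recursion, rather than the local-centered form. One small slip: the drift identity gives $\|\Gamma_k\|_F\le\|\Phi_k^\star\|_2\|HF\|_2\underline r^{-1}\|\Xi_{k-1}\|_F$, with a single factor of $\|HF\|_2$ rather than $\|HF\|_2^2$; this does not affect the orders.
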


\begin{proof}
See Appendix~\ref{app:proof_of_quart_finit_horiz}.
\end{proof}

The symbol \(O_T(\delta^j)\) in Eqs.~\eqref{eq:second_order_covariance_gain_drift}--\eqref{eq:signed_quartic_response} means that the implied constant may depend on \(T\), the fixed model matrices, the reference schedule, the task metrics \(M_{1:T}\), the floor \(\underline r\), and the cap \(\bar\delta\), but not on the defect sequence within that cap. Equation~\eqref{eq:quartic_uniform_remainder} makes the same convention explicit for \(\mathcal R_T^M\). In particular, when the magnitude of the net signed quartic correction exceeds the sixth-order remainder bound, its sign determines whether the quadratic approximation under- or overpredicts the exact response; if it exceeds twice that bound, including the quartic correction is guaranteed to reduce the absolute prediction error.

Theorem~\ref{thm:quartic_finit_horiz_respon} also gives the quadratic normal form
\begin{equation}
\mathcal R_T^M
=
\mathcal B_T^M(E)
+
O_T(\delta^4),
\label{eq:fourth_order_risk_normal_form}
\end{equation}
which is a uniform absolute finite-horizon expansion on the bounded defect tube. Because \(\mathcal B_T^M\) may vanish, this does not imply uniform relative accuracy over the full defect class. Task-null directions, relative accuracy of the quadratic normal form, and the symmetry and analyticity properties of the response map are discussed in Appendix~\ref{app:consequences_of_the_signed}.

The local centering at $K_k^{\mathrm{loc}}$ exposes the recursive order structure. The fixed-reference response \(\mathcal B_T^M\) is quadratic in the executed defects, while previous defects displace the implemented covariance only at second order. Recursive feedback can therefore first contribute at quartic order. This is the structure of a perturbative response expansion. \(\mathcal B_T^M\) propagates each defect through the unperturbed reference loop, while the quartic terms describe the leading feedback from defect-induced covariance displacement. Innovation inflation couples earlier displacement to later defects, whereas local-gain reoptimization contributes even after a single isolated defect. The quartic correction consists of two competing nonnegative mechanisms: \(\mathcal C_T^{M,\mathrm{infl}}\) captures the increased residual-side exposure caused by innovation covariance inflation, whereas \(\mathcal C_T^{M,\mathrm{reopt}}\) captures the response removed as the exact local gain adapts to the displaced covariance. Their difference is not sign-definite in general.

Figure~\ref{fig:signed_quartic} displays the asymptotic hierarchy predicted by Theorem~\ref{thm:quartic_finit_horiz_respon}. In Fig.~\ref{fig:signed_quartic}(a), the normalized finite-horizon response approaches the fixed-reference quadratic contribution as the executed defect decreases. Figure~\ref{fig:signed_quartic}(b) isolates the leading recursive-feedback correction: along this prescribed defect direction, the quartic contribution is negative, with local-gain reoptimization exceeding innovation-covariance inflation. After both quartic terms are retained, the residual in Fig.~\ref{fig:signed_quartic}(c) is consistent with the sixth-order remainder in Eq.~\eqref{eq:quartic_uniform_remainder}.

From the retained full-precision values underlying Fig.~\ref{fig:signed_quartic}, the quadratic and signed-quartic relative prediction errors, both normalized by \(\lvert\mathcal R_T^M\rvert\), are \(11.803\%\) and \(3.144\%\), respectively, at \(\delta_{\rm exec}=1/8\); at \(\delta_{\rm exec}=1/16\), the corresponding errors are \(3.221\%\) and \(0.229\%\).
Along this controlled defect sequence, retaining the signed-quartic term therefore improves prediction of the exact recursive response beyond the quadratic approximation.

\subsection{Residual-response bounds and finite-horizon control}
\label{subsec:residu_respon_bound_and_finite_horizon}

To identify \(\mathcal R_T^M\) with physical estimation-error risk, we additionally require measurement-independent execution.

\begin{proposition}[Statistical identification under measurement-independent execution]
\label{prop:statistical_identification}
Let \(e_k:=\widehat x_k-\widehat x_k^\star\), and let \(\cG\) be the execution environment defined in Appendix~\ref{app:statistical_identification}. Suppose that \(\cG\) is independent of the physical initial state and complete disturbance trajectory, and that the executed gain sequence \(\widetilde K_{1:T}\) is determined entirely by \(\cG\) (equivalently, is \(\cG\)-measurable). Under Assumption~\ref{ass:matched_initialization},
\begin{align}
\widehat P_k
&=
\E\!\left[
(x_k-\widehat x_k)(x_k-\widehat x_k)^{\mathsf T}
\,\middle|\,\cG
\right].
\label{eq:implemented_covariance_interpretation}\\
\Xi_k
&=
\E[e_ke_k^{\mathsf T}\mid\cG]
=
\Cov(e_k\mid\cG),
\label{eq:covariance_mismatch_statistical}
\end{align}
where the second identity requires the reference recursion to be the correctly specified conditional-mean estimator. Consequently,
\begin{align}
\mathcal R_T^M
&=
\sum_{k=1}^{T}
\E[e_k^{\mathsf T}M_ke_k\mid\cG],
\label{eq:conditional_physical_risk_identification}\\
&=
\sum_{k=1}^{T}
\E\!\left[
\|x_k-\widehat x_k\|_{M_k}^2
-
\|x_k-\widehat x_k^\star\|_{M_k}^2
\,\middle|\,\cG
\right].
\label{eq:conditional_excess_risk_identification}
\end{align}
\end{proposition}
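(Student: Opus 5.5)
The plan is to condition on the execution environment \(\cG\) and reduce everything to a deterministic-gain computation. Since \(\cG\) is independent of \(x_0\) and of the full disturbance trajectory \((\xi_{0:T-1},w_{0:T-1},v_{1:T})\), the conditional joint law of these physical variables given \(\cG\) equals their unconditional law. In particular, they remain jointly Gaussian, zero-mean (apart from \(\mu_0\)), and mutually independent given \(\cG\). Because \(\widetilde K_{1:T}\) is \(\cG\)-measurable, we may work on a fixed realization of \(\cG\) via regular conditional distributions, or equivalently treat the gains as frozen constants. The conditional statements then become statements about a linear filter with deterministic gains applied to the original model.

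\textbf{Step 1: the implemented covariance.} Let \(\epsilon_k:=x_k-\widehat x_k\) and \(\widetilde\epsilon_k:=x_k-\widetilde x_k\). From Eqs.~\eqref{eq:state_space_model} and \eqref{eq:executed_recursion},
\[
\widetilde\epsilon_k=F\epsilon_{k-1}+B\xi_{k-1}+w_{k-1},
\qquad
\epsilon_k=(I-\widetilde K_kH)\widetilde\epsilon_k-\widetilde K_kv_k.
\]
Under Assumption~\ref{ass:matched_initialization}, \(\epsilon_0=x_0-\mu_0\) has conditional mean zero and conditional covariance \(P_0\). The noises at step \(k\) are conditionally independent of \(\epsilon_{k-1}\) and of each other, since \(\epsilon_{k-1}\) depends only on \(x_0\), on earlier noises, and on \(\cG\)-measurable gains. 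By induction, the conditional mean of \(\epsilon_k\) is zero and its conditional second moment obeys \(\widetilde P_k=F\widehat P_{k-1}F^{\mathsf T}+BV_{k-1}B^{\mathsf T}+Q_{k-1}\) followed by \(\widehat P_k=\mathcal J_k(\widetilde P_k,\widetilde K_k)\). This is exactly Eq.~\eqref{eq:implemented_covariance_interpretation}.

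\textbf{Step 2: the mismatch via orthogonality.} Write \(x_k-\widehat x_k=(x_k-\widehat x_k^\star)-e_k\). The reference estimate \(\widehat x_k^\star=\E[x_k\mid z_{1:k}]\) is correctly specified. Because \(\cG\) is independent of \((x_{0:k},z_{1:k})\), we also have \(\widehat x_k^\star=\E[x_k\mid z_{1:k},\cG]\). The difference \(e_k\) is an affine function of \(z_{1:k}\) and \(\mu_0\) whose coefficients are \(\cG\)-measurable, so \(e_k\) is \(\sigma(z_{1:k},\cG)\)-measurable. The tower property then gives
\[
\E[(x_k-\widehat x_k^\star)e_k^{\mathsf T}\mid\cG]=0 .
\]
Integrability holds because everything is Gaussian with bounded coefficients given \(\cG\). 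Expanding the outer product and applying Step 1 to both filters gives \(\widehat P_k=\widehat P_k^\star+\E[e_ke_k^{\mathsf T}\mid\cG]\); the reference filter is the special case of Step 1 with \(\widetilde K_k=K_k^\star\). Since both errors have zero conditional mean, \(\E[e_k\mid\cG]=0\), so the conditional second moment equals \(\Cov(e_k\mid\cG)\). This yields Eq.~\eqref{eq:covariance_mismatch_statistical}.

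\textbf{Step 3: the risk identities.} Taking \(\tr(M_k\,\cdot)\) of Eq.~\eqref{eq:covariance_mismatch_statistical} and summing over \(k\) gives Eq.~\eqref{eq:conditional_physical_risk_identification}. For Eq.~\eqref{eq:conditional_excess_risk_identification}, expand \(\|x_k-\widehat x_k\|_{M_k}^2=\|(x_k-\widehat x_k^\star)-e_k\|_{M_k}^2\). The cross term is \(-2\,\E[(x_k-\widehat x_k^\star)^{\mathsf T}M_ke_k\mid\cG]\), which vanishes by the same orthogonality argument as in Step 2.

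I expect the main obstacle to be the measure-theoretic justification in Step 2: the reference conditional mean must remain the conditional mean after enlarging the information by \(\cG\), and the gains are random. I would handle this by the independence assumption, \(\E[x_k\mid z_{1:k},\cG]=\E[x_k\mid z_{1:k}]\), combined with conditioning on \(\cG\) through a regular conditional distribution. Under that conditioning the gains are constants, and the standard Gaussian orthogonality argument applies verbatim for each fixed realization.
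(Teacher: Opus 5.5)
Your proposal is correct and follows essentially the same route as the paper. The shared steps are the error recursions conditioned on \(\cG\), the identification \(\widehat x_k^\star=\E[x_k\mid\mathcal Y_k\vee\cG]\) by independence, tower-property orthogonality between \(x_k-\widehat x_k^\star\) and the \((\mathcal Y_k\vee\cG)\)-measurable \(e_k\), and the trace and cross-term argument for the risk identities; the only cosmetic difference is that you obtain \(\E[(x_k-\widehat x_k^\star)(x_k-\widehat x_k^\star)^{\mathsf T}\mid\cG]=\widehat P_k^\star\) by rerunning Step 1 with the deterministic gains \(K_k^\star\), whereas the paper derives it from \(\widehat P_k^\star=\Cov(x_k\mid\cF_k^+)\) and the tower property, and both are valid.
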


\begin{proof}
See Appendix~\ref{app:statistical_identification}.
\end{proof}

For each executed step, define the exact residual contribution
\begin{equation}
q_{k,T}
:=
\tr\!\left[
W_{k,T}^M
\rho_k^{\rm exec}
S_k^{-1}
(\rho_k^{\rm exec})^{\mathsf T}
\right].
\label{eq:exact_step_residual_contribution}
\end{equation}
Then \(q_{k,T}\ge0\) and \(\mathcal Q_T^{\rm res}=\sum_{k=1}^{T}q_{k,T}\). Unlike \(\|\rho_k^{\rm exec}\|_F\), \(q_{k,T}\) retains both the state-side task weight \(W_{k,T}^M\) and the innovation-side metric \(S_k^{-1}\); residuals with equal Frobenius norm can therefore have different finite-horizon contributions. A spectral characterization of the fixed-reference quadratic exposure and its residual-coordinate form at a matched current step is given in Appendix~\ref{subsec:residu_magnit_and_task_depend}.

Using the lower bound \(\ell_k\le\sigma_{\min}(S_k)\), define the task-aware and radial upper bounds
\begin{align}
c_{k,T}^{W,\ell}
&:=
\ell_k^{-1}
\tr\!\left[
W_{k,T}^M\rho_k^{\mathrm{exec}}
(\rho_k^{\mathrm{exec}})^{\mathsf T}
\right],
\label{eq:task_aware_step_upper_bound}\\
c_{k,T}^{\mathrm{rad}}
&:=
\frac{\lambda_{\max}(W_{k,T}^M)}{\ell_k}
\|\rho_k^{\mathrm{exec}}\|_F^2.
\label{eq:radial_step_upper_bound}
\end{align}
Since \(S_k^{-1}\preceq\ell_k^{-1}I_m\) and \(W_{k,T}^M\preceq\lambda_{\max}(W_{k,T}^M)I_n\),
\[
q_{k,T}
\le
c_{k,T}^{W,\ell}
\le
c_{k,T}^{\mathrm{rad}}.
\]
Thus \(c_{k,T}^{W,\ell}\) retains the directional task weighting, whereas \(c_{k,T}^{\mathrm{rad}}\) depends on the residual only through its Frobenius norm.

\begin{proposition}[Composable finite-horizon task-response control]
\label{prop:finite_horizon_respon_control}
Under the assumptions of the exact accounting identity [Eq.~\eqref{eq:exact_residual_risk_accounting}], including matched initialization,
\begin{equation}
0\le
\mathcal R_T^M
\le
\sum_{k=1}^{T}q_{k,T}
\le
\sum_{k=1}^{T}c_{k,T}^{W,\ell}
\le
\sum_{k=1}^{T}c_{k,T}^{\mathrm{rad}}.
\label{eq:risk_bound_hierarchy}
\end{equation}
Consequently, if nonnegative allocations \(b_{k,T}\) satisfy
\begin{equation}
\sum_{k=1}^{T}b_{k,T}\le\tau,
\label{eq:task_budget_allocation}
\end{equation}
and each gain actually executed at step \(k\), including any fallback, satisfies
\(c_{k,T}^{W,\ell}\le b_{k,T},\) then
\begin{equation}
\mathcal R_T^M\le\tau.
\label{eq:global_task_risk_guarantee}
\end{equation}
\end{proposition}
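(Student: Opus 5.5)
The proof will chain together three results already in place: the exact accounting identity, Eq.~\eqref{eq:exact_residual_risk_accounting}, the slack relation, Eq.~\eqref{eq:residual_risk_slack}, and an elementary bound on each step contribution.

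\textbf{Step 1: the first inequality chain.} Under matched initialization, Eq.~\eqref{eq:exact_residual_risk_accounting} gives \(\mathcal R_T^M=\mathcal Q_T^{\rm res}-\mathcal Q_T^{\rm drift}\), with \(\mathcal Q_T^{\rm drift}\ge0\) by Eq.~\eqref{eq:gain_drift_correction}. Hence \(\mathcal R_T^M\le\mathcal Q_T^{\rm res}=\sum_k q_{k,T}\). Nonnegativity of \(\mathcal R_T^M\) needs a separate argument.
\begin{itemize}
\item Proposition~\ref{prop:exact_covariance_recursions} with \(\Xi_0=0\) gives \(\Xi_k\succeq0\).
\item Since \(M_k\succeq0\), each \(\tr(M_k\Xi_k)\) is the trace of a product of two positive semidefinite matrices, so \(\tr(M_k\Xi_k)\ge0\).
\end{itemize}

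\textbf{Step 2: the step-wise comparisons.} I would establish \(q_{k,T}\le c_{k,T}^{W,\ell}\le c_{k,T}^{\rm rad}\) with a trace-monotonicity argument.
\begin{itemize}
\item Write \(q_{k,T}=\tr[(\rho_k^{\rm exec})^{\mathsf T}W_{k,T}^M\rho_k^{\rm exec}\,S_k^{-1}]\) by cyclicity. Here \(G:=(\rho_k^{\rm exec})^{\mathsf T}W_{k,T}^M\rho_k^{\rm exec}\succeq0\).
\item Since \(S_k\succ0\) is symmetric, \(S_k^{-1}\preceq\sigma_{\min}(S_k)^{-1}I_m\preceq\ell_k^{-1}I_m\).
\item The map \(X\mapsto\tr(GX)\) is monotone on symmetric matrices when \(G\succeq0\). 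It therefore yields \(q_{k,T}\le\ell_k^{-1}\tr G=c_{k,T}^{W,\ell}\), again using cyclicity.
\item Applying the same argument to \(W_{k,T}^M\preceq\lambda_{\max}(W_{k,T}^M)I_n\), with the positive semidefinite factor \(\rho_k^{\rm exec}(\rho_k^{\rm exec})^{\mathsf T}\), gives \(c_{k,T}^{W,\ell}\le c_{k,T}^{\rm rad}\), since \(\tr(\rho\rho^{\mathsf T})=\|\rho\|_F^2\).
\end{itemize}
Summing over \(k\) completes Eq.~\eqref{eq:risk_bound_hierarchy}.

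\textbf{Step 3: the budget guarantee.} The quantities \(c_{k,T}^{W,\ell}\) are evaluated at the gain actually executed at step \(k\), whether that gain is the candidate or the fallback. The per-step hypothesis \(c_{k,T}^{W,\ell}\le b_{k,T}\) then combines with Eq.~\eqref{eq:risk_bound_hierarchy} and Eq.~\eqref{eq:task_budget_allocation} to give \(\mathcal R_T^M\le\sum_k b_{k,T}\le\tau\).

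\textbf{Main subtlety.} The one point that needs care is what \(\rho_k^{\rm exec}\) and \(S_k\) refer to. Both are endogenous: they depend on the implemented covariance, which earlier executed gains have already shaped. The argument stays valid because the accounting identity is exact for whatever executed sequence occurs, so no perturbative smallness is needed here. I would state explicitly that \(W_{k,T}^M\) depends only on the fixed reference schedule, so the budgets are well defined before execution.
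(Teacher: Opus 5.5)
Your proposal is correct and follows the paper's proof. The paper likewise obtains $\mathcal R_T^M\le\sum_k q_{k,T}$ from the accounting identity and $\mathcal Q_T^{\rm drift}\ge0$, then bounds $S_k^{-1}\preceq\ell_k^{-1}I_m$ and $W_{k,T}^M\preceq\lambda_{\max}(W_{k,T}^M)I_n$ inside the traces; it applies the congruence $\rho_k^{\rm exec}S_k^{-1}(\rho_k^{\rm exec})^{\mathsf T}\preceq\ell_k^{-1}\rho_k^{\rm exec}(\rho_k^{\rm exec})^{\mathsf T}$ before pairing with $W_{k,T}^M$, whereas you pair first through $G$, which is equivalent, and your explicit check of $\mathcal R_T^M\ge0$ and the budget step coincide with what the paper uses.
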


\begin{proof}
See Appendix~\ref{app:finite_horizon_respon_control}.
\end{proof}

For the exact covariance model, the local residual criterion of Eq.~\eqref{eq:accept_rule}, applied to the executed gain, corresponds to the radial finite-horizon allocation \(c_{k,T}^{\mathrm{rad}}\le b_{k,T}^{\mathrm{adm}}\), with
\begin{equation}
b_{k,T}^{\mathrm{adm}}
:=
\lambda_{\max}(W_{k,T}^M)\,\ell_k\,\delta_{\mathrm{adm}}^2;
\label{eq:admissibility_induced_task_budget}
\end{equation}
the implication is an equivalence when \(\lambda_{\max}(W_{k,T}^M)>0\), whereas the radial allocation is vacuous when \(W_{k,T}^M=0\). Thus, if every executed step satisfies Eq.~\eqref{eq:accept_rule}, then
\begin{equation}
\mathcal R_T^M
\le
\tau_T^{\mathrm{adm}},
\qquad
\tau_T^{\mathrm{adm}}
:=
\sum_{k=1}^{T}b_{k,T}^{\mathrm{adm}}.
\label{eq:admissibility_induced_horizon_bound}
\end{equation}
This connects the local admissibility criterion to a sufficient finite-horizon response bound. The represented-system certificate of Sec.~\ref{subsec:residual_certifi_and_fail_close} establishes this exact-model implication only when matrix-formation and covariance-propagation errors are separately controlled. The reverse construction, from a prescribed task-response allocation to local residual tolerances, is given in Appendix~\ref{app:finite_horizon_respon_control}. Uniform-in-time sufficient bounds under strict Euclidean contraction are given separately in Appendix~\ref{app:uniform_in_time_bounds}.

\section{Local Repairability and Machine-Learning Correction}
\label{sec:ml_correc_and_local}

The response theory of Sec.~\ref{sec:finit_horiz_respon} takes the executed defect sequence as given.  We now separate three pre-execution questions: whether the declared correction class contains an admissible action, whether the learned map attains one, and whether the proposed candidate passes the independent residual check.

\subsection{Correction class and local repairability}
\label{subsec:correc_class_and_local}

At step \(k\), let \(\mathcal V_{\rm corr}\subseteq\R^{n\times m}\) be the correction subspace and \(\delta_c\ge0\) the correction budget. Define
\[
\mathcal C_k
:=
\left\{
\Delta\in\mathcal V_{\rm corr}:
\|\Delta\|_F\le\delta_c
\right\}.
\]
The class \(\mathcal C_k\) specifies the available correction actions independently of how a particular proposal is constructed.

Let \(\Pi_{\rm corr}\) be the Frobenius-orthogonal projector onto \(\mathcal V_{\rm corr}\), and decompose
\[
A_k=A_k^\parallel+A_k^\perp,
\qquad
A_k^\parallel:=\Pi_{\rm corr}A_k,
\qquad
A_k^\perp:=A_k-A_k^\parallel.
\]
Only \(A_k^\parallel\) can be modified by a correction in \(\mathcal C_k\); \(A_k^\perp\) is an irreducible component for the declared correction subspace.

\begin{theorem}[Exact one-step repairability boundary]
\label{thm:repairability}
For the correction class \(\mathcal C_k\), define \(\varepsilon_k\ge0\) by
\[
\begin{aligned}
\varepsilon_k^2
&:=
\min_{\Delta\in\mathcal C_k}
\|A_k+\Delta\|_F^2
\\
&=
\|A_k^\perp\|_F^2
+
\bigl(\|A_k^\parallel\|_F-\delta_c\bigr)_+^2,
\end{aligned}
\]
where \(a_+:=\max\{a,0\}\). Consequently,
\[
\exists\,\Delta\in\mathcal C_k:\
\|A_k+\Delta\|_F\le\delta_{\rm adm}
\quad\Longleftrightarrow\quad
\varepsilon_k\le\delta_{\rm adm}.
\]
\end{theorem}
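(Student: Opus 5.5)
The plan is to reduce the minimization to a Euclidean projection problem in the Hilbert space \((\R^{n\times m},\langle\cdot,\cdot\rangle_F)\). For any \(\Delta\in\mathcal V_{\rm corr}\) we have \(A_k^\parallel+\Delta\in\mathcal V_{\rm corr}\) and \(A_k^\perp\perp\mathcal V_{\rm corr}\). The Pythagorean identity then gives
\[
\|A_k+\Delta\|_F^2=\|A_k^\perp\|_F^2+\|A_k^\parallel+\Delta\|_F^2 .
\]
The first term does not depend on \(\Delta\). The problem therefore reduces to computing \(\min\{\|A_k^\parallel+\Delta\|_F:\Delta\in\mathcal V_{\rm corr},\ \|\Delta\|_F\le\delta_c\}\). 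This is the distance from the point \(-A_k^\parallel\) to a closed ball of radius \(\delta_c\) centered at the origin, taken inside the finite-dimensional space \(\mathcal V_{\rm corr}\). Because \(\mathcal C_k\) is compact and the objective is continuous, the minimum is attained, so writing \(\min\) is justified.

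Next I would compute this distance directly, without invoking a general projection theorem. The lower bound comes from the reverse triangle inequality: \(\|A_k^\parallel+\Delta\|_F\ge\|A_k^\parallel\|_F-\|\Delta\|_F\ge\|A_k^\parallel\|_F-\delta_c\). Together with nonnegativity this gives the bound \((\|A_k^\parallel\|_F-\delta_c)_+\). For attainment I split into two cases. If \(\|A_k^\parallel\|_F\le\delta_c\), take \(\Delta=-A_k^\parallel\in\mathcal C_k\), which gives value \(0\). Otherwise take \(\Delta=-\delta_c A_k^\parallel/\|A_k^\parallel\|_F\), which lies in \(\mathcal V_{\rm corr}\) with norm \(\delta_c\) and gives \(\|A_k^\parallel+\Delta\|_F=\|A_k^\parallel\|_F-\delta_c\). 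Squaring and adding \(\|A_k^\perp\|_F^2\) yields the stated formula for \(\varepsilon_k^2\). The degenerate case \(\delta_c=0\) is covered by the same argument, since then \(\mathcal C_k=\{0\}\) and \(\varepsilon_k=\|A_k\|_F\).

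The equivalence then follows because the minimum is attained. If some \(\Delta\in\mathcal C_k\) satisfies \(\|A_k+\Delta\|_F\le\delta_{\rm adm}\), then \(\varepsilon_k\le\|A_k+\Delta\|_F\le\delta_{\rm adm}\). Conversely, if \(\varepsilon_k\le\delta_{\rm adm}\), the explicit minimizer from the previous step witnesses the existence claim. Only attainment makes this direction nontrivial, and the explicit construction removes any need for a separate compactness argument.

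I expect no real obstacle. The only point needing care is to state clearly that \(\Pi_{\rm corr}\) is the orthogonal projector in the Frobenius inner product. This is what makes \(A_k^\perp\) orthogonal to every admissible \(\Delta\), and hence justifies the Pythagorean split. It is also what makes the decomposition separate the subspace obstruction \(\|A_k^\perp\|_F\) from the budget obstruction \((\|A_k^\parallel\|_F-\delta_c)_+\).
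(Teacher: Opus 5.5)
Your proposal is correct and follows the paper's proof essentially step for step: the Frobenius-orthogonal split, reduction to projecting \(-A_k^\parallel\) onto the radius-\(\delta_c\) ball with the same explicit minimizer \(-\min\{1,\delta_c/\|A_k^\parallel\|_F\}A_k^\parallel\), and use of attainment of the minimum to obtain the equivalence. Your reverse-triangle-inequality lower bound simply makes explicit the optimality that the paper takes from the projection characterization.
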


\begin{proof}
See Appendix~\ref{app:repair_bound_and_minim_correc}.
\end{proof}

The minimization is a projection problem over a norm-bounded correction subspace. Its role here is to provide an exact necessary-and-sufficient criterion for local correction feasibility, against which learned attainment and residual acceptance can be evaluated separately.

Geometrically, the theorem states that the reachable set \(K_k^{\rm alg}+\mathcal C_k\) intersects the admissible ball centered on \(K_k^{\rm loc}\) if and only if \(\varepsilon_k\le\delta_{\rm adm}\). Figure~\ref{fig:conceptual_response}(a) illustrates an intersection for which the learned candidate nevertheless lies outside the admissible ball, separating existence of an admissible correction from learned attainment. For the same exact gain system, a candidate outside the ball cannot satisfy the sufficient residual condition in Eq.~\eqref{eq:accept_rule}, whereas a candidate inside need not satisfy it. Residual acceptance is therefore a separate requirement; certification for represented numerical systems is addressed in Sec.~\ref{subsec:residual_certifi_and_fail_close}.

The repairability boundary partitions each local problem into four mutually exclusive regimes, summarized in Table~\ref{tab:regions}. The harmless regime depends only on the raw computational defect and the admissibility tolerance. For a raw-inadmissible defect, the orthogonal component \(A_k^\perp\) determines whether the declared correction subspace can reach the admissible ball: if it cannot, the instance is subspace-obstructed independently of the correction budget. Otherwise, repairability is determined by whether the available budget \(\delta_c\) is sufficient to remove the required part of \(A_k^\parallel\). Thus subspace obstruction cannot be removed by increasing \(\delta_c\) at fixed \(\mathcal V_{\rm corr}\), whereas budget obstruction can. These regimes characterize one-step feasibility within the declared correction class; learned attainment and residual acceptance remain separate questions. The minimizing correction and the corresponding minimum required correction budget are derived in Appendix~\ref{app:repair_bound_and_minim_correc}.

\begin{table*}[tb]
\centering
\caption{%
    \label{tab:regions}%
    Local repairability boundary at step \(k\) for the declared correction class. Here \(A_k\) is the raw local computational defect, \(A_k^\perp\) is its component orthogonal to the correction subspace, \(\varepsilon_k\) is the minimum postcorrection defect attainable within the declared correction budget, and \(\delta_{\rm adm}\) is the local admissibility tolerance.}
\begin{ruledtabular}
    \begin{tabular}{lll}
        Class & Condition & Mechanism \\
        \hline
        Harmless
        & \(\|A_k\|_F \le \delta_{\mathrm{adm}}\)
        & The raw solve is already locally admissible \\
        Repairable
        & $\|A_k\|_F > \delta_{\mathrm{adm}},\ \varepsilon_k \le \delta_{\mathrm{adm}}$
        & The correction class contains a locally admissible action \\
        Subspace-obstructed
        & \(\|A_k^\perp\|_F>\delta_{\mathrm{adm}}\)
        & The correction subspace cannot reach the admissible ball \\
        Budget-obstructed & \(\|A_k^\perp\|_F\le\delta_{\mathrm{adm}},\ \varepsilon_k>\delta_{\mathrm{adm}}\)& The available budget is insufficient within the correction subspace
    \end{tabular}
\end{ruledtabular}
\end{table*}

\begin{figure}[t]
  \centering
  \includegraphics[width=\columnwidth]{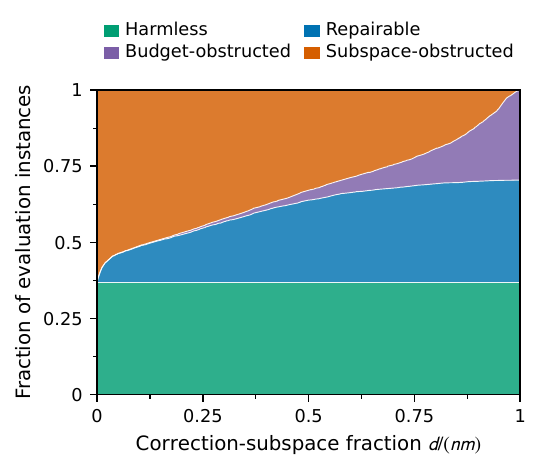}
  \caption{
  Repairability boundary for finite conjugate gradient (CG) gain defects. The stacked bands show the fractions of 4096 held-out evaluation instances classified as harmless (green), repairable (blue), budget-obstructed (purple), and subspace-obstructed (orange) by Theorem~\ref{thm:repairability} as a function of the normalized correction-subspace fraction \(d/(nm)\), where \(d=\dim(\mathcal V_{\rm corr})\). Here \(n=27\), \(m=16\), and \(nm=432\). For each independently generated local estimation instance, the approximate gain is obtained after three zero-start, unpreconditioned CG iterations. The correction subspaces are nested prefixes of a single principal component analysis (PCA) basis fitted on an independent acquisition cohort, and the independently calibrated correction budget \(\delta_c\) is held fixed across \(d\). The harmless fraction is independent of \(d\), and subspace obstruction vanishes at \(d=nm\), where only budget obstruction remains.
  }
  \label{fig:repairability_boundary}
\end{figure}

To evaluate local repairability separately from the recursive studies, we use a fixed ensemble of synthetic linear-Gaussian instances constructed from the IEEE 14-bus power-network benchmark~\cite{pandapower}, with \(n=27\), \(m=16\), and gain-space dimension \(nm=432\). The local instances use prediction covariances obtained after 1000 fixed Joseph-form Riccati updates from a common initialization, and the approximate gain is obtained after three zero-start, unpreconditioned conjugate-gradient (CG) iterations applied independently to the gain-system right-hand sides. A single raw, uncentered, unnormalized principal component analysis (PCA) basis is fitted on an independent acquisition cohort, and the evaluated correction subspaces are its nested prefixes with \(d=0,1,\ldots,432\). On a disjoint calibration cohort, the correction budget \(\delta_c\) is set to the nearest-rank median of the full-space minimum required correction budgets over raw-inadmissible instances and is then held fixed across \(d\). The four repairability classes are evaluated on a separate held-out cohort. The predeclared diagnostic radius is used only for this local one-step classification and is distinct from the runtime residual certificate and the finite-horizon response bounds.

Figure~\ref{fig:repairability_boundary} shows how the two obstruction mechanisms change with available correction capacity. The harmless fraction is independent of \(d\) because it is determined before correction. Increasing \(d\) reduces the component of the raw defect orthogonal to the correction subspace, allowing instances that are subspace-obstructed at smaller \(d\) to become either repairable or budget-obstructed according to the fixed budget \(\delta_c\). Each held-out instance is classified directly by Theorem~\ref{thm:repairability}, separating geometric repairability from learned coefficient prediction.

\subsection{Bounded machine-learning correction}
\label{subsec:bound_ml_correc}

The repairability boundary characterizes the complete correction class but does not select an action within it. We use a checkpoint-specific learned map to propose one bounded correction from the raw represented residual; the admissibility criterion and subsequent execution decision remain independent of the learner.

At each CG iteration \(t\), an offline construction defines a frozen Frobenius-orthonormal correction basis
\[
\mathcal V_{{\rm corr},t}
=
\operatorname{span}\{U_{t,1},\ldots,U_{t,d_{{\rm eff},t}}\},
\qquad
\langle U_{t,i},U_{t,j}\rangle_F=\delta_{ij},
\]
where \(d_{{\rm eff},t}\le d_{\rm cap}\) is the number of numerically supported directions retained under the prescribed correction-dimension cap \(d_{\rm cap}\) (Appendix~\ref{app:correc_basis_and_learn_model}). The same offline construction fixes an uncentered RMS feature scaling, a no-intercept ridge map
\(\Theta_t\in\R^{d_{{\rm eff},t}\times nm},\)
and a checkpoint-local correction radius \(\delta_{c,t}^{\star}\), all of which are frozen before deployment. If \(d_{{\rm eff},t}=0\), no learned correction is applied.

The deployed feature \(\phi_{k,t}\in\R^{nm}\) is an uncentered RMS-scaled representation of the explicitly recomputed raw residual \(\rho_{k,t}^{\rm alg}\). It contains neither the current measurement nor the innovation, and neither the exact local gain nor the local computational defect. The radius instantiates the correction budget but is calibrated independently of \(\delta_{\rm adm}\) and of the minimum budget required for repairability. Basis construction, feature scaling, targets, calibration, and fitting are specified in Appendix~\ref{app:correc_basis_and_learn_model}.

The predicted coefficients are projected onto the checkpoint-local coefficient ball,
\[
\widehat c_{k,t}
=
\operatorname{proj}_{\|c\|_2\le\delta_{c,t}^{\star}}
(\Theta_t\phi_{k,t})
\in\R^{d_{{\rm eff},t}},
\]
where \(\operatorname{proj}_{\|c\|_2\le r}\) denotes Euclidean projection onto the closed coefficient ball of radius \(r\). 
The proposed correction is then
\[
\Delta_{k,t}^{\rm ml}
=
-\sum_{i=1}^{d_{{\rm eff},t}}
[\widehat c_{k,t}]_iU_{t,i}.
\]
The corresponding candidate gain is \(K_{k,t}^{\rm cand} = K_{k,t}^{\rm alg}+\Delta_{k,t}^{\rm ml}.\)
Its residual is recomputed explicitly:
\[
\begin{aligned}
\rho_{k,t}^{\rm cand}
&:=
\rho_k(K_{k,t}^{\rm cand})
\\
&=
K_{k,t}^{\rm cand}S_k-\widetilde P_kH^{\mathsf T}
=
\rho_{k,t}^{\rm alg}+\Delta_{k,t}^{\rm ml}S_k.
\end{aligned}
\]

Frobenius orthonormality and coefficient projection give
\[
\|\Delta_{k,t}^{\rm ml}\|_F
=
\|\widehat c_{k,t}\|_2
\le
\delta_{c,t}^{\star},
\]
so every proposed correction belongs to the declared checkpoint-local correction class. Because the features are uncentered and the ridge map has no intercept,
\[
\rho_{k,t}^{\rm alg}=0
\quad\Longrightarrow\quad
\Delta_{k,t}^{\rm ml}=0,
\]
so the construction is zero consistent. The explicitly recomputed candidate residual \(\rho_{k,t}^{\rm cand}\) is assessed by the certification and fallback procedure in the next section.

\subsection{Residual certification and verified fallback}
\label{subsec:residual_certifi_and_fail_close}

For a represented candidate \(L\), the matrix residual is recomputed explicitly from \(L\), \(S_k\), and \(\widetilde P_kH^{\mathsf T}\). Equation~\eqref{eq:accept_rule} gives the exact-arithmetic sufficient criterion for local admissibility. Its finite-precision implementation accounts for residual-evaluation and norm-computation error. The forward-error guarantee additionally requires \(0<\ell_k\le\sigma_{\min}(S_k)\) for the current represented matrix. The candidate is accepted only when the resulting residual upper bound is no larger than the corresponding verified lower bound on \(\ell_k\delta_{\rm adm}\).

The policy-specific approximate candidate is evaluated first. If it is verified, that gain is used in the state and Joseph covariance updates. Otherwise, a fallback gain is computed by a Cholesky solve of the current represented gain system and evaluated under the same represented-residual certificate. The fallback is additionally checked through its direct backward residual.

Provided this represented-matrix lower-bound premise holds, numerical certification bounds the Frobenius-norm error of the executed gain relative to the exact solution of the represented gain system by \(\delta_{\rm adm}\). This guarantee concerns the represented gain system; identifying this error with \(E_k\) requires separate control of matrix-formation and covariance-propagation errors. Under the exact covariance model, the residual criterion bounds the current executed defect \(E_k\), but not the inherited gain drift \(\Gamma_k\).

The admissibility tolerance links local execution to recursive response: the same \(\delta_{\rm adm}\) defines the admissible ball in Theorem~\ref{thm:repairability} and the residual condition in Eq.~\eqref{eq:accept_rule}. Under the exact covariance model, if every executed gain satisfies that condition, then \(\delta\le\delta_{\rm adm}\); with matched initialization, Eq.~\eqref{eq:admissibility_induced_horizon_bound} follows. Transferring the represented-system certificate to this exact-model conclusion additionally requires the matrix-formation and covariance-propagation controls stated above. If the hypotheses of Theorem~\ref{thm:quartic_finit_horiz_respon} hold on the defect tube with \(\bar\delta=\delta_{\rm adm}\), the finite-horizon response admits the quadratic and signed-quartic expansion with sixth-order remainder bounded by \(C_{6,T}^M(\delta_{\rm adm})\,\delta_{\rm adm}^6\). Certified defects of equal norm can contribute differently to \(q_{k,T}\), so recursive consequence is evaluated separately in Sec.~\ref{sec:results}.

\section{Results}
\label{sec:results}

We evaluate the effects of bounded learned correction on certified solver effort, complete runtime, and recursive response across measurement dimensions, network sizes, and horizons. Repairability, learned attainment, and residual acceptance are treated as distinct local properties, while finite-horizon response is evaluated separately from the executed gains using the exact residual--drift identity~[Eq.~\eqref{eq:exact_residual_risk_accounting}]. The same execution framework is also applied to reconstructed quantum-solver gains. The classical computations were performed primarily on the JURECA supercomputer at Forschungszentrum Jülich~\cite{JURECA}. The quantum-hardware components comprise terminal VQLS measurements on the Origin Wukong superconducting processor \cite{originqcloud} and QUBO sampling on the D-Wave Advantage2 system JUPSI \cite{mcgeoch2022advantage2}.

\subsection{Experimental design and evaluation protocol}
\label{subsec:benchmark}

The benchmark models are synthetic linear-Gaussian systems built on standard power-network test cases distributed with pandapower~\cite{pandapower}; we label them by bus count as IEEE~\(N\). For a case with \(N\) buses, the state contains the real and imaginary bus-voltage coordinates with the imaginary coordinate of the slack (reference) bus removed, giving \(n=2N-1\). Each measurement is a normalized scalar coordinate, either a voltage coordinate or a real or imaginary branch-current coordinate obtained from the branch admittance matrices, so \(H\) is exactly linear and \(m\) may exceed \(n\). The network enters only through its topology and branch admittances: the state-transition matrix \(F\) is a stable synthetic construction informed by the topology, the process and measurement covariances are synthetic positive-definite models, there is no input term, and the simulated state is zero-mean. The measurement covariance is constant, whereas a predeclared, slowly varying scalar schedule modulates the transition process covariance as \(Q_k=c_kQ_{\rm base}\), so the gain systems vary along the chronology. The models are controlled benchmarks for computation within recursive estimation. Appendix~\ref{app:benchmark_construction} specifies the construction.

For each solver family, we compare a monitored policy with a learned-corrected policy under the same residual certificate and verified-fallback rule. Quantities requiring the exact local gain \(K_k^{\rm loc}\), including the raw defect \(A_k\), are restricted to offline construction, calibration, and post-execution evaluation and are not available to the deployed learner. The monitored policies (M-CG, M-VQLS, and M-QUBO) submit the raw gain candidate directly to the certificate. Their learned-corrected counterparts (LC-CG, LC-VQLS, and LC-QUBO) first apply the frozen bounded correction and then submit the corrected candidate. CG candidates are obtained after a fixed iteration count, whereas VQLS and QUBO candidates are reconstructed and assembled into complete gains before residual certification. The classical CG policies are deterministic and measurement independent, providing a controlled setting for studying finite-compute defects. Within each comparison, all policies begin deployment from the same commissioned reference state estimate and covariance. After conditioning on the commissioning measurements and reindexing at the final commissioning posterior, the shared initial mean and covariance satisfy Assumption~\ref{ass:matched_initialization}; their common prediction supplies the prior for the first deployment update. Approximate gains enter the recursion only after commissioning.

\begin{figure*}[t]
  \centering
  \includegraphics[width=\textwidth]{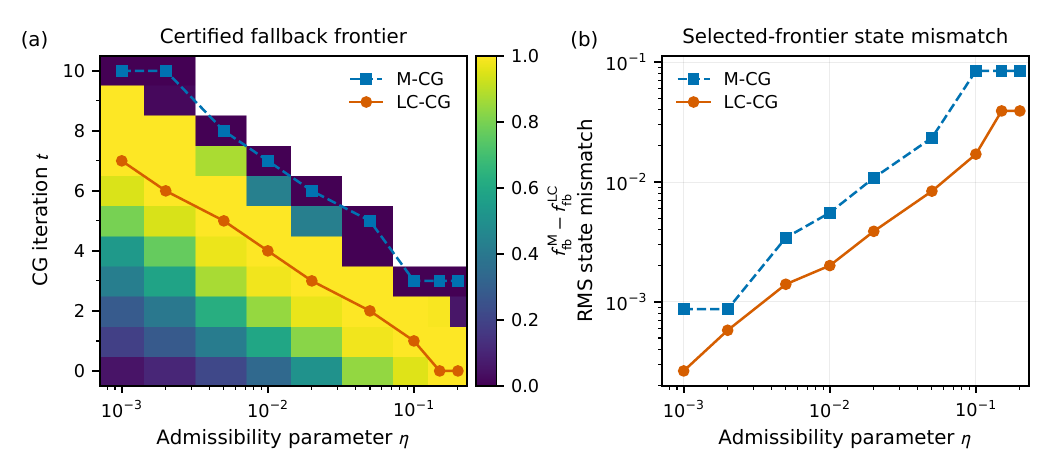}
  \caption{Monitored conjugate gradient (M-CG) and learned-corrected conjugate gradient (LC-CG) on one IEEE 14 realization with \(n=27\), \(m=64\), and correction dimension cap \(d_{\rm cap}=64\). Both policies use the same residual certificate, admissibility tolerance, and verified fallback at each \(\eta\). Evaluation covers 600 deployment steps following 400 commissioning steps. (a) Reduction in fallback fraction, \(f_{\rm fb}^{\rm M}-f_{\rm fb}^{\rm LC}\), from Eq.~\eqref{eq:fallback_fraction}, versus admissibility parameter \(\eta\in\{0.001,0.002,0.005,0.01,0.02,0.05,0.1,0.15,0.2\}\) and CG iteration \(t\). Curves mark each policy's smallest evaluated CG iteration with zero deployment fallbacks; blank cells were not evaluated. (b) Normalized root mean square (RMS) estimator-to-reference state mismatch [Eq.~\eqref{eq:normalized_rms_state_mismatch}] at each policy's own zero-fallback frontier.
  }
  \label{fig:scaling_frontier}
\end{figure*}

The admissibility tolerance is parameterized as \(\delta_{\rm adm}(\eta)=\eta\,\delta_{\rm FH}\), where \(\eta>0\) is dimensionless and \(\delta_{\rm FH}\) is a configuration-specific scale fixed from the commissioning reference before deployment. For a commissioning window of \(C\) updates,
\begin{equation}
\begin{aligned}
J_{\rm tr}
&:=
\sum_{k=1}^{C}\tr\!\left(\widehat P_k^\star\right),
\qquad
G_{\rm tr}
:=
\underline r
\sum_{k=1}^{C}
\lambda_{\max}\!\left(W_{k,C}^{I}\right),\\
\delta_{\rm FH}
&:=
\sqrt{J_{\rm tr}/G_{\rm tr}} .
\end{aligned}
\label{eq:commissioning_scale}
\end{equation}
Here \(W_{k,C}^{I}\) is the response operator of Eq.~\eqref{eq:task_response_operator} for the identity task metric over the commissioning horizon. With the uniform lower bound \(\ell_k=\underline r\) used in these studies, Eq.~\eqref{eq:admissibility_induced_horizon_bound} gives \(\tau_C^{\rm adm}=G_{\rm tr}\delta_{\rm adm}^2\), so that \(\eta^2=\tau_C^{\rm adm}/J_{\rm tr}\) is the ratio of the admissibility-induced radial response upper bound to the accumulated reference posterior uncertainty on the commissioning horizon. This commissioning normalization sets the experimental admissibility scale; it is distinct from a deployment-horizon response certificate. Within each configuration, \(\delta_{\rm FH}\) is held fixed across \(\eta\), CG iteration, and policy. Equation~\eqref{eq:accept_rule} gives \(\|\rho_k(L)\|_F\le\underline r\,\eta\,\delta_{\rm FH}\); the executable certificate applies the corresponding finite-precision residual and denominator enclosures described in Sec.~\ref{subsec:residual_certifi_and_fail_close}. Smaller \(\eta\) imposes a tighter local admissibility requirement.

For an executed policy \(\mathsf a\in\{\mathrm{M},\mathrm{LC}\}\) over deployment index set \(\mathcal D\), define the fallback fraction
\begin{equation}
f_{\rm fb}^{(\mathsf a)}
:=
\frac{N_{\rm fb}^{(\mathsf a)}}{|\mathcal D|},
\label{eq:fallback_fraction}
\end{equation}
where \(N_{\rm fb}^{(\mathsf a)}\) is the number of deployment steps on which the verified fallback is executed.
For the common reference trajectory, define the reference-state RMS scale
\[
s_{\mathrm{ref}}
:=
\left(
|\mathcal D|^{-1}
\sum_{k\in\mathcal D}
\|\widehat x_k^\star\|_2^2
\right)^{1/2}.
\]
The per-step relative state mismatch is
\begin{equation}
\epsilon_k^{(\mathsf a)}
:=
\frac{
\|\widehat x_k^{(\mathsf a)}-\widehat x_k^\star\|_2
}{
s_{\mathrm{ref}}
},
\qquad k\in\mathcal D.
\label{eq:relative_state_mismatch}
\end{equation}
Its deployment-averaged normalized RMS value is
\begin{equation}
\epsilon_{\mathrm{rms}}^{(\mathsf a)}
:=
\left(
|\mathcal D|^{-1}
\sum_{k\in\mathcal D}
\bigl(\epsilon_k^{(\mathsf a)}\bigr)^2
\right)^{1/2}.
\label{eq:normalized_rms_state_mismatch}
\end{equation}
All state-mismatch metrics use the reference posterior estimate evaluated on the same deployment realization and window. These realized state-space quantities are distinct from the finite-horizon response \(\mathcal R_T^M\).

\subsection{Recursive execution and finite-horizon consequence}
\label{subsec:certif_recursi}

\begin{figure*}[t]
  \centering
  \includegraphics[width=\textwidth]{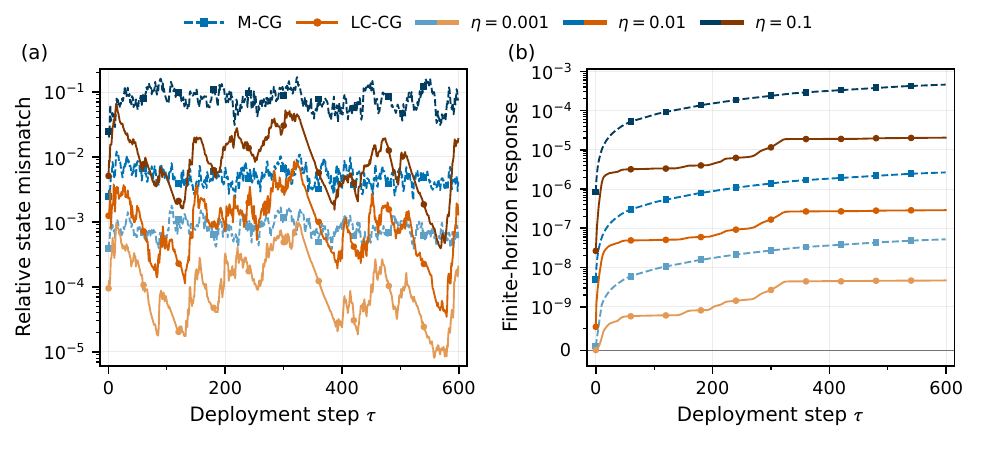}
  \caption{Deployment trajectories for one IEEE 14 realization with \(n=27\), \(m=64\), and \(d_{\mathrm{cap}}=64\), following 400 commissioning steps. The three admissibility parameters are \(\eta=0.001\), \(0.01\), and \(0.1\); the corresponding M-CG and LC-CG zero-fallback iterations are \((10,7)\), \((7,4)\), and \((3,1)\). (a) Per-step relative state mismatch [Eq.~\eqref{eq:relative_state_mismatch}]. (b) Cumulative finite-horizon response from the signed residual-minus-drift contributions. The response weights are fixed by the full 600-step deployment horizon, so intermediate values are cumulative contributions to that fixed-horizon response rather than responses recomputed for shorter horizons. Deployment step \(\tau=0,\ldots,599\) indexes the saved posterior updates, with \(\tau=0\) denoting the first deployment update.}
  \label{fig:cg_trajectory_response}
\end{figure*}

We first evaluate M-CG and LC-CG on the IEEE 14 benchmark \cite{pandapower}.
Figure~\ref{fig:scaling_frontier} connects the residual execution rule to its recursive consequence. Across all nine admissibility parameters, LC-CG reaches zero fallback two to four CG iterations earlier than M-CG [Fig.~\ref{fig:scaling_frontier}(a)]. At the corresponding policy-specific frontiers, its normalized estimator-to-reference mismatch is also smaller at every \(\eta\) [Fig.~\ref{fig:scaling_frontier}(b)]. These are distinct quantities. The frontier measures certified solver depth, whereas the mismatch measures the state-space consequence of the executed policy.

Figure~\ref{fig:cg_trajectory_response} shows the deployment trajectories at three selected admissibility parameters. Evaluated at each policy's own zero-fallback frontier, LC-CG has a smaller deployment-averaged normalized RMS state mismatch from Eq.~\eqref{eq:normalized_rms_state_mismatch} and a smaller total finite-horizon response for all three displayed parameters.

\begin{table*}[t]
\caption{Comparison of local repairability, learned attainment, and residual acceptance on the same deployment steps for IEEE 14 LC-CG with \(m=64\). Each row is a separate \(N_{\rm dep}=600\)-step trajectory. \(t^\star\) denotes the selected zero-fallback iteration; up to two preceding nonnegative iterations are included. \(N_{\rm har}\), \(N_{\rm rep}\), \(N_{\rm sub}\), and \(N_{\rm bud}\) count harmless, repairable, subspace-obstructed, and budget-obstructed deployment steps. Attainment is the fraction of repairable deployment steps whose learned candidate is locally admissible; acceptance is the fraction of those attained candidates that pass the residual certificate.}
\label{tab:repairability_attainment_acceptance}
\centering
\begin{ruledtabular}
\begin{tabular}{c c r r r r r r}
\(\eta\) & \(t\) & \(N_{\rm har}/N_{\rm dep}\) & \(N_{\rm rep}/N_{\rm dep}\) & \(N_{\rm sub}/N_{\rm dep}\) & \(N_{\rm bud}/N_{\rm dep}\) & Attainment (\%) & Acceptance (\%) \\
\hline
0.001 & 5           & 0  & 0.982 & 0 & 0.018 & 100.0 & 81.2 \\
0.001 & 6           & 0  & 1.000 & 0 & 0  & 100.0 & 94.5 \\
0.001 & \(7^\star\) & 0  & 1.000 & 0 & 0  & 100.0 & 100.0 \\
0.01  & 2           & 0  & 1.000 & 0 & 0  & 100.0 & 84.3 \\
0.01  & 3           & 0  & 1.000 & 0 & 0  & 100.0 & 98.0 \\
0.01  & \(4^\star\) & 0  & 1.000 & 0 & 0  & 100.0 & 100.0 \\
0.1   & 0           & 0  & 1.000 & 0 & 0  & 100.0 & 95.0 \\
0.1   & \(1^\star\) & 0.043 & 0.957 & 0 & 0  & 100.0 & 100.0 \\
\end{tabular}
\end{ruledtabular}
\end{table*}

We further compare geometric repairability, learned attainment, and residual acceptance at the same deployment steps. Each step is classified by its local repairability region (Table~\ref{tab:regions}). Within the repairable subset, we separately record whether the retained learned candidate satisfies \(\|K_{k,t}^{\rm cand}-K_k^{\rm loc}\|_F\le\delta_{\rm adm}\) and whether that same candidate passes the residual certificate. Both proportions are descriptive frequencies over the retained trajectories.
We apply the same-sample census to LC-CG at three tolerances, \(\eta\in\{0.001,0.01,0.1\}\). At each tolerance, we include the selected zero-fallback iteration \(t^\star\) from the frontier study and up to two immediately preceding nonnegative iteration counts. This extension rule was fixed before the additional census, and each iteration count is evaluated on its own recursive trajectory.

Table~\ref{tab:repairability_attainment_acceptance} reports the three quantities separately. The shallowest trajectory at the strictest tolerance contains budget-obstructed steps, whereas some raw defects at the loosest selected setting are already locally admissible. No subspace-obstructed steps occur in the eight trajectories. Every retained candidate in the repairable subset attains local admissibility, but residual acceptance ranges from \(81.2\%\) to \(98.0\%\) before the selected frontier and reaches \(100\%\) at each selected zero-fallback iteration. The observed pre-frontier gap therefore lies between learned attainment and residual acceptance, rather than between geometric repairability and learned attainment. This is consistent with the residual certificate providing a sufficient but not necessary condition for local admissibility (Sec.~\ref{subsec:correc_class_and_local}).

\subsection{Measurement-dimension, network, and horizon dependence}
\label{subsec:measu_netwo_horiz_depend}

\begin{figure*}[t]
  \centering
  \includegraphics[width=\textwidth]{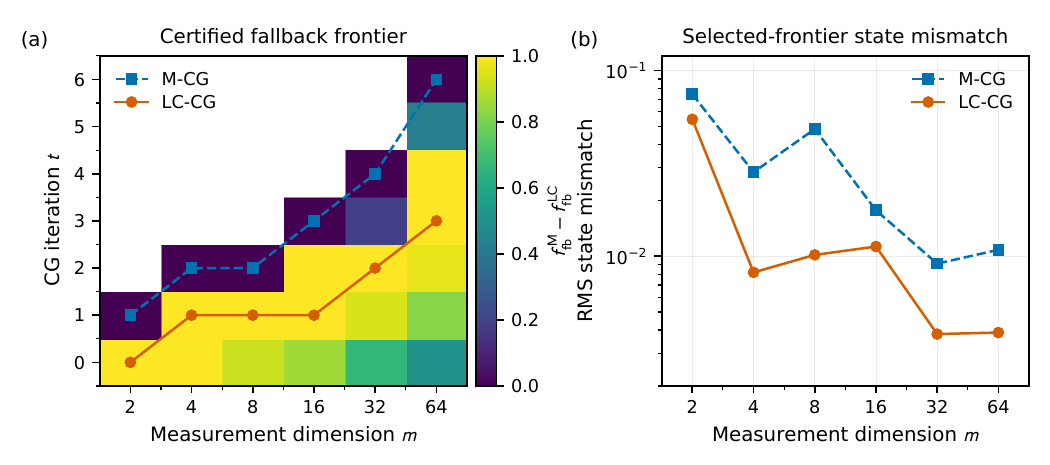}
  \caption{Measurement-dimension dependence on IEEE 14 with \(n=27\), \(d_{\rm cap}=64\), \(\eta=0.020\), 400 commissioning steps, and 600 deployment steps. (a) Fallback-fraction reduction \(f_{\rm fb}^{\rm M}-f_{\rm fb}^{\rm LC}\) [Eq.~\eqref{eq:fallback_fraction}] versus measurement dimension \(m\in\{2,4,8,16,32,64\}\) and CG iteration \(t\). Curves mark each policy's smallest evaluated zero-fallback iteration; blank cells were not evaluated. (b) Normalized RMS estimator-to-reference state mismatch [Eq.~\eqref{eq:normalized_rms_state_mismatch}] at each policy's own selected zero-fallback frontier.}
  \label{fig:measurement_scaling}
\end{figure*}

We vary the measurement dimension on IEEE 14 at fixed \(\eta=0.020\). Across \(m=2,4,8,16,32,64\), LC-CG reaches zero deployment fallback one to three CG iterations earlier than M-CG [Fig.~\ref{fig:measurement_scaling}(a)], and its state mismatch at the selected frontier is smaller at every \(m\) [Fig.~\ref{fig:measurement_scaling}(b)].

\begin{figure*}[t]
  \centering
  \includegraphics[width=\textwidth]{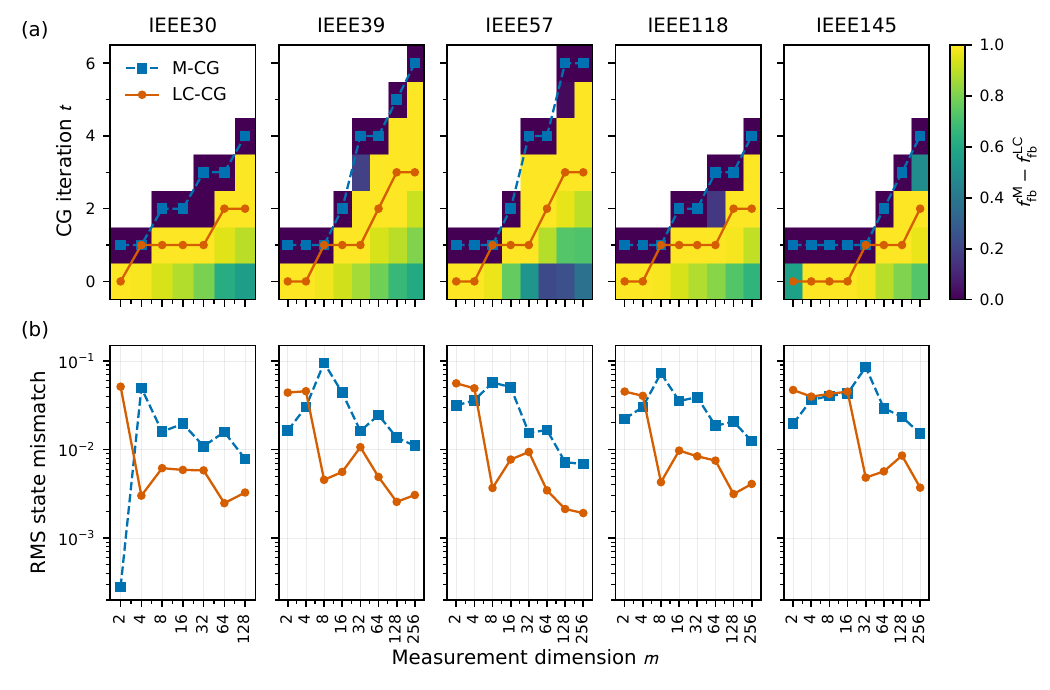}
  \caption{Network and measurement-dimension dependence at \(\eta=0.020\), with \(d_{\rm cap}=64\), 400 commissioning steps, and 600 deployment steps, for five larger IEEE test systems, shown from left to right: IEEE 30, 39, 57, 118, and 145. (a) Fallback-fraction reduction \(f_{\rm fb}^{\rm M}-f_{\rm fb}^{\rm LC}\) [Eq.~\eqref{eq:fallback_fraction}] over measurement dimension \(m\) and CG iteration \(t\), with overlaid zero-fallback frontiers for M-CG and LC-CG. (b) Normalized RMS estimator-to-reference state mismatch [Eq.~\eqref{eq:normalized_rms_state_mismatch}] at each policy's own selected zero-fallback frontier.
  }
  \label{fig:network_resolved}
\end{figure*}

Figure~\ref{fig:network_resolved} extends the frontier comparison to IEEE 30, 39, 57, 118, and 145. At the smallest measurement dimensions, LC-CG at its shallower zero-fallback frontier has a larger state mismatch than M-CG in several configurations, whereas at larger \(m\) its mismatch is smaller. A shallower zero-fallback solver depth therefore does not guarantee a smaller recursive mismatch.

Figure~\ref{fig:horizon_scaling} compares horizon dependence under fixed \(C=400\) and proportional commissioning. Across the evaluated chronology lengths, the M-CG frontier changes little under either protocol, while LC-CG retains a substantially shallower frontier without a monotonic benefit from increasing the commissioning window. The fixed-window results show that the frontier advantage persists without increasing the commissioning sample count. Differences between protocols also reflect changes in the deployment interval and commissioning-derived quantities, so they do not isolate the effect of commissioning-data quantity. The corresponding normalized RMS state mismatch remains a separate quantity: its dependence on commissioning protocol is network dependent, with particularly visible realization-to-realization variation for IEEE 57. Together with Fig.~\ref{fig:network_resolved}, these results distinguish the computational effect of learned correction from the state-space consequence of the selected executed policy.

\begin{figure*}[t]
    \centering
    \includegraphics[width=\textwidth]{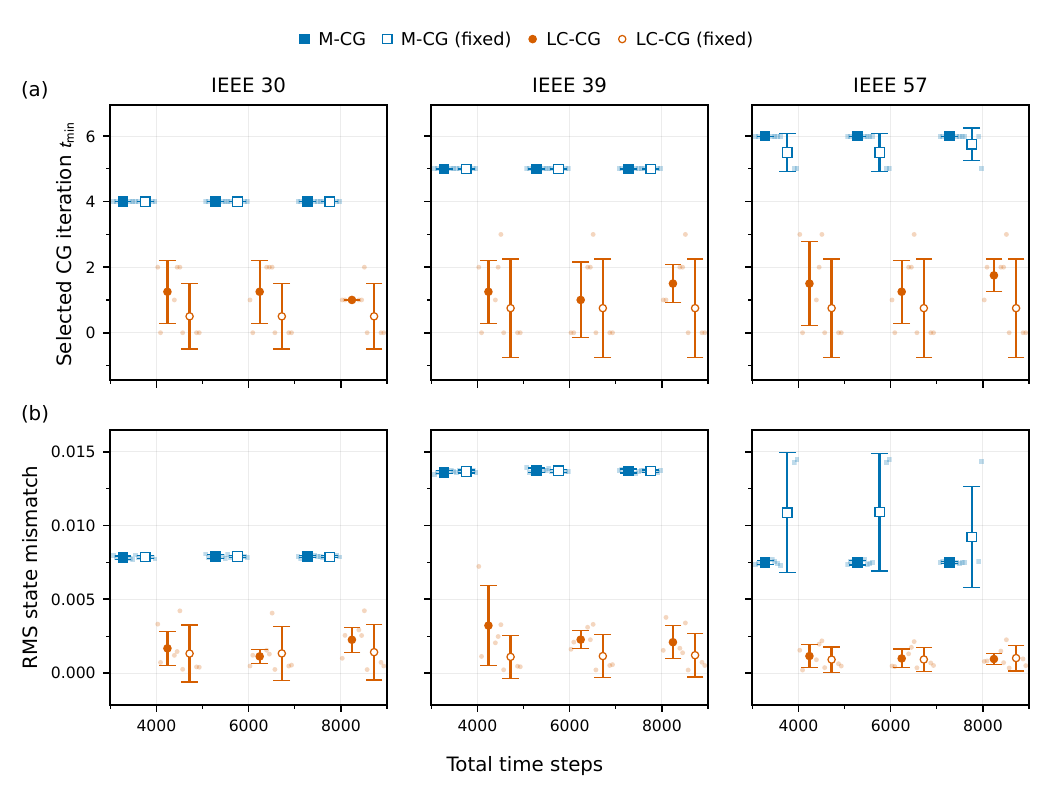}
    \caption{
    Horizon dependence under fixed and proportional commissioning for IEEE~30, IEEE~39, and IEEE~57, shown from left to right, with \(m=128\), \(\eta=0.020\), and \(d_{\rm cap}=64\), at total chronology lengths \(L\in\{4000,6000,8000\}\). Fixed commissioning uses \(C=400\), whereas proportional commissioning uses \(C=0.4L\), with deployment occupying the remaining \(L-C\) steps. (a) Each policy's selected zero-fallback CG iteration \(t_{\min}\). (b) Normalized RMS estimator-to-reference state mismatch [Eq.~\eqref{eq:normalized_rms_state_mismatch}] at each policy's own selected frontier. Filled symbols denote proportional commissioning and open symbols denote fixed commissioning. Symbols show the mean over four realizations, and error bars indicate one sample standard deviation across realizations.
    }
    \label{fig:horizon_scaling}
\end{figure*}

\subsection{Computational cost and tolerance dependence}
\label{subsec:runtime}

To determine whether displacement of the zero-fallback CG frontier yields a computational benefit after correction and verification costs are included, we compare seven execution policies on IEEE 14 with \(n=27\), \(m=64\), and 600 deployment steps following 400 commissioning steps. The comparison includes the selected M-CG and LC-CG policies, M-CG evaluated at the LC-CG iteration, warm-start CG, warm-start preconditioned conjugate gradient (PCG), and innovation- and information-form direct solves. At each admissibility tolerance, all policies process the same observation sequence and are initialized from the same commissioned state estimate and covariance, with iteration selections fixed before timing and a common residual certificate and verified fallback. Complete online runtime includes candidate generation, learner evaluation and correction where applicable, residual recomputation, certification, any invoked fallback, and the recursive update. Table~\ref{tab:runtime-complete} reports the sample mean and sample standard deviation over seven timing repetitions of the same deployment schedule, while reusable method-specific setup costs are reported separately. Implementation details and the timing protocol are given in Appendix~\ref{app:runtime_compar_proto}.

\begin{table*}[t]
\caption{
Complete online runtime and reusable method-specific setup cost for the IEEE 14 tolerance study on one realization with \(n=27\), \(m=64\), 400 commissioning steps, and 600 deployment steps. \(T_{\rm on}\) is the complete online time for one 600-step deployment. Entries \(a\pm b\) report the sample mean \(a\) and sample standard deviation \(b\) over seven repetitions of the same deployment schedule. \(T_{\rm off}\) is additional reusable method-specific setup measured once, excluding offline preparation common to all methods. Selected M-CG and LC-CG denote their respective zero-fallback frontier policies, with iteration counts \(t=(10,7,3,3)\) and \(t=(7,4,1,0)\), respectively, in increasing-tolerance order. ``M-CG at LC-CG iteration'' evaluates uncorrected M-CG at the iteration selected by LC-CG. Warm-start CG initializes each gain solve from the previous executed gain, while warm-start preconditioned conjugate gradient (PCG) additionally uses a fixed Cholesky preconditioner constructed from the final commissioning innovation covariance. The innovation-form direct method factorizes the current innovation covariance, whereas the information-form direct method uses reusable fixed-\(R\) quantities and a per-step information-form solve. Online timing includes candidate generation, correction where applicable, residual recomputation and certification, verified fallback when invoked, and the recursive update; post-step evidence capture, file input and output, and external orchestration are excluded.
}
\label{tab:runtime-complete}
\begin{ruledtabular}
\begin{tabular}{l|rr|rr|rr|rr}
Policy
& \multicolumn{2}{c|}{$\eta=10^{-3}$}
& \multicolumn{2}{c|}{$\eta=10^{-2}$}
& \multicolumn{2}{c|}{$\eta=10^{-1}$}
& \multicolumn{2}{c}{$\eta=0.2$} \\
& $T_{\rm on}$ (s) & $T_{\rm off}$ (s)
& $T_{\rm on}$ (s) & $T_{\rm off}$ (s)
& $T_{\rm on}$ (s) & $T_{\rm off}$ (s)
& $T_{\rm on}$ (s) & $T_{\rm off}$ (s) \\
\colrule
Selected M-CG
& $9.073\pm0.027$ & \textemdash
& $7.607\pm0.031$ & \textemdash
& $5.587\pm0.032$ & \textemdash
& $5.636\pm0.031$ & \textemdash \\

Selected LC-CG
& $7.780\pm0.007$ & 2.571
& $6.300\pm0.010$ & 1.640
& $4.758\pm0.008$ & 0.597
& $4.247\pm0.009$ & 0.206 \\

M-CG at LC-CG iteration
& $11.434\pm0.027$ & \textemdash
& $9.996\pm0.040$ & \textemdash
& $8.396\pm0.033$ & \textemdash
& $7.601\pm0.040$ & \textemdash \\

Warm-start CG
& $5.134\pm0.007$ & \textemdash
& $4.767\pm0.011$ & \textemdash
& $4.603\pm0.011$ & \textemdash
& $4.646\pm0.010$ & \textemdash \\

Warm-start PCG
& $5.268\pm0.024$ & 0.000506
& $4.811\pm0.021$ & 0.000504
& $4.533\pm0.026$ & 0.000494
& $4.564\pm0.018$ & 0.000480 \\

Direct, innovation form
& $3.971\pm0.022$ & \textemdash
& $4.011\pm0.024$ & \textemdash
& $3.979\pm0.029$ & \textemdash
& $4.029\pm0.024$ & \textemdash \\

Direct, information form
& $3.975\pm0.005$ & 0.000161
& $4.010\pm0.007$ & 0.000158
& $3.970\pm0.008$ & 0.000186
& $4.018\pm0.008$ & 0.000161 \\
\end{tabular}
\end{ruledtabular}
\end{table*}

Across the four tolerances, selected LC-CG has lower measured online runtime than selected M-CG, by \(14.3\%\), \(17.2\%\), \(14.8\%\), and \(24.6\%\) for \(\eta=10^{-3},10^{-2},10^{-1}\), and \(0.2\), respectively. Comparisons across tolerances also reflect the corresponding change in the admissibility requirement. The component profiles in Table~\ref{tab:runtime-decomposition} of  Appendix~\ref{app:supporting_numerical_results} attribute this reduction to lower candidate-generation cost exceeding the added learner, correction, and residual-recomputation costs. Uncorrected M-CG at the LC-CG iteration lies below its own zero-fallback frontier and is the slowest policy at every tolerance, consistent with the cost of fallback solves on rejected raw candidates; the reduction therefore derives from the correction rather than from the lower iteration count alone. At \(\eta=0.2\), LC-CG operates at \(t=0\), with no CG matrix--vector products; certification then dominates the online cost. Because the zero-start raw candidate is then the zero gain, the executed LC-CG gain is the bounded learned correction itself. The frozen ridge map acts on the RMS-scaled raw residual, which in this case is \(\rho_{k,0}^{\rm alg}=-\widetilde P_kH^{\mathsf T}\); its predicted coefficients are projected onto the correction ball before the resulting gain is assessed by the residual certificate. The runtime reduction relative to selected M-CG at \(\eta=0.2\) therefore reflects certificate-gated learned gain construction rather than repair of a nonzero CG iterate.

Warm-start CG and PCG remain faster at the first three tolerances, while LC-CG has the lower measured online runtime at \(\eta=0.2\); both direct-solve baselines are faster at all four tolerances. The LC-CG training and setup cost \(T_{\rm off}\) is incurred once and includes exact local-gain solves for correction-target generation, training CG solves and residual construction, correction-basis construction by PCA, correction-budget calibration, and ridge fitting. Common preparation and frontier selection are excluded. Assuming reuse without retraining and that the measured mean online cost per deployment remains representative, the additional offline cost relative to selected M-CG is recovered by the end of the second 600-step deployment at \(\eta=10^{-3}\) and \(10^{-2}\), and by the end of the first deployment at \(\eta=10^{-1}\) and \(0.2\). A crossover with direct solvers may occur for larger or repeatedly changing gain systems, particularly for sparse or matrix-free problems where factorization reuse is limited and shallow Krylov iteration plus correction and certification costs less than a fresh factorization.

\subsection{Quantum solver candidates}
\label{subsec:quantu_solve_candi}

This section examines how errors in quantum-generated gain candidates affect a classical recursive estimator. Complete gains reconstructed using VQLS or decoded from annealer-sampled QUBOs enter through the represented-gain interface of Sec.~\ref{subsec:solver_residual_interface}. Incomplete optimization, finite sampling, hardware imperfections, and encoding resolution can affect candidate quality. Bounded correction, residual checking, and verified fallback determine the executed gain, whose defect affects both the current update and the covariance governing subsequent gains. These operations and the residual--drift response evaluation are performed classically. All realizations use the IEEE 5 benchmark with \(n=9\) and \(m=2\). The 40-step commissioning interval comprises 10 steps for correction-basis construction, 10 for correction-radius calibration, and 20 for learner fitting, followed by 60 deployment steps.

\subsubsection{VQLS execution across simulated and hardware backends}

The Statevector realization uses \(\eta=0.01\), whereas the finite-shot Sampler and Origin realizations use \(\eta=0.05\). The Statevector and finite-shot Sampler realizations use a one-parameter one-qubit ansatz and the global VQLS objective; optimizer, sampling, and reconstruction settings are given in Appendix~\ref{app:vqls_reconstr_and_solver_diagn}.

We also realize the \(m=2\) VQLS construction using an Origin Wukong superconducting quantum computer. The variational parameters are optimized locally using the finite-shot Sampler and then frozen, while the terminal \(X\)- and \(Z\)-measurements used to reconstruct each complete gain are acquired on the Origin hardware. The Statevector, finite-shot Sampler, and Origin realizations are configured separately, with different tolerances and sampling conditions, and are therefore not matched backend comparisons.

\begin{figure*}[t]
  \centering
  \includegraphics[width=\textwidth]{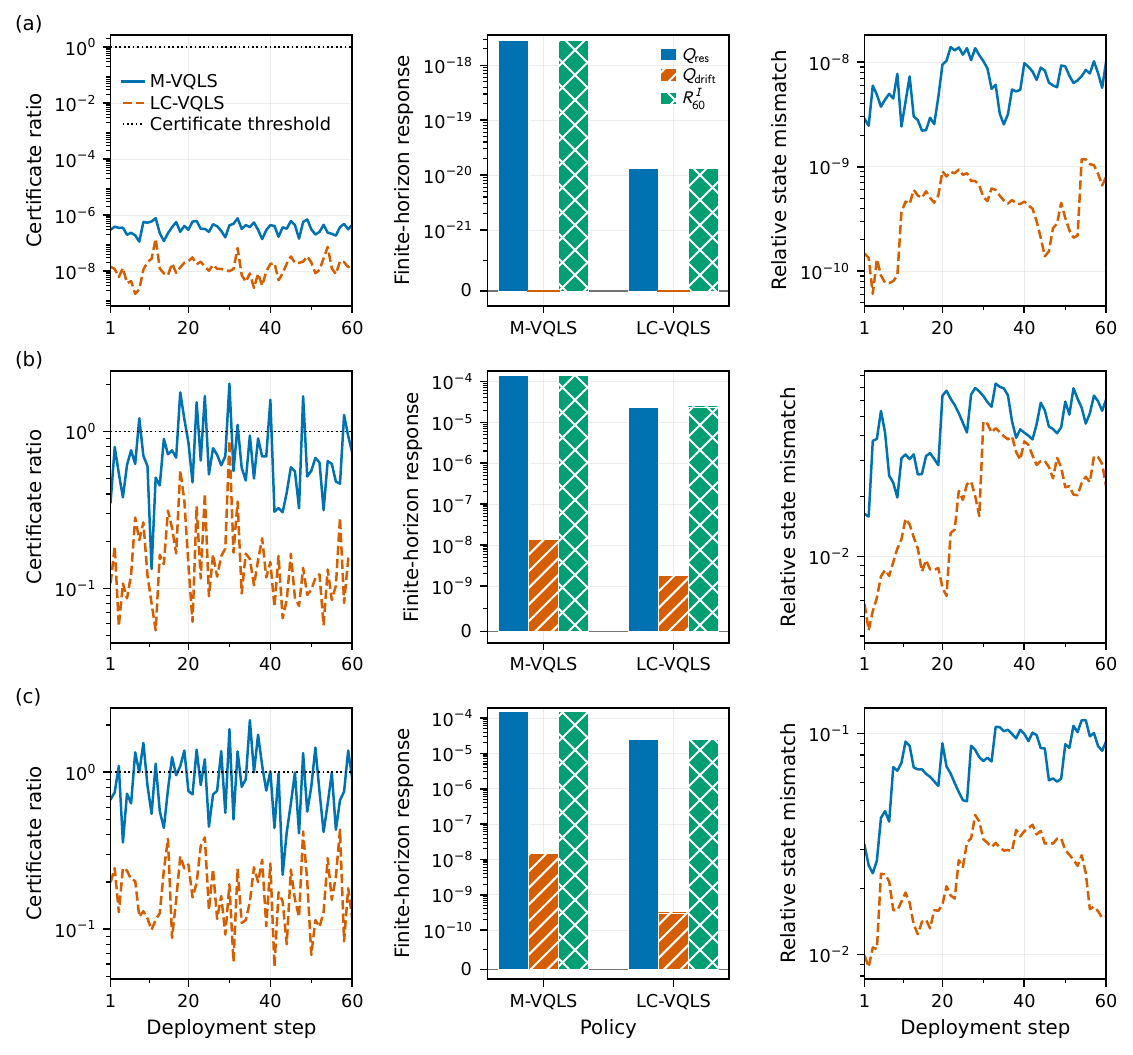}
  \caption{Complete-gain recursive execution of \(m=2\) variational quantum linear solver (VQLS) candidates on IEEE 5 with \(n=9\), using 40 commissioning steps followed by a 60-step deployment window. Rows show three separately configured realizations: (a) Statevector realization with \(\eta=0.01\), (b) finite-shot Sampler realization with \(\eta=0.05\), and (c) a hardware realization executed on the Origin quantum computer, with \(\eta=0.05\). In each row, the left panel shows the represented-residual certificate ratio for monitored VQLS (M-VQLS) and learned-corrected VQLS (LC-VQLS); the dotted line marks the acceptance threshold at unity. In the figure, \(Q_{\rm res}\), \(Q_{\rm drift}\), and \(R_{60}^{I}\) denote \(\mathcal Q_{60}^{\rm res}\), \(\mathcal Q_{60}^{\rm drift}\), and \(\mathcal R_{60}^{I}\), respectively. The middle panel reports these finite-horizon quantities, related by \(R_{60}^{I}=Q_{\rm res}-Q_{\rm drift}\), with the nonnegative drift contribution entering the response subtractively. The right panel shows the per-step relative state mismatch [Eq.~\eqref{eq:relative_state_mismatch}].}
  \label{fig:vqls_m2_execution_response}
\end{figure*}

Figure~\ref{fig:vqls_m2_execution_response} traces the complete execution chain from reconstructed VQLS gains through represented-residual certification and fallback to finite-horizon response and estimator-to-reference state mismatch. For the Statevector realization, both M-VQLS and LC-VQLS satisfy the residual certificate at every deployment step, providing a zero-fallback comparison in which learned correction reduces both the reported finite-horizon response and the realized state mismatch.

For the finite-shot Sampler realization, M-VQLS certifies 50 of 60 raw candidates and uses verified fallback on the remaining 10 steps. Along the LC-VQLS trajectory, 51 of the 60 raw candidates satisfy the certificate; bounded correction brings the remaining nine within the acceptance criterion, so all 60 corrected candidates execute without fallback. The resulting LC-VQLS policy has lower finite-horizon response and estimator-to-reference state mismatch than the executed M-VQLS policy.

For the Origin realization, 41 of the 60 M-VQLS raw candidates satisfy the certificate, with verified fallback used on the remaining 19 steps. Along the LC-VQLS trajectory, 38 raw candidates satisfy the certificate; after bounded correction, all 60 candidates satisfy it, corresponding to 22 certificate rescues and no fallback for LC-VQLS. The executed LC-VQLS policy also yields a lower finite-horizon response and estimator-to-reference state mismatch than M-VQLS.

Solver-level residual and timing diagnostics are reported in Appendix~\ref{app:vqls_residual_quality_and_solver_cost}. Over the evaluated innovation dimensions, the VQLS realizations incur substantially greater solver-level cost than CG and provide no evidence of a computational advantage; their role is to demonstrate solver-interface compatibility within the correction, certification, fallback, and recursive-response framework.

\subsubsection{QUBO execution on quantum annealing hardware}
\label{subsubsec:qubo_dwave}

We evaluate the same represented-gain interface using QUBO-generated candidates. Each innovation-space linear system is encoded with the binary construction of Appendix~\ref{app:binary_encod_of_the_qubo} at bit depth \(n_{\rm bit}=4\), sampled on a D-Wave Advantage2 quantum annealer, decoded into an explicit solution vector, and assembled across all \(n=9\) right-hand sides into a complete \(9\times2\) gain; the admissibility parameter is \(\eta=0.05\). Each QUBO then has \(mn_{\rm bit}=8\) logical binary variables. A decoded solution lies on the encoding grid and need not coincide with the continuous solution of the gain system even when the returned sample minimizes the QUBO objective; both this discretization error and any sampling suboptimality enter the raw defect \(A_k\) and are judged only through the recomputed represented residual. The commissioning and deployment gains are obtained from the D-Wave quantum processing unit samples, and the LC-QUBO corrector is fitted from the commissioning gains and frozen before deployment. Both policies execute throughout the deployment window without fallback.

\begin{figure}[t]
  \centering
  \includegraphics[width=\columnwidth]{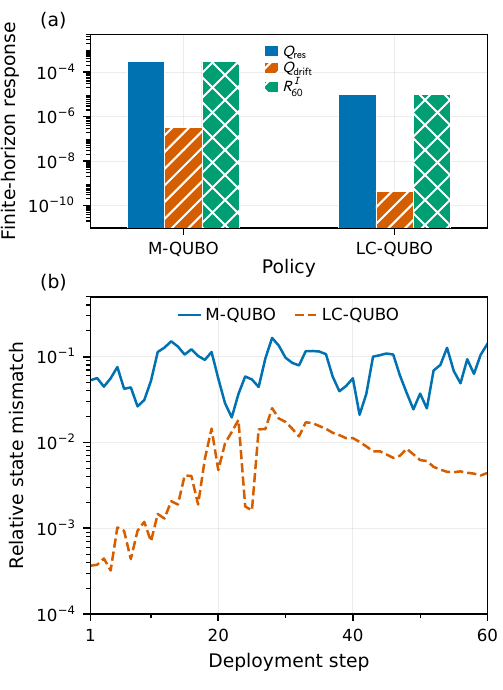}
  \caption{D-Wave execution of quadratic unconstrained binary optimization (QUBO)-generated gain candidates for recursive IEEE 5 estimation at \(m=2\), \(n_{\rm bit}=4\), and \(\eta=0.05\). Both policies execute without fallback. (a) Finite-horizon response.
  (b) Per-step relative state mismatch [Eq.~\eqref{eq:relative_state_mismatch}]. The LC-QUBO policy has lower finite-horizon response and realized state mismatch than M-QUBO.}
  \label{fig:ieee5_m2_dwave}
\end{figure}

Figure~\ref{fig:ieee5_m2_dwave} shows the resulting recursive consequence under matched zero-fallback execution. The identity-task response decreases from
\(\mathcal R_{60}^{I}=2.99\times10^{-4}\) for M-QUBO to
\(1.04\times10^{-5}\) for LC-QUBO, a factor of \(28.8\). As a deployment-level summary of the trajectories in Fig.~\ref{fig:ieee5_m2_dwave}(b), the normalized RMS estimator-to-reference state mismatch decreases from
\(8.53\times10^{-2}\) to \(9.52\times10^{-3}\), a factor of \(9.0\).
Because neither policy invokes fallback at this operating point, the comparison is between their executed QUBO-derived policies rather than identical raw candidates: after the initially shared gain system, M-QUBO and LC-QUBO evolve on their own recursive gain systems. Together with the VQLS results, this shows that the correction, certification, and finite-horizon response framework can be applied to complete gains reconstructed from two distinct quantum candidate-generation mechanisms.

\section{Discussion and Conclusions}
\label{sec:discu_and_conclu}

This paper connects two questions that arise when approximate computation is embedded in recursive estimation: what part of a computational defect can be repaired before a gain is executed, and what the executed defect does to the recursion afterward. Local repairability characterizes whether the declared correction class contains an action that brings the current defect within the local admissibility tolerance; the bounded learned corrector supplies one particular proposal within that class, and the independent residual criterion, with verified fallback, determines what is executed and bounds the executed defect by the same tolerance. Transferring this bound to the theoretical executed defect requires separate control of matrix-formation and covariance-propagation errors. The response theory then takes the executed defects as its input. At leading quadratic order, locally recentered computational defects contribute a fixed-reference exposure without resolving the feedback induced by their effect on later covariances and gains. The signed-quartic expansion identifies the first nonlinear recursive correction: innovation-covariance inflation increases the response, whereas local-gain reoptimization reduces it. Their competition determines the leading departure from the quadratic prediction, while the sixth-order remainder controls the unresolved higher-order contribution at fixed horizon.

The present theory isolates computational approximation under the stated fixed linear-Gaussian model and matched initialization. Changes in the physical model, network topology, measurement configuration, or covariance specification remain outside its scope. Such changes affect how a gain candidate is obtained, but not the represented-gain interface considered here: a compatible precomputed, reused, updated, or newly computed gain can be evaluated through the same residual interface and executable certificate. The finite-horizon response then depends on the defect of the gain actually executed rather than on the numerical route by which that candidate was generated.

The repairability analysis separates two distinct obstructions on local correction: unavailable correction directions and insufficient correction budget. Enlarging a nested correction subspace may make a previously subspace- or budget-obstructed defect repairable at the same correction radius. In the full gain space, only the budget obstruction remains. These geometric obstructions characterize what the declared correction class can achieve, not whether the learned proposal attains an admissible action or whether that action passes the residual certificate. Because correction acts on the represented candidate and certification on its recomputed residual, the execution framework is not tied to zero-start CG. Preconditioned, warm-started, or otherwise generated candidates can be evaluated through the same construction. The learned map evaluated in the classical CG study, however, is fitted specifically to zero-start CG defects. Whether a corrector trained on candidates from other numerical schemes produces additional certified frontier displacement beyond those schemes alone remains an open question.

The represented-gain interface is not tied to a particular quantum hardware modality. The realizations reported here use a gate-model superconducting processor and a quantum annealer. In the Origin realization, finite-shot Sampler optimization provides the VQLS parameters, while only the final \(X\)- and \(Z\)-measurements used for complete-gain reconstruction are acquired on superconducting hardware. For \(m=2\), the one-qubit circuit does not probe multiqubit connectivity or entangling-gate costs; these enter at larger innovation dimension~\cite{bravo2023variational}, motivating cross-platform studies on other architectures such as the trapped-ion JION system at Jülich~\cite{fzj2026jion}, which features an all-to-all connectivity between qubits. Hybrid quantum--classical learning or quantum sampling could also be used to propose corrections~\cite{ji2026quantumdeep}. Repairability remains determined by the raw defect and declared correction class, while acceptance is based on the recomputed residual. Potential benefits can therefore be assessed through certified recursive execution, including improved attainment or acceptance, lower complete cost, or lower finite-horizon response relative to optimized classical alternatives. More broadly, separating candidate generation from independently certified recursive execution provides a practical framework for investigating hybrid quantum--classical computation under common estimator-level criteria.

Beyond the power-grid-derived setting studied here, related computation--recursion constraints arise in two broader regimes. In large-scale geophysical data assimilation, the storage and linear-algebra costs of Kalman-type updates motivate Krylov and low-rank approximations, including conjugate-gradient and Lanczos constructions \cite{BardsleyEtAl2013,Freitag2020,LeProvost2022LowRankEnKF}. In hard-real-time feedback, latency and numerical-precision constraints directly shape recursive-estimator implementation. Predictive linear-quadratic-Gaussian control has been demonstrated in adaptive optics at kilohertz update rates \cite{poyneer2023lqg}, while real-time Kalman estimation of levitated mechanical motion has been implemented on a field-programmable gate array using fixed-point arithmetic \cite{setter2018realtime} and extended to feedback control in the quantum regime \cite{magrini2021realtime}. These settings illustrate two distinct sources of computation-limited recursive inference: scale-driven approximate linear algebra and latency- or precision-limited execution. Extending the present framework to low-rank or ensemble covariance representations, time-varying models, and finite-precision implementations is therefore a natural direction for future work.

More generally, when approximate computation is embedded in recursive estimation, the design question is not simply how to compute faster or how to learn better, but how a local computational defect changes the subsequent recursion, what part of that defect can be repaired before execution, and whether the resulting update satisfies the declared criterion for execution. The finite-horizon response theory makes the recursive consequence quantitative, while the repairability and residual-certificate analyses characterize the possibilities for correction and execution. Once executed, computational error is no longer merely a local numerical perturbation; it becomes part of the recursive inference dynamics.

\begin{acknowledgments}

Y.J. and D.W. acknowledge support from the project Entwicklungspartnerschaft Ionenfallen-Quantencomputer in NRW (EPIQ), funded by the Ministerium für Kultur und Wissenschaft des Landes Nordrhein-Westfalen (MKW NRW).
D.W. acknowledges support from the project QEC4QEA that has received funding from the European High Performance Computing Joint Undertaking (EuroHPC JU) under Grant Agreement No 101194322, and from the project Q-Neko that has received funding from the European High Performance Computing Joint Undertaking (EuroHPC JU) under Grant Agreement No 101241875, and from the project Jülich UNified Infrastructure for Quantum Computing (JUNIQ) that has received funding from the German Federal Ministry of Education and Research (BMBF) and the Ministry of Culture and Science of the State of North Rhine-Westphalia.
The authors gratefully acknowledge the Jülich Supercomputing Centre (\url{https://www.fz-juelich.de/ias/jsc}) for funding this project by providing computing time on the D-Wave Advantage™ System JUPSI through the Jülich UNified Infrastructure for Quantum computing (JUNIQ).
The authors gratefully acknowledge computing time on the supercomputer JURECA~\cite{JURECA} at Forschungszentrum Jülich under grant no.~eqip.
The authors also acknowledge Origin Quantum Computing Technology for access to the Origin Quantum Cloud platform and the Origin Wukong superconducting quantum computing resources.
Part of this work was supported by the project “Quantum-based Energy Grids (QuGrids),” funded under the programme “Profilbildung 2022,” an initiative of the Ministry of Culture and Science of the State of North Rhine-Westphalia, and by the project QSolid (Grant No. 13N16149), funded by the BMBF within the funding programme “Quantum Technologies – From Basic Research to Market.”

\end{acknowledgments}

\appendix

\section{Notation and Covariance Identities}
\label{app:statistical_covariance_identities}

\subsection{Notation table}
\label{app:notation}

Table~\ref{tab:notation_kalman_filter} summarizes the notation used throughout this paper.

\begin{table*}[tb]\small
\centering
\caption{\label{tab:notation_kalman_filter} Principal notation for recursive estimation, constrained correction, residual certification, and finite-horizon response.}
\begin{ruledtabular}
\begin{tabular}{ll@{\qquad}ll}
Symbol & Meaning & Symbol & Meaning \\
\hline
$x_k \in \mathbb{R}^n$ & Hidden state vector at time $k$ & $E_k$ & Executed local computational defect \\
$z_k \in \mathbb{R}^m$ & Measurement vector at time $k$ & $\Gamma_k$ & Local-reference gain drift \\
$\lambda_{\max/\min}(A)$ & Extremal eigenvalues of symmetric $A$ & $\Theta_t$ & No-intercept ridge map at checkpoint $t$ \\
$\bar{u}_{k-1} \in \mathbb{R}^p$ & Deterministic input on transition $k-1 \to k$ & $\rho_k(L)$ & Residual of gain $L$ \\
$\xi_{k-1} \in \mathbb{R}^p$ & Input perturbation, $\xi_{k-1} \sim \mathcal{N}(0,V_{k-1})$ & $\rho_k^{\mathrm{alg}}$ & Raw solver residual \\
$F \in \mathbb{R}^{n \times n}$ & State transition matrix & $\rho_k^{\mathrm{cand}}$ & Corrected-candidate residual \\
$H \in \mathbb{R}^{m \times n}$ & Measurement (observation) matrix & $\rho_k^{\mathrm{exec}}$ & Executed residual \\
$B \in \mathbb{R}^{n \times p}$ & Input matrix & $\phi_{k,t}$ & Residual feature vector at checkpoint $t$ \\
$Q_{k-1} \in \mathbb{R}^{n \times n}$ & Process noise covariance, $k-1 \to k$ & $R_t^{\rm feat}$ & Frozen diagonal matrix of feature scales \\
$R_k \in \mathbb{R}^{m \times m}$ & Measurement noise covariance at time $k$ & $\kappa(S_k)$ & Spectral condition number $\sigma_{\max}(S_k)/\sigma_{\min}(S_k)$ \\
$V_{k-1} \in \mathbb{R}^{p \times p}$ & Input-error covariance, $k-1 \to k$ & $\delta_c \ge 0$ & Correction-budget radius \\
$\widetilde{x}_k^\star,\widetilde{P}_k^\star$ & Reference predicted state and covariance & $\mathcal V_{\rm corr}$ & Corrector subspace in gain space \\
$\widehat{x}_k^\star,\widehat{P}_k^\star$ & Reference posterior state and covariance & $\Pi_{\rm corr}$ & Orthogonal projector onto $\mathcal V_{\rm corr}$ \\
$r_k^\star,S_k^\star$ & Reference innovation and covariance & $A_k^\parallel$ & Projection of $A_k$ onto $\mathcal V_{\rm corr}$ \\
$K_k^\star$ & Reference Kalman gain & $A_k^\perp$ & Component of $A_k$ outside $\mathcal V_{\rm corr}$ \\
$\widetilde{x}_k,\widetilde{P}_k$ & Implemented predicted state and covariance & $d$ & Dimension of the corrector subspace $\mathcal V_{\rm corr}$ \\
$\widehat{x}_k,\widehat{P}_k$ & Implemented posterior state and covariance & $\varepsilon_k$ & Minimum postcorrection defect norm \\
$r_k,S_k$ & Implemented innovation and covariance &  $\delta_{\rm adm}$ & Local gain-defect tolerance \\
$K_k^{\mathrm{loc}}$ & Exact gain for \(\widetilde P_k\) & $n$ & State dimension \\
$K_k^{\mathrm{alg}}$ & Raw solver output & $m$ & Measurement dimension \\
$K_k^{\mathrm{cand}}$ & Corrected candidate gain & $p$ & Input dimension \\
$\widetilde{K}_k$ & Executed gain & $\sigma_{\max / \min}(A)$ & Maximal / minimal singular values of $A$ \\
$\Delta_k^{\mathrm{ml}}$ & Candidate correction & $A \succ 0$ & $A$ is positive definite \\
$A_k$ & Raw local computational defect & $A \succeq 0$ & $A$ is positive semidefinite \\
$\ell_k$ & Positive lower bound on $\sigma_{\min}(S_k)$ & $t$ & CG checkpoint or iteration budget \\
$\tau$ & Finite-horizon task-response budget & $\mathcal G$ & Execution environment sigma-field \\
$\mathcal Y_k$ & Measurement information available up to time \(k\) & $\mathcal F_k^-$ & Information available before incorporating \(z_k\) \\
$\mathcal F_k^+$ & Information available after incorporating \(z_k\) & $\sigma(X_1,\ldots,X_j)$
& Sigma-field generated by $X_1,\ldots,X_j$\\
$w_{k-1}$ & Process noise & $v_k$ & Measurement noise \\
$\mu_0$ & Initial state mean & $P_0$ & Initial state covariance \\
$\underline r$ & Uniform measurement-covariance lower bound & $T$ & Finite horizon \\
$\mathcal J_k(P,K)$ & Joseph covariance map & $\mathcal C_k$ & Bounded correction class \\
$d_{\rm cap}$ & Prescribed correction-dimension cap & $\nu_t$ & Numerical rank of checkpoint-$t$ defect samples \\
$d_{{\rm eff},t}$ & Supported correction dimension & $U_{t,i}$ & Correction-basis matrix at checkpoint $t$ \\
$\delta_{c,t}^{\star}$ & Calibrated checkpoint-local correction radius & $\widehat c_{k,t}$ & Projected predicted correction coefficients \\
$\Xi_k$ & Posterior covariance mismatch & $e_k$ & Estimator-to-reference state difference \\
$\Phi_k^\star$ & Reference closed-loop matrix & $\Phi_k^{\rm loc}$ & Local closed-loop matrix \\
$M_k$ & Positive-semidefinite task metric & $\Psi_{t,j}^\star$ & Reference propagator from $j$ to $t$ \\
$W_{j,T}^M$ & Finite-horizon task-response operator & $\mathcal R_T^M$ & Task-weighted covariance response \\
$\mathcal Q_T^{\rm res}$ & Accumulated residual contribution & $\mathcal Q_T^{\rm drift}$ & Subtractive gain-drift contribution \\
$\delta$ & Maximum executed defect norm over $1{:}T$ & $\bar\delta$ & Fixed upper bound on $\delta$ \\
$\mathcal B_T^M(E)$ & Fixed-reference quadratic response & $\Omega_k$ & Leading quadratic covariance response \\
$\mathcal C_T^{M,\mathrm{infl}}(E)$ & Quartic innovation-inflation contribution & $\mathcal C_T^{M,\mathrm{reopt}}(E)$ & Quartic local-gain-reoptimization contribution \\
$C_{6,T}^M(\bar\delta)$ & Uniform sixth-order remainder constant & $\delta_{{\rm adm},k}^{(\tau)}$ & Local tolerance for allocation $b_{k,T}$ \\
$q_{k,T}$ & Exact stepwise residual contribution & $b_{k,T}$ & Prescribed stepwise task-response allocation \\
$c_{k,T}^{W,\ell}$ & Task-aware upper bound on $q_{k,T}$ & $c_{k,T}^{\rm rad}$ & Radial upper bound on $q_{k,T}$ \\
$b_{k,T}^{\rm adm}$ & Radial allocation induced by $\delta_{\rm adm}$ & $\tau_T^{\rm adm}$ & Induced finite-horizon response bound \\
\end{tabular}
\end{ruledtabular}
\end{table*}

\subsection{Joseph identities}
\label{app:joseph_identities}

For \(P\succeq0\), define the innovation covariance and the exact Kalman gain associated with \(P\) by
\begin{equation}
	S_k(P):=HPH^{\mathsf T}+R_k,
	\qquad
	K_k(P):=PH^{\mathsf T}S_k(P)^{-1}.
	\label{eq:appendix_gain_def}
\end{equation}
Because \(R_k\succeq\underline r I_m\) with \(\underline r>0\), we have \(S_k(P)\succ0\), so \(K_k(P)\) is well defined. For an arbitrary gain \(L\in\R^{n\times m}\), expanding the Joseph map gives
\[
\mathcal{J}_k(P,L)
=
P-LHP-PH^{\mathsf T}L^{\mathsf T}
+LS_k(P)L^{\mathsf T}.
\]
Using
\[
K_k(P)S_k(P)=PH^{\mathsf T},
\qquad
S_k(P)K_k(P)^{\mathsf T}=HP,
\]
and completing the square yields the exact identity
\begin{align}
	\mathcal{J}_k(P,L)
	={}&
	\mathcal{J}_k(P,K_k(P))
	\nonumber\\
	&+
	\bigl(L-K_k(P)\bigr)
	S_k(P)
	\bigl(L-K_k(P)\bigr)^{\mathsf T}.
	\label{eq:appendix_joseph_identity}
\end{align}
At the exact Kalman gain, the Joseph update reduces to the standard Kalman covariance update. Indeed,
\[
\begin{aligned}
	\mathcal{J}_k(P,K_k(P))
	&=
	P-K_k(P)HP-PH^{\mathsf T}K_k(P)^{\mathsf T}\\
	&\qquad+K_k(P)S_k(P)K_k(P)^{\mathsf T}\\
	&=
	P-K_k(P)HP\\
	&=
	P-PH^{\mathsf T}S_k(P)^{-1}HP.
\end{aligned}
\]
Hence
\begin{equation}
	\mathcal{J}_k(P,K_k(P))
	=
	(I-K_k(P)H)P
	=
	P-PH^{\mathsf T}S_k(P)^{-1}HP.
	\label{eq:appendix_riccati_form}
\end{equation}
The simplification in Eq.~\eqref{eq:appendix_riccati_form} relies on the exact-gain relation \(K_k(P)S_k(P)=PH^{\mathsf T}\) and does not hold for an arbitrary gain. In particular, setting \(L=K_k(P)+E\), with \(E\in\R^{n\times m}\), in Eq.~\eqref{eq:appendix_joseph_identity} gives
\begin{equation}
	\mathcal{J}_k(P,K_k(P)+E)-\mathcal{J}_k(P,K_k(P))
	=
	ES_k(P)E^{\mathsf T}
	\succeq0.
	\label{eq:appendix_joseph_perturbation}
\end{equation}

\subsection{Covariance mismatch and gain drift}
\label{app:covariance_mismatch_and_gain_drift}

For the current implemented prediction covariance, define the exact same-prior posterior covariance
\begin{equation}
	\widehat P_k^{\rm loc}
	:=
	\mathcal{J}_k(\widetilde P_k,K_k^{\rm loc}).
	\label{eq:local_cov}
\end{equation}
Equation~\eqref{eq:appendix_riccati_form} gives
\[
\widehat P_k^{\rm loc}
=
(I-K_k^{\rm loc}H)\widetilde P_k,
\]
while Eq.~\eqref{eq:appendix_joseph_perturbation}, with \(P=\widetilde P_k\) and \(E=E_k\), gives
\begin{equation}
	\widehat P_k-\widehat P_k^{\rm loc}
	=
	E_kS_kE_k^{\mathsf T}
	\succeq0.
	\label{eq:local_executed_covariance_injection}
\end{equation}
Consequently,
\[
\|\widehat P_k-\widehat P_k^{\rm loc}\|_F
\le
\|S_k\|_2\|E_k\|_F^2.
\]
Equation~\eqref{eq:local_executed_covariance_injection} is a same-prior identity; it does not imply that the global covariance mismatch \(\widehat P_k-\widehat P_k^\star\) equals \(E_kS_kE_k^{\mathsf T}\).

For the prediction-covariance mismatch, define
\begin{equation}
\Delta P_k^-
:=
\widetilde P_k-\widetilde P_k^\star
=
F\Xi_{k-1}F^{\mathsf T}.
\label{eq:prior_covariance_mismatch}
\end{equation}
Using
\begin{gather*}
K_k^{\rm loc}S_k=\widetilde P_kH^{\mathsf T},
\qquad
K_k^\star S_k^\star=\widetilde P_k^\star H^{\mathsf T},
\\
S_k-S_k^\star=H\Delta P_k^-H^{\mathsf T},
\end{gather*}
the gain drift satisfies
\begin{align*}
\Gamma_kS_k
&=
(I-K_k^\star H)\Delta P_k^-H^{\mathsf T},
\\
\Gamma_kS_k^\star
&=
(I-K_k^{\rm loc}H)\Delta P_k^-H^{\mathsf T}.
\end{align*}
Equivalently,
\[
\Gamma_k
=
\Phi_k^\star\Xi_{k-1}(HF)^{\mathsf T}S_k^{-1}
=
\Phi_k^{\rm loc}\Xi_{k-1}(HF)^{\mathsf T}(S_k^\star)^{-1},
\]
whose first representation is Eq.~\eqref{eq:gain_drift_exact} in the main text.

\begin{proof}[Proof of Proposition~\ref{prop:exact_covariance_recursions}]
Write \(\widehat{P}_k = \mathcal{J}_k(\widetilde{P}_k, \widetilde{K}_k)\) and \(\widehat{P}_k^\star = \mathcal{J}_k(\widetilde{P}_k^\star, K_k^\star)\). Insert the exact local gain to obtain
\begin{align}
    \Xi_k
    &= \bigl[
        \mathcal{J}_k(\widetilde{P}_k, \widetilde{K}_k)
        - \mathcal{J}_k(\widetilde{P}_k, K_k^{\mathrm{loc}})
    \bigr]
    \nonumber
    \\
    &\quad+
    \bigl[
        \mathcal{J}_k(\widetilde{P}_k, K_k^{\mathrm{loc}})
        - \mathcal{J}_k(\widetilde{P}_k^\star, K_k^{\mathrm{loc}})
    \bigr]
    \nonumber
    \\
    &\quad+
    \bigl[
        \mathcal{J}_k(\widetilde{P}_k^\star, K_k^{\mathrm{loc}})
        - \mathcal{J}_k(\widetilde{P}_k^\star, K_k^\star)
    \bigr].
    \label{eq:split_III}
\end{align}
The first term equals \(E_kS_kE_k^{\mathsf T}\) by Eq.~\eqref{eq:appendix_joseph_perturbation}. For the second term the gain is fixed, so the measurement-noise terms cancel and Eq.~\eqref{eq:prior_covariance_mismatch} gives
\[
\mathcal{J}_k(\widetilde P_k,K_k^{\rm loc})
-
\mathcal{J}_k(\widetilde P_k^\star,K_k^{\rm loc})
=
\Phi_k^{\rm loc}\Xi_{k-1}(\Phi_k^{\rm loc})^{\mathsf T}.
\]
For the third term, \(K_k^\star\) is the exact gain for \(\widetilde P_k^\star\), while \(K_k^{\rm loc}=K_k^\star+\Gamma_k\). Equation~\eqref{eq:appendix_joseph_perturbation} therefore gives
\[
\mathcal{J}_k(\widetilde P_k^\star,K_k^{\rm loc})
-
\mathcal{J}_k(\widetilde P_k^\star,K_k^\star)
=
\Gamma_kS_k^\star\Gamma_k^{\mathsf T}.
\]
Summing these terms proves the local-centered identity of Eq.~\eqref{eq:covariance_mismatch_recursions}.

For the reference-centered form, add and subtract the Joseph covariance obtained by applying the reference gain \(K_k^\star\) to the implemented prior covariance \(\widetilde P_k\). Since $K_k^{\mathrm{loc}}$ is optimal for $\widetilde P_k$ and $K_k^\star=K_k^{\mathrm{loc}}-\Gamma_k$, Eq.~\eqref{eq:appendix_joseph_identity} gives
\[
\mathcal{J}_k(\widetilde P_k,K_k^\star)
=
\mathcal{J}_k(\widetilde P_k,K_k^{\mathrm{loc}})
+\Gamma_kS_k\Gamma_k^{\mathsf T}.
\]
For fixed gain $K_k^\star$,
\[
\mathcal{J}_k(\widetilde P_k,K_k^\star)
-
\mathcal{J}_k(\widetilde P_k^\star,K_k^\star)
=
\Phi_k^\star\Xi_{k-1}(\Phi_k^\star)^{\mathsf T}.
\]
Therefore
\[
\mathcal{J}_k(\widetilde P_k,K_k^{\mathrm{loc}})
-
\mathcal{J}_k(\widetilde P_k^\star,K_k^\star)
=
\Phi_k^\star\Xi_{k-1}(\Phi_k^\star)^{\mathsf T}
-
\Gamma_kS_k\Gamma_k^{\mathsf T}.
\]
Adding the executed same-prior injection $E_kS_kE_k^{\mathsf T}$ proves Eq.~\eqref{eq:covariance_mismatch_recursions}.
In the local-centered identity, all three terms are positive semidefinite whenever \(\Xi_{k-1}\succeq0\); induction from \(\Xi_0\succeq0\) proves the final assertion.
\end{proof}

\section{Finite-Horizon Response Analysis}\label{app:proofs}
\label{app:finite_horizon_respon_analy}

\subsection{Exact residual-response identity}
\label{app:exact_residual_accounting}

\begin{proof}[Residual--drift decomposition]
Since matched initialization gives \(\Xi_0=0\), unrolling the reference-centered recursion~\eqref{eq:covariance_mismatch_recursions} gives
\[
\Xi_t
=
\sum_{j=1}^{t}
\Psi_{t,j}^\star
\left(
E_jS_jE_j^{\mathsf T}
-
\Gamma_jS_j\Gamma_j^{\mathsf T}
\right)
(\Psi_{t,j}^\star)^{\mathsf T}.
\]
By Proposition~\ref{prop:exact_covariance_recursions}, matched initialization implies \(\Xi_k\succeq0.\)
Moreover, unrolling the reference-centered recursion and using
\(\Gamma_jS_j\Gamma_j^{\mathsf T}\succeq0\) gives
\[
0\preceq\Xi_k
\preceq
\sum_{j=1}^{k}
\Psi_{k,j}^\star
E_jS_jE_j^{\mathsf T}
(\Psi_{k,j}^\star)^{\mathsf T}.
\]
By the definition~\eqref{eq:task_risk_definition},
\[
\mathcal R_T^M
=
\sum_{t=1}^{T}
\operatorname{tr}(M_t\Xi_t).
\]
Substitution of the preceding representation, interchange of the two finite sums, and cyclic invariance of the trace give the pathwise identity
\[
\mathcal R_T^M
=
\sum_{j=1}^{T}
\operatorname{tr}\!\left[
W_{j,T}^M
E_jS_jE_j^{\mathsf T}
\right] -
\sum_{j=1}^{T}
\operatorname{tr}\!\left[
W_{j,T}^M
\Gamma_jS_j\Gamma_j^{\mathsf T}
\right].
\]
Since \(\rho_j^{\rm exec}=E_jS_j\) and \(S_j\succ0\),
\[
E_jS_jE_j^{\mathsf T}
=
\rho_j^{\rm exec}
S_j^{-1}
(\rho_j^{\rm exec})^{\mathsf T}.
\]
Substitution yields Eq.~\eqref{eq:exact_residual_risk_accounting} in the main text. Each gain-drift term is nonnegative because \(W_{j,T}^M\succeq0\) and \(S_j\succ0\), which gives Eq.~\eqref{eq:residual_risk_slack}.
\end{proof}

\subsection{Proof of the signed-quartic response}
\label{app:proof_of_quart_finit_horiz}

\begin{proof}[Proof of Theorem~\ref{thm:quartic_finit_horiz_respon}]
The argument follows from the exact covariance identities and uses only deterministic fixed-horizon recursive bounds.

Because $\Xi_0=0$ and Proposition~\ref{prop:exact_covariance_recursions} gives $\Xi_k\succeq0$, one has
\begin{equation}
\|\Xi_k\|_F\le \tr(\Xi_k).
\label{eq:quartic_psd_trace_bound}
\end{equation}
Moreover, Eq.~\eqref{eq:prior_covariance_mismatch} together with the innovation-covariance definitions in Eqs.~\eqref{eq:reference_recursion} and~\eqref{eq:executed_recursion} implies
\begin{equation}
S_k-S_k^\star
=(HF)\Xi_{k-1}(HF)^{\mathsf T}\succeq0,
\label{eq:quartic_innovation_difference}
\end{equation}
and hence
\begin{equation}
\|S_k\|_2
\le
\|S_k^\star\|_2
+
\|HF\|_2^2\tr(\Xi_{k-1}).
\label{eq:quartic_innovation_norm_bound}
\end{equation}

Taking the trace of the reference-centered recursion~\eqref{eq:covariance_mismatch_recursions}, dropping its negative-semidefinite gain-drift term, and using $\|E_k\|_F\le\delta$ gives
\begin{align}
\tr(\Xi_k)
&\le
\|\Phi_k^\star\|_2^2\tr(\Xi_{k-1})
+
\|S_k\|_2\|E_k\|_F^2
\nonumber\\
&\le
\left(\|\Phi_k^\star\|_2^2+\|HF\|_2^2\delta^2\right)
\tr(\Xi_{k-1})
+
\|S_k^\star\|_2\delta^2.
\label{eq:quartic_trace_recursion}
\end{align}
For a fixed cap $\bar\delta$, define the finite recursive constants
\begin{align}
C_{\Xi,0}(\bar\delta)&:=0,
\nonumber\\
C_{\Xi,k}(\bar\delta)
&:=
\left(\|\Phi_k^\star\|_2^2+\|HF\|_2^2\bar\delta^2\right)
C_{\Xi,k-1}(\bar\delta)
+
\|S_k^\star\|_2.
\label{eq:quartic_covariance_constants}
\end{align}
Induction in Eq.~\eqref{eq:quartic_trace_recursion}, together with Eq.~\eqref{eq:quartic_psd_trace_bound}, yields
\begin{equation}
\|\Xi_k\|_F
\le
C_{\Xi,k}(\bar\delta)\delta^2,
\qquad 1\le k\le T.
\label{eq:quartic_covariance_bound}
\end{equation}
The exact gain-drift identity~\eqref{eq:gain_drift_exact} and $S_k\succeq R_k\succeq\underline r I$ then give
\begin{equation}
\begin{aligned}
\|\Gamma_k\|_F
&\le
C_{\Gamma,k}(\bar\delta)\delta^2,
\\
C_{\Gamma,k}(\bar\delta)
&:=
\frac{\|\Phi_k^\star\|_2\|HF\|_2}{\underline r}
C_{\Xi,k-1}(\bar\delta).
\end{aligned}
\label{eq:quartic_gain_drift_bound}
\end{equation}
This proves Eq.~\eqref{eq:second_order_covariance_gain_drift} on every fixed finite horizon.

We next compare the exact covariance mismatch with the leading response~\eqref{eq:leading_covariance_response}. Subtracting that recursion from the exact reference-centered recursion~\eqref{eq:covariance_mismatch_recursions} and using Eq.~\eqref{eq:quartic_innovation_difference} gives the exact identity
\begin{align}
	\Xi_k-\Omega_k
	={}&
	\Phi_k^\star(\Xi_{k-1}-\Omega_{k-1})(\Phi_k^\star)^{\mathsf T}
	\nonumber\\
	&+
	E_kHF\,\Xi_{k-1}(HF)^{\mathsf T}E_k^{\mathsf T}
	-
	\Gamma_kS_k\Gamma_k^{\mathsf T}.
\end{align}
For $0\le\delta\le\bar\delta$, Eq.~\eqref{eq:quartic_innovation_norm_bound} and Eq.~\eqref{eq:quartic_covariance_bound} imply
\[
\|S_k\|_2
\le
C_{S,k}(\bar\delta)
:=
\|S_k^\star\|_2
+
\|HF\|_2^2 C_{\Xi,k-1}(\bar\delta)\bar\delta^2.
\]
Define
\begin{align*}
	C_{\Xi-\Omega,0}(\bar\delta)&:=0,\\
	C_{\Xi-\Omega,k}(\bar\delta)
	&:=
	\|\Phi_k^\star\|_2^2 C_{\Xi-\Omega,k-1}(\bar\delta)
	+\|HF\|_2^2 C_{\Xi,k-1}(\bar\delta)
	\\
	&\quad
	+C_{S,k}(\bar\delta)C_{\Gamma,k}(\bar\delta)^2.
\end{align*}
Using $\|AXB\|_F\le\|A\|_2\|X\|_F\|B\|_2$ and $\|E_k\|_2\le\|E_k\|_F\le\delta$, induction yields
\begin{equation}
	\|\Xi_k-\Omega_k\|_F
	\le
	C_{\Xi-\Omega,k}(\bar\delta)\delta^4,
	\qquad 0\le k\le T.
	\label{eq:quartic_covariance_remainder_bound}
\end{equation}
Consequently,
\begin{equation}
	\|\Omega_k\|_F
	\le
	C_{\Omega,k}(\bar\delta)\delta^2,
	\quad
	C_{\Omega,k}(\bar\delta)
	:=
	C_{\Xi,k}(\bar\delta)
	+
	\bar\delta^2 C_{\Xi-\Omega,k}(\bar\delta).
	\label{eq:quartic_omega_bound}
\end{equation}

The resolvent identity and Eq.~\eqref{eq:quartic_innovation_difference} give
\begin{align}
	S_k^{-1}-(S_k^\star)^{-1}
	&=
	-S_k^{-1}(S_k-S_k^\star)(S_k^\star)^{-1},
	\nonumber\\
	\|S_k^{-1}-(S_k^\star)^{-1}\|_2
	&\le
	\frac{\|HF\|_2^2}{\underline r^2}
	C_{\Xi,k-1}(\bar\delta)\delta^2.
	\label{eq:quartic_inverse_remainder_bound}
\end{align}
Therefore the exact gain-drift identity~\eqref{eq:gain_drift_exact} satisfies
\begin{align}
	&\left\|
	\Gamma_k
	-
	\Phi_k^\star\Omega_{k-1}(HF)^{\mathsf T}(S_k^\star)^{-1}
	\right\|_F
	\nonumber\\
	&\quad\le
	\|\Phi_k^\star\|_2\|HF\|_2
	\left[
	\frac{C_{\Xi-\Omega,k-1}(\bar\delta)}{\underline r}
	\right.
	\nonumber\\
	&\qquad\left.
	+
	\frac{\|HF\|_2^2}{\underline r^2}
	C_{\Omega,k-1}(\bar\delta)C_{\Xi,k-1}(\bar\delta)
	\right]\delta^4.
	\label{eq:quartic_gain_drift_remainder_bound}
\end{align}

We next prove the two sixth-order score expansions. Since $\rho_k^{\mathrm{exec}}=E_kS_k$, Eq.~\eqref{eq:residual_risk_contribution} gives exactly
\begin{equation}
\mathcal Q_T^{\mathrm{res}}
=
\sum_{k=1}^{T}
\tr\!\left[W_{k,T}^M E_kS_kE_k^{\mathsf T}\right].
\label{eq:quartic_residual_E_form}
\end{equation}
Using
\[
S_k
=
S_k^\star
+
HF\,\Omega_{k-1}(HF)^{\mathsf T}
+
HF\,(\Xi_{k-1}-\Omega_{k-1})(HF)^{\mathsf T},
\]
we obtain the exact decomposition
\begin{align}
	\mathcal Q_T^{\rm res}
	={}&
	\mathcal B_T^M(E)
	+
	\mathcal C_T^{M,\mathrm{infl}}(E)
	\nonumber\\
	&+
	\sum_{k=1}^{T}
	\tr\!\left[
	W_{k,T}^M
	E_kHF\,(\Xi_{k-1}-\Omega_{k-1})(HF)^{\mathsf T}E_k^{\mathsf T}
	\right].
\end{align}
Hence, by Eq.~\eqref{eq:quartic_covariance_remainder_bound},
\begin{equation}
	\left|
	\mathcal Q_T^{\rm res}
	-
	\mathcal B_T^M(E)
	-
	\mathcal C_T^{M,\mathrm{infl}}(E)
	\right|
	\le
	C_{\mathrm{res},6,T}^M(\bar\delta)\delta^6,
	\label{eq:quartic_residual_sixth_order_bound}
\end{equation}
where
\begin{equation}
	C_{\mathrm{res},6,T}^M(\bar\delta)
	:=
	\|HF\|_2^2
	\sum_{k=1}^{T}
	\|W_{k,T}^M\|_F
	C_{\Xi-\Omega,k-1}(\bar\delta).
	\label{eq:quartic_residual_sixth_order_constant}
\end{equation}
This proves Eq.~\eqref{eq:residual_quartic_expansion}.

For the drift contribution, Eq.~\eqref{eq:gain_drift_exact} and the symmetry of $\Xi_{k-1}$ and $S_k$ give the exact identity
\begin{equation}
	\Gamma_kS_k\Gamma_k^{\mathsf T}
	=
	\Phi_k^\star
	\Xi_{k-1}(HF)^{\mathsf T}S_k^{-1}HF\,\Xi_{k-1}
	(\Phi_k^\star)^{\mathsf T}.
	\label{eq:quartic_exact_drift_product}
\end{equation}
The difference between the middle matrix in Eq.~\eqref{eq:quartic_exact_drift_product} and the middle matrix defining $\mathcal C_T^{M,\mathrm{reopt}}$ has the exact decomposition
\begin{align}
	&\Xi_{k-1}(HF)^{\mathsf T}S_k^{-1}HF\,\Xi_{k-1}
	-
	\Omega_{k-1}(HF)^{\mathsf T}(S_k^\star)^{-1}HF\,\Omega_{k-1}
	\nonumber\\
	&={}
	(\Xi_{k-1}-\Omega_{k-1})(HF)^{\mathsf T}S_k^{-1}HF\,\Xi_{k-1}
	\nonumber\\
	&{}\quad+
	\Omega_{k-1}(HF)^{\mathsf T}
	\bigl[S_k^{-1}-(S_k^\star)^{-1}\bigr]
	HF\,\Xi_{k-1}
	\nonumber\\
	&{}\quad+
	\Omega_{k-1}(HF)^{\mathsf T}(S_k^\star)^{-1}
	HF\,(\Xi_{k-1}-\Omega_{k-1}).
	\label{eq:quartic_drift_middle_decomposition}
\end{align}
Equations~\eqref{eq:quartic_covariance_bound}, \eqref{eq:quartic_covariance_remainder_bound}, \eqref{eq:quartic_omega_bound}, and~\eqref{eq:quartic_inverse_remainder_bound} show that the three terms on the right-hand side of Eq.~\eqref{eq:quartic_drift_middle_decomposition} are uniformly $O_T(\delta^6)$. More explicitly,
\begin{equation}
	\left|
	\mathcal Q_T^{\rm drift}
	-
	\mathcal C_T^{M,\mathrm{reopt}}(E)
	\right|
	\le
	C_{\mathrm{drift},6,T}^M(\bar\delta)\delta^6,
	\label{eq:quartic_drift_sixth_order_bound}
\end{equation}
with the finite direction-independent constant
\begin{align}
	C_{\mathrm{drift},6,T}^M(\bar\delta)
	:=
	\sum_{k=1}^{T}
	&\|W_{k,T}^M\|_F\|\Phi_k^\star\|_2^2
	\nonumber\\
	&\times
	\Bigg[
	\frac{\|HF\|_2^2}{\underline r}
	C_{\Xi-\Omega,k-1}(\bar\delta)C_{\Xi,k-1}(\bar\delta)
	\nonumber\\
	&\quad+
	\frac{\|HF\|_2^4}{\underline r^2}
	C_{\Omega,k-1}(\bar\delta)C_{\Xi,k-1}(\bar\delta)^2
	\nonumber\\
	&\quad+
	\frac{\|HF\|_2^2}{\underline r}
	C_{\Omega,k-1}(\bar\delta)C_{\Xi-\Omega,k-1}(\bar\delta)
	\Bigg].
	\label{eq:quartic_drift_sixth_order_constant}
\end{align}
This proves Eq.~\eqref{eq:drift_quartic_expansion}.

Finally, the exact accounting identity~\eqref{eq:exact_residual_risk_accounting} gives
\begin{align*}
	&\left|
	\mathcal R_T^M
	-
	\mathcal B_T^M(E)
	-
	\mathcal C_T^{M,\mathrm{infl}}(E)
	+
	\mathcal C_T^{M,\mathrm{reopt}}(E)
	\right|
	\\
	&\;\le
	\left|
	\mathcal Q_T^{\rm res}
	-
	\mathcal B_T^M(E)
	-
	\mathcal C_T^{M,\mathrm{infl}}(E)
	\right|
	\\
	&\quad+
	\left|
	\mathcal Q_T^{\rm drift}
	-
	\mathcal C_T^{M,\mathrm{reopt}}(E)
	\right|.
\end{align*}
Thus Eq.~\eqref{eq:quartic_uniform_remainder} holds with
\begin{equation}
	C_{6,T}^M(\bar\delta)
	:=
	C_{\mathrm{res},6,T}^M(\bar\delta)
	+
	C_{\mathrm{drift},6,T}^M(\bar\delta)
	<\infty,
\end{equation}
which is independent of the defect directions and of the actual $\delta\in[0,\bar\delta]$. Equations~\eqref{eq:residual_quartic_expansion}, \eqref{eq:drift_quartic_expansion}, and~\eqref{eq:signed_quartic_response} follow.

Additionally, $\Omega_k\succeq0$ and is homogeneous quadratic in $E_{1:k}$ by Eq.~\eqref{eq:leading_covariance_response}. Therefore each matrix injection in Eq.~\eqref{eq:quartic_inflation_response} is positive semidefinite and homogeneous quartic. Likewise,
\[
\Phi_k^\star\Omega_{k-1}(HF)^{\mathsf T}(S_k^\star)^{-1}
HF\,\Omega_{k-1}(\Phi_k^\star)^{\mathsf T}
=
Y_kY_k^{\mathsf T},
\]
with
\[
Y_k:=
\Phi_k^\star\Omega_{k-1}(HF)^{\mathsf T}(S_k^\star)^{-1/2},
\]
so the reoptimization injection is also positive semidefinite and homogeneous quartic. Since $W_{k,T}^M\succeq0$, this verifies
\[
\mathcal C_T^{M,\mathrm{infl}}(E)\ge0,
\qquad
\mathcal C_T^{M,\mathrm{reopt}}(E)\ge0.
\]
\end{proof}

\subsection{Consequences of the signed-quartic response}
\label{app:consequences_of_the_signed}

\begin{corollary}[Exact global task-nullness]
	\label{cor:exact_global_task_nullness}
	Under the assumptions of Theorem~\ref{thm:quartic_finit_horiz_respon}, if the prescribed intrinsic defect sequence \(E_{1:T}\) satisfies
	\begin{equation}
		\mathcal B_T^M(E)=0,
		\label{eq:global_task_null_condition}
	\end{equation}
	then
	\begin{equation}
		\mathcal R_T^M(E)
		=
		\mathcal Q_T^{\rm res}
		=
		\mathcal Q_T^{\rm drift}
		=
		0,
		\label{eq:exact_global_task_null_response}
	\end{equation}
	and
	\begin{equation}
		\mathcal C_T^{M,\mathrm{infl}}(E)
		=
		\mathcal C_T^{M,\mathrm{reopt}}(E)
		=
		0.
		\label{eq:global_task_null_quartic_terms}
	\end{equation}
\end{corollary}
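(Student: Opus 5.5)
The plan is to convert the scalar condition \(\mathcal B_T^M(E)=0\) into the stepwise matrix identities \(W_{k,T}^M E_k=0\) for every \(k\). The four remaining functionals then vanish: \(\mathcal Q_T^{\rm res}\) and \(\mathcal C_T^{M,\mathrm{infl}}\) by a direct factorization, \(\mathcal R_T^M\) and \(\mathcal Q_T^{\rm drift}\) through the exact accounting identity, and \(\mathcal C_T^{M,\mathrm{reopt}}\) through a propagation argument on the operators \(W_{k,T}^M\).

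\emph{Step 1: each summand of \(\mathcal B_T^M\) vanishes.} Each summand in Eq.~\eqref{eq:fixed_reference_quadratic_form} equals \(\|(W_{k,T}^M)^{1/2}E_k(S_k^\star)^{1/2}\|_F^2\ge0\). A vanishing sum therefore forces every summand to vanish. Because \(S_k^\star\succeq\underline r I_m\succ0\), the factor \((S_k^\star)^{1/2}\) is invertible, so
\[
(W_{k,T}^M)^{1/2}E_k=0 \qquad\text{for } k=1,\ldots,T.
\]

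\emph{Step 2: the residual, response and drift contributions vanish.} Since \(\mathcal Q_T^{\rm res}=\sum_k \tr[(W_{k,T}^M)^{1/2}E_kS_kE_k^{\mathsf T}(W_{k,T}^M)^{1/2}]\), Step 1 gives \(\mathcal Q_T^{\rm res}=0\). Equation~\eqref{eq:residual_risk_slack} then yields \(0\le\mathcal R_T^M\le\mathcal Q_T^{\rm res}=0\), and the slack identity gives \(\mathcal Q_T^{\rm drift}=\mathcal Q_T^{\rm res}-\mathcal R_T^M=0\).

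\emph{Step 3: the inflation term vanishes.} The same factorization applies to each summand \(\tr[W_{k,T}^M E_k X E_k^{\mathsf T}]\) of \(\mathcal C_T^{M,\mathrm{infl}}\), with \(X=HF\,\Omega_{k-1}(HF)^{\mathsf T}\). Step 1 kills every such summand.

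\emph{Step 4: the reoptimization term vanishes (main obstacle).} Step 1 controls \(E_j\) only against \(W_{j,T}^M\), whereas the \(k\)th reoptimization summand involves \(W_{k,T}^M\) acting on \(\Phi_k^\star\Omega_{k-1}\), which contains earlier defects. The plan has three parts.
\begin{enumerate}
\item \emph{Unroll \(\Omega_{k-1}\).} From Eq.~\eqref{eq:leading_covariance_response},
\[
\Phi_k^\star\Omega_{k-1}=\sum_{j<k}\Psi_{k,j}^\star E_jS_j^\star E_j^{\mathsf T}(\Psi_{k-1,j}^\star)^{\mathsf T}.
\]
\item \emph{Dominate the later operator by the earlier one.} For \(j\le k\le t\) we have \(\Psi_{t,j}^\star=\Psi_{t,k}^\star\Psi_{k,j}^\star\). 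Keeping only the terms \(t\ge k\) in Eq.~\eqref{eq:task_response_operator}, and using that every term is PSD, gives
\[
W_{j,T}^M\succeq(\Psi_{k,j}^\star)^{\mathsf T}W_{k,T}^M\Psi_{k,j}^\star .
\]
\item \emph{Transport the identity from Step 1.} Since \(E_j^{\mathsf T}W_{j,T}^ME_j=0\), the domination gives \(\|(W_{k,T}^M)^{1/2}\Psi_{k,j}^\star E_j\|_F^2\le\tr(E_j^{\mathsf T}W_{j,T}^ME_j)=0\). Hence \((W_{k,T}^M)^{1/2}\Psi_{k,j}^\star E_j=0\) for all \(j<k\), and therefore \((W_{k,T}^M)^{1/2}\Phi_k^\star\Omega_{k-1}=0\).
\end{enumerate}
With \(Y_k=\Phi_k^\star\Omega_{k-1}(HF)^{\mathsf T}(S_k^\star)^{-1/2}\) as in the proof of Theorem~\ref{thm:quartic_finit_horiz_respon}, each summand equals \(\|(W_{k,T}^M)^{1/2}Y_k\|_F^2=0\), so \(\mathcal C_T^{M,\mathrm{reopt}}(E)=0\).

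The exact zero of \(\mathcal C_T^{M,\mathrm{reopt}}\) cannot be deduced from \(\mathcal Q_T^{\rm drift}=0\) together with the \(O_T(\delta^6)\) remainder. An asymptotic bound only yields smallness, not an exact zero, which is why the propagation argument in Step 4 is needed.
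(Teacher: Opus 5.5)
Your proof is correct, and Steps 1–3 coincide with the paper's argument. The two proofs differ only in how $\mathcal C_T^{M,\mathrm{reopt}}(E)=0$ is obtained.

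\textbf{The paper's route.} It scales the defects. For $0<\alpha\le1$, the sequence $\alpha E$ stays in the defect tube and still satisfies $\mathcal B_T^M(\alpha E)=\alpha^2\mathcal B_T^M(E)=0$. The same reasoning as your Steps 1–3 therefore gives $\mathcal R_T^M(\alpha E)=\mathcal C_T^{M,\mathrm{infl}}(\alpha E)=0$. The uniform remainder bound of Theorem~\ref{thm:quartic_finit_horiz_respon}, together with quartic homogeneity, then yields $\alpha^4\,\mathcal C_T^{M,\mathrm{reopt}}(E)\le C_{6,T}^M(\bar\delta)\,\alpha^6\delta^6$. Letting $\alpha\to0$ forces the quartic term to vanish.

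\textbf{Your route.} Your Step 4 is a valid and genuinely different argument. The factorization $\Psi_{t,j}^\star=\Psi_{t,k}^\star\Psi_{k,j}^\star$ gives $W_{j,T}^M\succeq(\Psi_{k,j}^\star)^{\mathsf T}W_{k,T}^M\Psi_{k,j}^\star$. Task-nullness of each $E_j$ is thus carried forward along the reference loop, so $(W_{k,T}^M)^{1/2}$ annihilates every term of the unrolled $\Phi_k^\star\Omega_{k-1}$.

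\textbf{What each buys.} Your argument is self-contained: it uses only the definitions and the exact accounting identity, never the constant $C_{6,T}^M(\bar\delta)$ or the cap $\bar\delta$. It also proves something stronger, namely $(W_{k,T}^M)^{1/2}\Phi_k^\star\Omega_{k-1}=0$ for each $k$, so every reoptimization injection is individually invisible to the task, and it exposes the mechanism. The paper's argument is shorter once the theorem is available and is structure-agnostic. It would eliminate any homogeneous term of degree below the remainder order under any scale-invariant null condition, without knowing how $\mathcal C_T^{M,\mathrm{reopt}}$ is built.

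\textbf{Your closing remark is wrong.} You claim the exact zero cannot be deduced from the $O_T(\delta^6)$ remainder, but the paper does exactly that. The hypothesis $\mathcal B_T^M(E)=0$ is invariant under scaling, and the remainder constant is uniform over the scaled family. Hence the $\alpha^4$ versus $\alpha^6$ mismatch gives an exact zero, not mere smallness. Your propagation argument is therefore an alternative to the paper's step, not a necessity.
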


\begin{proof}
	Because every summand in Eq.~\eqref{eq:fixed_reference_quadratic_form} is nonnegative, Eq.~\eqref{eq:global_task_null_condition} implies
	\[
	\left\|
	(W_{k,T}^M)^{1/2}
	E_k
	(S_k^\star)^{1/2}
	\right\|_F^2
	=0
	\qquad
	\text{for every }k.
	\]
	Since \(S_k^\star\succ0\), this gives
	\[
	(W_{k,T}^M)^{1/2}E_k=0
	\qquad
	\text{for every }k.
	\]
	Therefore
	\[
	\tr\!\left[
	W_{k,T}^M
	E_kS_kE_k^{\mathsf T}
	\right]
	=
	\left\|
	(W_{k,T}^M)^{1/2}E_kS_k^{1/2}
	\right\|_F^2
	=0,
	\]
	so \(\mathcal Q_T^{\rm res}=0\). Equation~\eqref{eq:exact_residual_risk_accounting}, together with
	\(\mathcal R_T^M\ge0\) and \(\mathcal Q_T^{\rm drift}\ge0\), then gives
	\[
	\mathcal R_T^M(E)
	=
	\mathcal Q_T^{\rm drift}
	=
	0.
	\]
	The same relation \((W_{k,T}^M)^{1/2}E_k=0\) makes every summand in
	Eq.~\eqref{eq:quartic_inflation_response} vanish, so
	\(\mathcal C_T^{M,\mathrm{infl}}(E)=0\).
	
	For any scalar \(0<\alpha\le1\), the scaled sequence \(\alpha E\) has defect amplitude \(\alpha\delta\le\bar\delta\) and satisfies
	\[
	\mathcal B_T^M(\alpha E)
	=
	\alpha^2\mathcal B_T^M(E)
	=
	0.
	\]
	Hence \(\mathcal R_T^M(\alpha E)=0\), and the preceding argument also gives
	\(\mathcal C_T^{M,\mathrm{infl}}(\alpha E)=0\).
	Applying Eq.~\eqref{eq:quartic_uniform_remainder} to \(\alpha E\) therefore yields
	\[
	\left|
	\mathcal C_T^{M,\mathrm{reopt}}(\alpha E)
	\right|
	\le
	C_{6,T}^M(\bar\delta)\alpha^6\delta^6.
	\]
	Since \(\mathcal C_T^{M,\mathrm{reopt}}\) is nonnegative and homogeneous quartic,
	\[
	\alpha^4\mathcal C_T^{M,\mathrm{reopt}}(E)
	\le
	C_{6,T}^M(\bar\delta)\alpha^6\delta^6,
	\]
	and hence
	\[
	\mathcal C_T^{M,\mathrm{reopt}}(E)
	\le
	C_{6,T}^M(\bar\delta)\alpha^2\delta^6
	\qquad
	\text{for every }0<\alpha\le1.
	\]
	Since the left-hand side is independent of \(\alpha\), whereas the right-hand side can be made arbitrarily small,
	\[
	\mathcal C_T^{M,\mathrm{reopt}}(E)=0.
	\]
\end{proof}

This corollary concerns nullity of the full finite-horizon prescribed defect sequence; it does not exclude higher-order interactions between a locally task-null defect and other non-null defects.

\paragraph{Quadratic normal form and relative accuracy.}

The quadratic normal form in Eq.~\eqref{eq:fourth_order_risk_normal_form} follows directly from Theorem~\ref{thm:quartic_finit_horiz_respon}, since \(\mathcal C_T^{M,\mathrm{infl}}\) and \(\mathcal C_T^{M,\mathrm{reopt}}\) are homogeneous quartic and the remaining term is \(O_T(\delta^6)\). Because \(\mathcal B_T^M\) is positive semidefinite and may become arbitrarily small or vanish, Eq.~\eqref{eq:fourth_order_risk_normal_form} does not imply uniform relative accuracy or a uniform \(\Theta(\delta^2)\) response over the full defect class. On any restricted class satisfying
\[
\mathcal B_T^M(E)\ge \mu\delta^2,
\qquad
\text{for some fixed }\mu>0,
\]
the fourth-order absolute remainder implies a uniform \(O_T(\delta^2)\) relative error with respect to \(\mathcal B_T^M\). Under exact global task-nullness, Corollary~\ref{cor:exact_global_task_nullness} gives \(\mathcal B_T^M(E)=\mathcal R_T^M(E)=0\), so relative error with respect to \(\mathcal B_T^M\) is undefined.

\paragraph{Symmetry and analyticity.}

The exact finite-horizon response is blockwise even in the prescribed intrinsic defects:
\[
\mathcal R_T^M(\sigma_1E_1,\ldots,\sigma_TE_T)
=
\mathcal R_T^M(E_1,\ldots,E_T),
\;
\sigma_k\in\{-1,1\}.
\]
If \(\Xi_{k-1}\) (cf.~Eq.~\eqref{eq:covariance_mismatch_recursions}) is unchanged, then \(S_k\) and \(\Gamma_k\) are unchanged, while
\[
(\sigma_kE_k)S_k(\sigma_kE_k)^{\mathsf T}
=
E_kS_kE_k^{\mathsf T}.
\]
Induction in Eq.~\eqref{eq:covariance_mismatch_recursions} therefore leaves every \(\Xi_k\), and hence \(\mathcal R_T^M\), unchanged.

For fixed finite \(T\), the map \(E_{1:T}\mapsto\mathcal R_T^M(E_{1:T})\) is real analytic near the origin. The recursion involves matrix additions, products, and inversion, while
\[
S_k
=
S_k^\star+HF\,\Xi_{k-1}(HF)^{\mathsf T}
\succeq
S_k^\star
\succeq
\underline r I_m.
\]
Blockwise evenness therefore implies that every nonzero Taylor monomial has even degree in each defect block. A monomial involving two distinct defect times has degree at least two in each and hence total degree at least four. Quartic terms need not, however, couple distinct times. These properties concern the intrinsic defect coordinates \(E_{1:T}\) and need not hold for solver iterations, stopping tolerances, correction parameters, or fallback decisions.

\subsection{Statistical identification under measurement-independent execution}
\label{app:statistical_identification}

\paragraph{Information structure and reference conditional moments}
The deterministic response theory of Sec.~\ref{sec:finit_horiz_respon} does not require an execution sigma-field. To prove Proposition~\ref{prop:statistical_identification} and identify the covariance response with physical conditional moments, we now make the corresponding information structure precise. In probability theory, a \emph{sigma-field} represents available information: it is the collection of events whose occurrence can be decided from the quantities observed so far. For random variables \(X_1,\ldots,X_j\), the notation \(\sigma(X_1,\ldots,X_j)\) denotes the sigma-field generated by them, i.e., the smallest sigma-field containing all information carried by their realized values.

Accordingly, let \(\mathcal Y_k:=\sigma(z_1,\ldots,z_k)\) denote the information contained in the physical measurements through time \(k\), and let \(\mathcal Y_0\) denote the trivial sigma-field, representing the absence of physical measurement information before the first measurement. The execution environment \(\cG\) is a separate sigma-field containing the auxiliary information and randomness that determine how the numerical computation is carried out but do not arise from the physical measurements. Examples include a predeclared execution schedule and independent algorithmic randomness; deterministic model and algorithm parameters are treated as fixed. Conditioning on \(\cG\) holds the numerical execution fixed while leaving the physical state, disturbances, and measurements random.

For two sigma-fields \(\mathcal A\) and \(\mathcal B\), define
\[
\mathcal A\vee\mathcal B:=\sigma(\mathcal A\cup\mathcal B)
\]
the smallest sigma-field containing both, which represents the information available from \(\mathcal A\) and \(\mathcal B\) jointly.
By the hypotheses of Proposition~\ref{prop:statistical_identification}, \(\cG\) is jointly independent of the sigma-field generated by the physical initial state \(x_0\) and the complete physical disturbance trajectory
\[
x_0,\quad
\xi_0,\ldots,\xi_{T-1},\quad
w_0,\ldots,w_{T-1},\quad
v_1,\ldots,v_T.
\]
Since the physical measurements through time \(j\) are functions of these variables under the fixed model and input schedule, \(\mathcal Y_j\) is contained in this physical sigma-field. Consequently, for every integrable physical random variable \(X\),
\[
\E[X\mid\mathcal Y_j\vee\cG]
=
\E[X\mid\mathcal Y_j].
\]
For square-integrable \(X\), applying the same argument to \(XX^{\mathsf T}\) preserves the corresponding conditional second moments.
Define
\begin{equation}
\cF_k^-:=\mathcal Y_{k-1}\vee\cG,
\qquad
\cF_k^+:=\mathcal Y_k\vee\cG.
\label{eq:information_structure}
\end{equation}
Thus $\cF_k^-$ represents the information available immediately before incorporating the current measurement \(z_k\), whereas $\cF_k^+$ additionally contains \(z_k\). The preceding conditional-expectation identity shows that adjoining \(\cG\) does not alter the conditional moments of the correctly specified reference estimator.

The reference prediction mean and covariance satisfy
\[
\widetilde x_k^\star
=
\E[x_k\mid\cF_k^-],
\qquad
\widetilde P_k^\star
=
\Cov(x_k\mid\cF_k^-).
\]
The reference posterior mean and covariance satisfy
\[
\widehat x_k^\star
=
\E[x_k\mid\cF_k^+],
\qquad
\widehat P_k^\star
=
\Cov(x_k\mid\cF_k^+).
\]
The reference innovation is
\[
r_k^\star
=
z_k-H\widetilde x_k^\star
=
H(x_k-\widetilde x_k^\star)+v_k,
\]
and therefore
\[
\E[r_k^\star\mid\cF_k^-]=0,
\qquad
\E[r_k^\star(r_k^\star)^{\mathsf T}\mid\cF_k^-]
=S_k^\star.
\]

\paragraph{Implemented covariance and reference-relative error}
\begin{proof}[Proof of Proposition~\ref{prop:statistical_identification}]
Define the implemented prediction and posterior errors by
\[
\widetilde\epsilon_k:=x_k-\widetilde x_k,
\qquad
\epsilon_k:=x_k-\widehat x_k.
\]
The physical model and implemented recursion give
\begin{align}
	\widetilde\epsilon_k
	&=
	F\epsilon_{k-1}
	+B\xi_{k-1}
	+w_{k-1},
	\label{eq:implemented_prediction_error}\\
	\epsilon_k
	&=
	(I-\widetilde K_kH)\widetilde\epsilon_k
	-\widetilde K_kv_k.
	\label{eq:implemented_posterior_error}
\end{align}
Under the hypotheses of Proposition~\ref{prop:statistical_identification}, the executed gain sequence is fixed conditional on \(\cG\), and joint independence implies that the physical initial state and disturbances retain their stated joint law conditional on \(\cG\). Assumption~\ref{ass:matched_initialization} gives
\[
\E[\epsilon_0\mid\cG]=0,
\qquad
\E[\epsilon_0\epsilon_0^{\mathsf T}\mid\cG]=P_0.
\]
Using Eqs.~\eqref{eq:implemented_prediction_error} and~\eqref{eq:implemented_posterior_error}, together with the zero means, mutual independence, and temporal independence of the physical disturbances, induction gives
\[
\E[\widetilde\epsilon_k\mid\cG]=0,
\qquad
\E[\epsilon_k\mid\cG]=0,
\]
and
\begin{align}
	\E\!\left[
	\widetilde\epsilon_k\widetilde\epsilon_k^{\mathsf T}
	\mid\cG
	\right]
	&=
	\widetilde P_k,
	\label{eq:implemented_prediction_covariance_statistical}\\
	\E\!\left[
	\epsilon_k\epsilon_k^{\mathsf T}
	\mid\cG
	\right]
	&=
	\mathcal{J}_k(\widetilde P_k,\widetilde K_k)
	=
	\widehat P_k.
	\label{eq:implemented_covariance_statistical}
\end{align}
Since \(\epsilon_k=x_k-\widehat x_k,\) Eq.~\eqref{eq:implemented_covariance_statistical} proves the conditional second-moment identity in Eq.~\eqref{eq:implemented_covariance_interpretation}.

We then define the reference estimation error and estimator mismatch by
\[
\epsilon_k^\star:=x_k-\widehat x_k^\star,
\qquad
e_k:=\widehat x_k-\widehat x_k^\star.
\]
Because \(\widehat x_k^\star=\E[x_k\mid\cF_k^+]\),
\[
\E[\epsilon_k^\star\mid\cF_k^+]=0.
\]
By matched initialization and induction through the prediction and update recursions, using the \(\cG\)-measurability of \(\widetilde K_k\), both \(\widehat x_k\) and \(\widehat x_k^\star\) are \(\cF_k^+\)-measurable, and hence so is \(e_k\).
Since \(\epsilon_k=\epsilon_k^\star-e_k,\) it follows that
\(\E[\epsilon_k\mid\cF_k^+]=-e_k,\) and hence
\[
\E[
\epsilon_k\epsilon_k^{\mathsf T}
\mid\cF_k^+]
= \widehat P_k^\star+e_ke_k^{\mathsf T},
\qquad
\Cov(\epsilon_k\mid\cF_k^+)
= \widehat P_k^\star.
\]
Thus \(\widehat P_k\) in Eq.~\eqref{eq:implemented_covariance_interpretation} is a second moment conditional on \(\cG\), not the posterior covariance conditioned on \(\cF_k^+\).
Since \(\cG\subseteq\cF_k^+\), the tower property gives
\begin{align}
\E[
\epsilon_k^\star e_k^{\mathsf T}
\mid\cG]
&=
\E\!\left[
\E[
\epsilon_k^\star e_k^{\mathsf T}
\mid\cF_k^+]
\mid\cG
\right]
\nonumber\\
&=
\E\!\left[
\E[
\epsilon_k^\star
\mid\cF_k^+]
e_k^{\mathsf T}
\mid\cG
\right]
=
0.
\label{eq:reference_mismatch_orthogonality}
\end{align}
Because
\(\widehat P_k^\star=\Cov(x_k\mid\cF_k^+)\)
and the reference covariance is fixed by the prescribed model and covariance schedule, the tower property gives
\[
\E\!\left[
\epsilon_k^\star(\epsilon_k^\star)^{\mathsf T}
\mid\cG
\right]
=
\widehat P_k^\star.
\]
Since \(\epsilon_k=\epsilon_k^\star-e_k\), Eqs.~\eqref{eq:implemented_covariance_statistical} and~\eqref{eq:reference_mismatch_orthogonality} together with the preceding identity imply Eq.~\eqref{eq:covariance_mismatch_statistical}, and hence \(\Xi_k\succeq0\).
Since both the implemented and reference conditional error means vanish, \(\E[e_k\mid\cG]=0\), so
\[
\Xi_k
=
\Cov(e_k\mid\cG).
\]

For the fixed task metrics \(M_k\succeq0\), using the quadratic metric defined in Sec.~\ref{subsec:covari_misma_and_task_respo}, Eq.~\eqref{eq:reference_mismatch_orthogonality} gives
\begin{align}
	&\E\!\left[
	\|x_k-\widehat x_k\|_{M_k}^2
	-
	\|x_k-\widehat x_k^\star\|_{M_k}^2
	\mid\cG
	\right]
	\nonumber\\
	&\qquad=
	\E[e_k^{\mathsf T}M_ke_k\mid\cG]
	=
	\tr(M_k\Xi_k).
	\label{eq:reference_relative_quadratic_error}
\end{align}
Summing Eq.~\eqref{eq:reference_relative_quadratic_error} over \(k=1,\ldots,T\) proves the two task-response identities in Proposition~\ref{prop:statistical_identification}.
\end{proof}

\subsection{Residual magnitude and task dependence}
\label{subsec:residu_magnit_and_task_depend}

For a nonzero linear comparison subspace $\mathcal U_k\subseteq\mathbb R^{n\times m}$, define
\begin{align}
\lambda_{k,\min}^{\mathcal U}
&:=
\min_{E\in\mathcal U_k\setminus\{0\}}
\frac{
\tr\!\left[W_{k,T}^M E S_k^\star E^{\mathsf T}\right]
}{
\|E\|_F^2
},
\label{eq:residual_exposure_lambda_min}\\
\lambda_{k,\max}^{\mathcal U}
&:=
\max_{E\in\mathcal U_k\setminus\{0\}}
\frac{
\tr\!\left[W_{k,T}^M E S_k^\star E^{\mathsf T}\right]
}{
\|E\|_F^2
}.
\label{eq:residual_exposure_lambda_max}
\end{align}
The extrema are well defined because the quotient is invariant under nonzero rescaling of $E$ and is continuous on any fixed nonzero Frobenius sphere in $\mathcal U_k$.
Here, for $E\neq0$, its direction means the normalized defect $E/\|E\|_F$; directional dependence therefore means variation of the task exposure with this orientation at fixed Frobenius magnitude.

\begin{proposition}[Residual magnitude and task dependence]
\label{prop:residu_magnit_and_task_depend}
For every $E\in\mathcal U_k$,
\begin{equation}
\lambda_{k,\min}^{\mathcal U}\|E\|_F^2
\le
\tr\!\left[W_{k,T}^M E S_k^\star E^{\mathsf T}\right]
\le
\lambda_{k,\max}^{\mathcal U}\|E\|_F^2,
\label{eq:directional_exposure_bounds}
\end{equation}
and both bounds are attained at every fixed nonzero Frobenius radius. Frobenius magnitude determines the fixed-reference quadratic exposure on $\mathcal U_k$ if and only if
\begin{equation}
\lambda_{k,\min}^{\mathcal U}=\lambda_{k,\max}^{\mathcal U}.
\label{eq:directional_scalar_sufficiency}
\end{equation}
If instead \(\lambda_{k,\min}^{\mathcal U}<\lambda_{k,\max}^{\mathcal U}\), equal-magnitude directions with unequal exposure exist. More generally, if $0<r_a<r_b$ and
\begin{equation}
r_a^2\lambda_{k,\max}^{\mathcal U}
>
r_b^2\lambda_{k,\min}^{\mathcal U},
\label{eq:directional_order_reversal_condition}
\end{equation}
then directions $E^{(a)},E^{(b)}\in\mathcal U_k$ exist with $\|E^{(a)}\|_F=r_a<r_b=\|E^{(b)}\|_F$ but with the larger fixed-reference quadratic exposure for $E^{(a)}$.
At a matched current step, $\Xi_{k-1}=0$, so $S_k=S_k^\star$ and
\begin{equation}
\tr\!\left[W_{k,T}^M E_kS_k^\star E_k^{\mathsf T}\right]
=
\tr\!\left[
W_{k,T}^M\rho_k^{\mathrm{exec}}(S_k^\star)^{-1}
(\rho_k^{\mathrm{exec}})^{\mathsf T}
\right].
\label{eq:directional_residual_risk_form}
\end{equation}
Thus the same directional statement holds in residual coordinates, with quadratic operator $(S_k^\star)^{-1}\otimes W_{k,T}^M$ on the corresponding residual directions.

Finally, let $E^{(a)}$ and $E^{(b)}$ be two prescribed intrinsic defect sequences satisfying Theorem~\ref{thm:quartic_finit_horiz_respon} under the same fixed theorem data, in particular the same finite horizon $T$, reference model and covariance schedule, task metrics $M_{1:T}$, innovation lower bound $\underline r$, and cap $\bar\delta$. Let
\[
\delta_a:=\max_j\|E_j^{(a)}\|_F,
\quad
\delta_b:=\max_j\|E_j^{(b)}\|_F,
\quad
\delta_a,\delta_b\le\bar\delta.
\]
Suppose
\[
\mathcal B_T^M(E^{(a)})
>
\mathcal B_T^M(E^{(b)}).
\]
If, in addition,
\begin{align}
&
\mathcal B_T^M(E^{(a)})
-
\mathcal B_T^M(E^{(b)})
+
\mathcal C_T^{M,\mathrm{infl}}(E^{(a)})
-
\mathcal C_T^{M,\mathrm{infl}}(E^{(b)})
\nonumber\\
&\quad-
\mathcal C_T^{M,\mathrm{reopt}}(E^{(a)})
+
\mathcal C_T^{M,\mathrm{reopt}}(E^{(b)})
\nonumber\\
&\quad>
C_{6,T}^M(\bar\delta)
\left(\delta_a^6+\delta_b^6\right),
\label{eq:directional_transfer_margin}
\end{align}
then
\begin{equation}
\mathcal R_T^M(E^{(a)})
>
\mathcal R_T^M(E^{(b)}).
\label{eq:directional_risk_order_transfer}
\end{equation}
\end{proposition}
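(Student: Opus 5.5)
The statement bundles four claims; I will prove them in order, treating the quotient
\[
q(E):=\tr[W_{k,T}^M E S_k^\star E^{\mathsf T}]/\|E\|_F^2
\]
as a Rayleigh quotient of a positive-semidefinite quadratic form on the finite-dimensional space \(\mathcal U_k\).

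\textbf{Step 1: the two-sided bound and attainment.} By vectorization, \(\tr[W E S E^{\mathsf T}] = \operatorname{vec}(E)^{\mathsf T}(S_k^\star\otimes W_{k,T}^M)\operatorname{vec}(E)\). This is a symmetric PSD form. Its restriction to \(\mathcal U_k\) has extremal eigenvalues \(\lambda^{\mathcal U}_{k,\min}\) and \(\lambda^{\mathcal U}_{k,\max}\). Inequality \eqref{eq:directional_exposure_bounds} is then Rayleigh--Ritz. Attainment at any radius \(r>0\) follows by scaling the corresponding eigen-directions to norm \(r\), using homogeneity of degree two.

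\textbf{Step 2: the equivalence.} If \(\lambda_{\min}=\lambda_{\max}\), the exposure equals \(\lambda\|E\|_F^2\), so it is determined by magnitude. Conversely, if the extremes differ, the two eigen-directions scaled to a common radius give different exposures. That settles both the ``if and only if'' and the unequal-exposure claim.

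\textbf{Step 3: order reversal.} Take \(E^{(a)}\) to be a maximizing direction at radius \(r_a\) and \(E^{(b)}\) a minimizing direction at radius \(r_b\). Their exposures are \(r_a^2\lambda_{\max}\) and \(r_b^2\lambda_{\min}\), and \eqref{eq:directional_order_reversal_condition} orders them as claimed.

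\textbf{Step 4: the matched step.} If \(\Xi_{k-1}=0\), then \(S_k=S_k^\star\) by \eqref{eq:quartic_innovation_difference}. Since \(E_k=\rho_k^{\rm exec}S_k^{-1}\), substitution gives \eqref{eq:directional_residual_risk_form}. Vectorizing this form yields the operator \((S_k^\star)^{-1}\otimes W_{k,T}^M\). Because \(E\mapsto ES_k^\star\) is a linear bijection, Steps 1--3 transfer verbatim to the image subspace of residual directions.

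\textbf{Step 5: order transfer.} Write \(\mathcal R(E)=\mathcal B+\mathcal C^{\rm infl}-\mathcal C^{\rm reopt}+\mathcal E(E)\). By \eqref{eq:quartic_uniform_remainder}, with the same constant \(C^M_{6,T}(\bar\delta)\) for both sequences, \(|\mathcal E(E^{(i)})|\le C^M_{6,T}(\bar\delta)\delta_i^6\). Then
\[
\mathcal R(E^{(a)})-\mathcal R(E^{(b)})\ge[\text{LHS of }\eqref{eq:directional_transfer_margin}]-C^M_{6,T}(\bar\delta)(\delta_a^6+\delta_b^6)>0.
\]
The hypothesis \(\mathcal B(E^{(a)})>\mathcal B(E^{(b)})\) is not needed for this inequality; it only contextualizes the statement.

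\textbf{Main obstacle.} The only delicate point is the uniformity of the constant in Step 5. The constant must be the same for both sequences, which is exactly why the statement requires shared theorem data and a common cap. Theorem~\ref{thm:quartic_finit_horiz_respon} guarantees this independence from the defect directions. I also need to make sure the minimum and maximum in Step 1 are taken over the restricted form on \(\mathcal U_k\), not over the full space.
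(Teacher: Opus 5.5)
Your proof is correct and follows essentially the same route as the paper. Both arguments use homogeneity together with attainment of the extremal quotient on the unit Frobenius sphere; you phrase this as Rayleigh--Ritz for the compression of $S_k^\star\otimes W_{k,T}^M$ to $\operatorname{vec}(\mathcal U_k)$, which the paper also notes. Both then rescale extremizers for the equivalence and the order reversal, substitute $E_k=\rho_k^{\rm exec}(S_k^\star)^{-1}$ at a matched step, and apply the uniform sixth-order bound twice with the common constant $C_{6,T}^M(\bar\delta)$.

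Your remark that the hypothesis $\mathcal B_T^M(E^{(a)})>\mathcal B_T^M(E^{(b)})$ is not needed for the final inequality is accurate. The paper's proof likewise never invokes it.
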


\begin{proof}
For $E\in\mathbb R^{n\times m}$,
\begin{equation}
\tr\!\left[W_{k,T}^M E S_k^\star E^{\mathsf T}\right]
=
\left\|(W_{k,T}^M)^{1/2}E(S_k^\star)^{1/2}\right\|_F^2
\ge0.
\label{eq:directional_psd_form}
\end{equation}
For any nonzero \(E\in\mathcal U_k\), quadratic homogeneity and Eqs.~\eqref{eq:residual_exposure_lambda_min}--\eqref{eq:residual_exposure_lambda_max} give
\begin{align}
\tr\!\left[W_{k,T}^M E S_k^\star E^{\mathsf T}\right]
&=
\|E\|_F^2
\tr\!\left[
W_{k,T}^M
\frac{E}{\|E\|_F}
S_k^\star
\frac{E^{\mathsf T}}{\|E\|_F}
\right]
\nonumber\\
&\in
\left[
\lambda_{k,\min}^{\mathcal U}\|E\|_F^2,
\lambda_{k,\max}^{\mathcal U}\|E\|_F^2
\right].
\label{eq:residual_exposure_homogeneity}
\end{align}
The extrema on the unit sphere are attained, and rescaling the corresponding unit-Frobenius extremizers proves sharpness at every nonzero radius.

If $\lambda_{k,\min}^{\mathcal U}=\lambda_{k,\max}^{\mathcal U}$, Eq.~\eqref{eq:directional_exposure_bounds} collapses to a constant multiple of $\|E\|_F^2$, so magnitude alone determines the fixed-reference quadratic exposure. Conversely, if magnitude alone determines the fixed-reference quadratic exposure, the quadratic form is constant on the unit sphere of $\mathcal U_k$, and therefore its minimum and maximum there coincide. With column vectorization,
\begin{equation}
\tr\!\left[W_{k,T}^M E S_k^\star E^{\mathsf T}\right]
=
\operatorname{vec}(E)^{\mathsf T}
\left(S_k^\star\otimes W_{k,T}^M\right)
\operatorname{vec}(E),
\label{eq:directional_vec_form}
\end{equation}
so Eq.~\eqref{eq:directional_scalar_sufficiency} is equivalently the statement that the compression of $S_k^\star\otimes W_{k,T}^M$ to $\operatorname{vec}(\mathcal U_k)$ is a scalar multiple of the identity.

If $\lambda_{k,\max}^{\mathcal U}>\lambda_{k,\min}^{\mathcal U}$, unit-Frobenius minimizers and maximizers give equal-magnitude directions with unequal exposure. If in addition Eq.~\eqref{eq:directional_order_reversal_condition} holds, rescaling those two extremizers to radii $r_a$ and $r_b$, respectively, gives
\begin{align}
\tr\!\left[W_{k,T}^M E^{(a)}S_k^\star(E^{(a)})^{\mathsf T}\right]
&=
r_a^2\lambda_{k,\max}^{\mathcal U}
\nonumber\\
&>
r_b^2\lambda_{k,\min}^{\mathcal U}
\nonumber\\
&=
\tr\!\left[W_{k,T}^M E^{(b)}S_k^\star(E^{(b)})^{\mathsf T}\right].
\label{eq:directional_order_reversal_proof}
\end{align}
The case $\lambda_{k,\min}^{\mathcal U}=0$ is included: the minimizing direction is nonzero but has zero fixed-reference quadratic exposure for the declared metric and horizon. More generally, if the full prescribed intrinsic defect sequence satisfies $\mathcal B_T^M(E)=0$, then its exact finite-horizon response also vanishes, $\mathcal R_T^M(E)=0$; see Corollary~\ref{cor:exact_global_task_nullness}.

At a matched current step, $\Xi_{k-1}=0$ implies $S_k=S_k^\star$ and $K_k^{\rm loc}=K_k^\star$. Since $\rho_k^{\rm exec}=E_kS_k^\star$,
\begin{equation}
E_k=\rho_k^{\rm exec}(S_k^\star)^{-1},
\end{equation}
and substitution gives Eq.~\eqref{eq:directional_residual_risk_form}. Vectorization then yields
\begin{align}
&\tr\!\Bigl[
W_{k,T}^M\rho_k^{\rm exec}(S_k^\star)^{-1}
(\rho_k^{\rm exec})^{\mathsf T}
\Bigr]
\nonumber\\
&\qquad=
\operatorname{vec}(\rho_k^{\rm exec})^{\mathsf T}
\left[(S_k^\star)^{-1}\otimes W_{k,T}^M\right]
\operatorname{vec}(\rho_k^{\rm exec}).
\label{eq:directional_residual_vec_form}
\end{align}

Applying Eq.~\eqref{eq:quartic_uniform_remainder} to $E^{(a)}$ and $E^{(b)}$ gives
\begin{align}
&
\mathcal R_T^M(E^{(a)})
-
\mathcal R_T^M(E^{(b)}) \ge
\mathcal B_T^M(E^{(a)})
-
\mathcal B_T^M(E^{(b)})
\nonumber\\
&\qquad\qquad\qquad+
\mathcal C_T^{M,\mathrm{infl}}(E^{(a)})
-
\mathcal C_T^{M,\mathrm{infl}}(E^{(b)})
\nonumber\\
&\qquad\qquad\qquad-
\mathcal C_T^{M,\mathrm{reopt}}(E^{(a)})
+
\mathcal C_T^{M,\mathrm{reopt}}(E^{(b)})
\nonumber\\
&\qquad\qquad\qquad-
C_{6,T}^M(\bar\delta)
\left(\delta_a^6+\delta_b^6\right).
\label{eq:directional_transfer_proof}
\end{align}
Condition~\eqref{eq:directional_transfer_margin} therefore makes the right-hand side strictly positive and proves Eq.~\eqref{eq:directional_risk_order_transfer}. This is an exact-response ordering transfer: it is finite horizon and deterministic and uses only the assumptions of Theorem~\ref{thm:quartic_finit_horiz_respon}.
\end{proof}

\subsection{Finite-horizon response control}
\label{app:finite_horizon_respon_control}

\begin{proof}[Proof of Proposition~\ref{prop:finite_horizon_respon_control}]
From Eq.~\eqref{eq:certified_innovation_lower_bound},
\begin{equation}
S_k^{-1}\preceq\ell_k^{-1}I_m.
\label{eq:risk_control_inverse_bound}
\end{equation}
Congruence with the executed residual gives
\begin{equation}
\rho_k^{\mathrm{exec}}S_k^{-1}(\rho_k^{\mathrm{exec}})^{\mathsf T}
\preceq
\ell_k^{-1}\rho_k^{\mathrm{exec}}(\rho_k^{\mathrm{exec}})^{\mathsf T}.
\label{eq:risk_control_residual_congruence}
\end{equation}
Since $W_{k,T}^M\succeq0$, pairing Eq.~\eqref{eq:risk_control_residual_congruence} with $W_{k,T}^M$ and taking the trace yields
\begin{equation}
q_{k,T}\le c_{k,T}^{W,\ell}.
\label{eq:risk_control_first_upper_bound}
\end{equation}
Moreover,
\begin{equation}
W_{k,T}^M\preceq\lambda_{\max}(W_{k,T}^M)I_n,
\end{equation}
so
\begin{align}
c_{k,T}^{W,\ell}
&=
\ell_k^{-1}
\tr\!\left[(\rho_k^{\mathrm{exec}})^{\mathsf T}
W_{k,T}^M\rho_k^{\mathrm{exec}}\right]
\nonumber\\
&\le
\frac{\lambda_{\max}(W_{k,T}^M)}{\ell_k}
\|\rho_k^{\mathrm{exec}}\|_F^2
=
c_{k,T}^{\mathrm{rad}}.
\label{eq:risk_control_second_upper_bound}
\end{align}
Under the matched initialization, Eq.~\eqref{eq:exact_residual_risk_accounting} and $\mathcal Q_T^{\mathrm{drift}}\ge0$ give
\begin{equation}
\mathcal R_T^M\le\sum_{k=1}^{T}q_{k,T}.
\label{eq:risk_control_exact_upper_accounting}
\end{equation}
Summing Eqs.~\eqref{eq:risk_control_first_upper_bound} and~\eqref{eq:risk_control_second_upper_bound} proves Eq.~\eqref{eq:risk_bound_hierarchy}. If $c_{k,T}^{W,\ell}\le b_{k,T}$ for every $k$, Eq.~\eqref{eq:task_budget_allocation} immediately yields Eq.~\eqref{eq:global_task_risk_guarantee}.
\end{proof}

The finite-horizon bound also connects directly to the local admissibility criterion. Equation~\eqref{eq:radial_step_upper_bound} gives
\begin{equation}
c_{k,T}^{\mathrm{rad}}\le
\lambda_{\max}(W_{k,T}^M)\,\ell_k\,\delta_{\mathrm{adm}}^2
=b_{k,T}^{\mathrm{adm}}
\end{equation}
if and only if $\|\rho_k^{\mathrm{exec}}\|_F\le\ell_k\delta_{\mathrm{adm}}$ whenever $\lambda_{\max}(W_{k,T}^M)>0$. Equation~\eqref{eq:accept_rule} is precisely $\|\rho_k^{\mathrm{exec}}\|_F\le\ell_k\delta_{\mathrm{adm}}$, and is therefore equivalent to the radial allocation above whenever
\(\lambda_{\max}(W_{k,T}^M)>0\). If \(W_{k,T}^M=0\), both \(c_{k,T}^{\mathrm{rad}}\) and \(b_{k,T}^{\mathrm{adm}}\) vanish and the allocation holds trivially. Summing these allocations and applying Eq.~\eqref{eq:risk_bound_hierarchy} proves Eq.~\eqref{eq:admissibility_induced_horizon_bound}.

Conversely, for a prescribed allocation $b_{k,T}$ with $\lambda_{\max}(W_{k,T}^M)>0$, define the associated radial local tolerance by
\begin{equation}
\delta_{{\rm adm},k}^{(\tau)}
:=
\sqrt{
\frac{b_{k,T}}
{\lambda_{\max}(W_{k,T}^M)\,\ell_k}
}.
\label{eq:risk_calibrated_local_tolerance}
\end{equation}
Then
\[
\ell_k\delta_{{\rm adm},k}^{(\tau)}
=
\sqrt{
\frac{\ell_k b_{k,T}}
{\lambda_{\max}(W_{k,T}^M)}
},
\]
and substitution into Eq.~\eqref{eq:radial_step_upper_bound} gives
\[
\|\rho_k^{\rm exec}\|_F
\le
\ell_k\delta_{{\rm adm},k}^{(\tau)}
\quad\Longleftrightarrow\quad
c_{k,T}^{\rm rad}
\le
b_{k,T}.
\]

\subsection{Uniform-in-time bounds under strict Euclidean contraction}
\label{app:uniform_in_time_bounds}

Here we record a stronger sufficient regime under which a uniform cap on the executed local gain defects yields pointwise covariance and estimator-reference mean-square bounds for all time. These bounds do not control an infinite-horizon accumulated risk.

\begin{assumption}[Uniform reference regime and local defect cap]
\label{ass:uniform_tube_regime}
The reference and executed recursions extend to every \(k\geq1\) under the standing model conditions and retain matched covariance initialization, \(\Xi_0=0\). There exist constants \(\bar s^\star<\infty\) and \(\rho_\Phi\in[0,1)\) such that
\begin{align}
\sup_{k\geq1}\|S_k^\star\|_2
&\leq \bar s^\star,
\label{eq:uniform_reference_innovation_bound}\\
\sup_{k\geq1}\|\Phi_k^\star\|_2
&\leq \rho_\Phi,
\label{eq:ref_contraction}
\end{align}
and the executed path satisfies
\begin{equation}
\|E_k\|_F\leq\delta_{\rm adm}
\qquad\text{for every }k\geq1.
\label{eq:uniform_local_admissibility}
\end{equation}
\end{assumption}

The first condition in Assumption~\ref{ass:uniform_tube_regime} is an additional all-time boundedness assumption on the reference innovation covariance. In the time-invariant Kalman setting, such a bound follows, for example, under standard conditions ensuring boundedness or convergence of the reference Riccati recursion~\cite{anderson1979optimal,kailath2000linear}. The second condition requires strict contraction in the Euclidean operator norm and is stronger than Schur stability. Equation~\eqref{eq:uniform_local_admissibility} is likewise an all-time assumption on the executed path; successful verification at one step does not guarantee the existence of an admissible candidate or fallback at every later step.

The finite-horizon proof above already gives the scalar covariance comparison in Eq.~\eqref{eq:quartic_trace_recursion}. Under the uniform bounds of Assumption~\ref{ass:uniform_tube_regime}, the same comparison becomes
\begin{equation}
\operatorname{tr}(\Xi_k)
\leq
\left(
\rho_\Phi^2+\|HF\|_2^2\delta_{\rm adm}^2
\right)
\operatorname{tr}(\Xi_{k-1})
+
\bar s^\star\delta_{\rm adm}^2.
\label{eq:uniform_trace_comparison}
\end{equation}

The coefficient multiplying the previous mismatch contains two contributions: \(\rho_\Phi^2\) bounds propagation through the reference closed loop, while \(\|HF\|_2^2\delta_{\rm adm}^2\) accounts for feedback of the existing covariance mismatch through the implemented innovation covariance. A uniform-in-time bound therefore follows when
\begin{equation}
\rho_\Phi^2+\|HF\|_2^2\delta_{\rm adm}^2<1.
\label{eq:uniform_tube_condition}
\end{equation}
For \(\|HF\|_2>0\), this sufficient condition is equivalently
\begin{equation}
\delta_{\rm adm}
<
\frac{\sqrt{1-\rho_\Phi^2}}{\|HF\|_2}.
\label{eq:uniform_tube_defect_condition}
\end{equation}
Condition~\eqref{eq:uniform_tube_condition} is sufficient for the scalar comparison and is not asserted to be a necessary stability condition for the executed recursion.

\begin{corollary}[Uniform-in-time covariance and estimator-reference mean-square tubes]
\label{cor:uniform_time_tubes}
Under Assumption~\ref{ass:uniform_tube_regime} and condition~\eqref{eq:uniform_tube_condition},
\begin{equation}
\sup_{k\geq0}\|\Xi_k\|_F
\leq
\sup_{k\geq0}\operatorname{tr}(\Xi_k)
\leq
\frac{\bar s^\star\delta_{\rm adm}^2}
{1-\rho_\Phi^2-\|HF\|_2^2\delta_{\rm adm}^2}.
\label{eq:uniform_recursive_tube}
\end{equation}
If, in addition, the hypotheses of Proposition~\ref{prop:statistical_identification} and Assumption~\ref{ass:matched_initialization} hold consistently on every finite prefix, then
\begin{equation}
\sup_{k\geq0}
\E\!\left[\|e_k\|_2^2\mid\cG\right]
=
\sup_{k\geq0}\operatorname{tr}(\Xi_k)
\leq
\frac{\bar s^\star\delta_{\rm adm}^2}
{1-\rho_\Phi^2-\|HF\|_2^2\delta_{\rm adm}^2}.
\label{eq:uniform_state_tube}
\end{equation}
Here conditioning on the execution environment \(\cG\) means holding the computational execution fixed while the physical system remains random.
\end{corollary}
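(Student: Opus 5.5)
The plan is to iterate the scalar trace comparison already derived in the finite-horizon proof, namely Eq.~\eqref{eq:quartic_trace_recursion}, now with time-uniform coefficients. It becomes Eq.~\eqref{eq:uniform_trace_comparison} under Assumption~\ref{ass:uniform_tube_regime}. First I will recheck that this comparison holds for every \(k\ge1\), not only up to a fixed \(T\).

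Take the trace of the reference-centered recursion in Proposition~\ref{prop:exact_covariance_recursions} and drop the subtractive drift term, which has nonnegative trace. This gives
\(\tr(\Xi_k)\le\|\Phi_k^\star\|_2^2\tr(\Xi_{k-1})+\|S_k\|_2\|E_k\|_F^2\).
Then apply Eq.~\eqref{eq:quartic_innovation_norm_bound}, the bounds \eqref{eq:uniform_reference_innovation_bound}--\eqref{eq:ref_contraction}, and the cap \eqref{eq:uniform_local_admissibility}. The reference-centered recursion is exact for every \(k\), and Proposition~\ref{prop:exact_covariance_recursions} gives \(\Xi_k\succeq0\) from \(\Xi_0=0\). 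These are the only ingredients, so nothing depends on \(T\).

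Next, write \(a:=\rho_\Phi^2+\|HF\|_2^2\delta_{\rm adm}^2<1\) and \(b:=\bar s^\star\delta_{\rm adm}^2\). With \(\tr(\Xi_0)=0\), induction on \(u_k:=\tr(\Xi_k)\) from \(u_k\le a u_{k-1}+b\) gives
\[
u_k\le b\sum_{i=0}^{k-1}a^i\le \frac{b}{1-a}
\]
for every \(k\ge0\). Taking the supremum gives the right inequality in Eq.~\eqref{eq:uniform_recursive_tube}. The left inequality is Eq.~\eqref{eq:quartic_psd_trace_bound}: for \(\Xi_k\succeq0\), \(\|\Xi_k\|_F^2=\sum\lambda_i^2\le(\sum\lambda_i)^2\).

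For Eq.~\eqref{eq:uniform_state_tube}, fix any \(k\) and apply Proposition~\ref{prop:statistical_identification} on the finite prefix with horizon \(T=k\). It gives \(\Xi_k=\E[e_ke_k^{\mathsf T}\mid\cG]\). Taking the trace yields \(\E[\|e_k\|_2^2\mid\cG]=\tr(\Xi_k)\) for each \(k\). The supremum bound then follows from the first part; at \(k=0\) both sides vanish by matched initialization.

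The only delicate point is consistency across prefixes. The quantities \(\Xi_k\) and \(e_k\) at time \(k\) depend only on the data up to time \(k\), so the prefix applications agree. Apart from that, the argument is a routine linear-recursion bound, and I expect no real obstacle.
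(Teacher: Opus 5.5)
Your proposal is correct and follows essentially the same route as the paper. You take the trace comparison of Eq.~\eqref{eq:uniform_trace_comparison}, iterate it from $\operatorname{tr}(\Xi_0)=0$ to a geometric series, use $\|\Xi_k\|_F\le\operatorname{tr}(\Xi_k)$ for $\Xi_k\succeq0$, and apply Proposition~\ref{prop:statistical_identification} on each finite prefix. Beyond the paper, you explicitly re-derive the comparison for every $k$ and spell out the prefix-consistency point, both of which the paper leaves implicit.
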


\begin{proof}
Since matched initialization gives \(\Xi_0=0\), Proposition~\ref{prop:exact_covariance_recursions} gives \(\Xi_k\succeq0\) for every \(k\geq0\). Equation~\eqref{eq:uniform_trace_comparison} therefore applies at every step. Since \(\operatorname{tr}(\Xi_0)=0\), iteration gives
\begin{equation}
\operatorname{tr}(\Xi_k)
\leq
\bar s^\star\delta_{\rm adm}^2
\sum_{j=0}^{k-1}
\left(
\rho_\Phi^2+\|HF\|_2^2\delta_{\rm adm}^2
\right)^j.
\end{equation}
Condition~\eqref{eq:uniform_tube_condition} makes this geometric series uniformly bounded, yielding
\begin{equation}
\operatorname{tr}(\Xi_k)
\leq
\frac{\bar s^\star\delta_{\rm adm}^2}
{1-\rho_\Phi^2-\|HF\|_2^2\delta_{\rm adm}^2}.
\end{equation}
Because \(\Xi_k\succeq0\), \(\|\Xi_k\|_F\leq\operatorname{tr}(\Xi_k)\), proving Eq.~\eqref{eq:uniform_recursive_tube}. Under the hypotheses of Proposition~\ref{prop:statistical_identification} and Assumption~\ref{ass:matched_initialization}, applied consistently on every finite prefix, Eq.~\eqref{eq:covariance_mismatch_statistical} applies at every finite \(k\), so \(\Xi_k=\E[e_ke_k^{\mathsf T}\mid\cG]\). Taking the trace and then the supremum proves Eq.~\eqref{eq:uniform_state_tube}.
\end{proof}

For a time-invariant reference loop, strict Euclidean contraction excludes some Schur-stable systems with substantial nonnormal transient amplification, so these bounds are conservative sufficient extensions of the finite-horizon theory, not runtime certificates or bounds on infinite-horizon accumulated task risk.

\section{Repairability Geometry and Learned-Correction Construction}
\label{app:ml_and_numer_method}

\subsection{Repairability boundary and minimum correction budget}
\label{app:repair_bound_and_minim_correc}

\begin{proof}[Proof of Theorem~\ref{thm:repairability}]
For every \(\Delta\in\mathcal V_{\rm corr}\), Frobenius orthogonality gives
\[
\|A_k+\Delta\|_F^2
=
\|A_k^\perp\|_F^2
+
\|A_k^\parallel+\Delta\|_F^2.
\]
Minimizing the second term over \(\|\Delta\|_F\le\delta_c\) amounts to projecting \(-A_k^\parallel\) onto the Frobenius ball of radius \(\delta_c\). One minimizing correction is
\[
\Delta_k^{\rm opt}
=
\begin{cases}
0,
&
\|A_k^\parallel\|_F=0,
\\[1mm]
-\min\!\left\{
1,\,
\dfrac{\delta_c}{\|A_k^\parallel\|_F}
\right\}A_k^\parallel,
&
\|A_k^\parallel\|_F>0.
\end{cases}
\]
Therefore
\[
\min_{\Delta\in\mathcal C_k}
\|A_k+\Delta\|_F^2
=
\|A_k^\perp\|_F^2
+
\bigl(\|A_k^\parallel\|_F-\delta_c\bigr)_+^2,
\]
which is the stated expression for \(\varepsilon_k^2\). Because the minimum is attained at \(\Delta_k^{\rm opt}\), a correction with \(\|A_k+\Delta\|_F\le\delta_{\rm adm}\) exists if and only if \(\varepsilon_k\le\delta_{\rm adm}\).
\end{proof}

The minimum correction radius required to enter the local admissible ball follows directly from Theorem~\ref{thm:repairability}. For a raw computational defect \(A_k\), the smallest correction radius for which local admissibility can be reached is \(\delta_{c,k}^{\min}=+\infty\) if \(\|A_k^\perp\|_F>\delta_{\rm adm}\), and otherwise
\[
\delta_{c,k}^{\min}
=
\left[
\|A_k^\parallel\|_F
-
\sqrt{
\delta_{\rm adm}^2-\|A_k^\perp\|_F^2
}
\right]_+ .
\]

Indeed, by the orthogonal decomposition above, if \(\|A_k^\perp\|_F>\delta_{\rm adm}\), no correction in the declared subspace can produce a locally admissible gain. Otherwise, admissibility requires
\[
\|A_k^\parallel+\Delta\|_F
\le
\sqrt{
\delta_{\rm adm}^2-\|A_k^\perp\|_F^2
},
\]
and $\delta_{c,k}^{\min}$ is the distance from the origin to the ball centered at \(-A_k^\parallel\) with radius \(\sqrt{\delta_{\rm adm}^2-\|A_k^\perp\|_F^2}.\)
Consequently,
\[
\varepsilon_k\le\delta_{\rm adm}
\quad\Longleftrightarrow\quad
\delta_{c,k}^{\min}\le\delta_c.
\]

The four local regimes therefore satisfy
\[
\text{subspace-obstructed}
\Longleftrightarrow
\delta_{c,k}^{\min}=+\infty,
\]
and
\[
\text{budget-obstructed}
\Longleftrightarrow
\delta_c<\delta_{c,k}^{\min}<+\infty.
\]

\subsection{Correction basis and learned model}
\label{app:correc_basis_and_learn_model}

The correction basis is constructed independently at each solver checkpoint \(t\) from the corresponding offline local computational-defect samples. Let \(A_{q,t}=K_{q,t}^{\rm alg}-K_q^{\rm loc}\) denote the raw local computational defect on basis-construction sample \(q\), and form the raw sample matrix
\[
Z_t
=
\begin{bmatrix}
\operatorname{vec}_{\rm row}(A_{1,t}) &
\cdots &
\operatorname{vec}_{\rm row}(A_{N_t^{\rm acq},t})
\end{bmatrix}
\in\mathbb R^{nm\times N_t^{\rm acq}}.
\]
No centering or coordinate normalization is applied before the decomposition
\[
Z_t
=
\mathsf U_t\Sigma_t\mathsf V_t^{\mathsf T}.
\]
Let \(\nu_t\) denote the numerical rank under a fixed singular-value threshold and define \(d_{{\rm eff},t}=\min\{d_{\rm cap},\nu_t\}\). Only numerically supported singular directions are retained. For \(1\le i\le d_{{\rm eff},t}\), the correction-basis matrices are
\[
U_{t,i}
=
\operatorname{unvec}_{\rm row}
\!\left([\mathsf U_t]_{:,i}\right),
\]
and satisfy
\[
\langle U_{t,i},U_{t,j}\rangle_F
=
\delta_{ij}.
\]
The basis is not completed within the numerical null space.

Residual features use a separate checkpoint-specific scaling fitted on the designated fit window. For \(z_{q,t} = \operatorname{vec}_{\rm row}(\rho_{q,t}^{\rm alg}),\) define the componentwise RMS amplitudes
\[
r_{t,j} :=
\left(
\frac{1}{N_t^{\rm fit}}
\sum_{q=1}^{N_t^{\rm fit}}
[z_{q,t}]_j^2
\right)^{1/2}.
\]
The frozen diagonal feature scale is
\[
[R_t^{\rm feat}]_{jj}
=
\max\{r_{t,j},f_t\},
\]
where \(f_t>0\) is a fixed machine-scale floor used to prevent vanishing feature scales, and the feature vector is \(\phi_{q,t}=(R_t^{\rm feat})^{-1}z_{q,t}\).
The absence of centering preserves the zero-residual to zero-feature relation used in Sec.~\ref{subsec:bound_ml_correc}.

The exact local gain enters only during offline construction of the training targets. For a training sample,
\[
A_{k,t}
=
K_{k,t}^{\rm alg}-K_k^{\rm loc},
\]
and the supported coordinates are
\[
a_{k,t}
=
\begin{bmatrix}
\langle U_{t,1},A_{k,t}\rangle_F &
\cdots &
\langle U_{t,d_{{\rm eff},t}},A_{k,t}\rangle_F
\end{bmatrix}^{\mathsf T} \in
\R^{d_{{\rm eff},t}}.
\]
The checkpoint-local correction radius \(\delta_{c,t}^{\star}\) is calibrated offline from a prescribed upper quantile of \(\|a_{k,t}\|_2\) on a disjoint calibration window. The ridge-training target is the radial projection
\[
c_{k,t}^{\rm tar}
=
\operatorname{proj}_{\|c\|_2\le\delta_{c,t}^{\star}}
(a_{k,t}) \in\R^{d_{{\rm eff},t}}.
\]
For each supported checkpoint, the no-intercept ridge map is fitted on the corresponding fit samples according to
\begin{align*}
\Theta_t
\in
\arg\min_{\Theta}
\Biggl\{
&\frac{1}{N_t^{\rm fit}}
\sum_{j=1}^{N_t^{\rm fit}}
\|\Theta\phi_{j,t}-c_{j,t}^{\rm tar}\|_2^2
\\
&+
\lambda\|\Theta\|_F^2
\Biggr\},
\qquad
\Theta\in\R^{d_{{\rm eff},t}\times nm},
\end{align*}
where \(\lambda>0\) and \(N_t^{\rm fit}\) is the number of fit samples.

The basis, feature scales, correction radius, and ridge map are frozen before deployment. For the recursive CG experiments, \(d_{\rm cap}=64\), the correction radius uses the \(0.95\) nearest-rank quantile of the checkpoint-local projected defect amplitudes, and the ridge parameter is \(\lambda=10^{-2}\). For the standard 400-step commissioning protocol, the basis-construction, calibration, and fit windows contain 100, 100, and 200 steps, respectively; the proportional-commissioning horizon study preserves the same \(1:1:2\) partition as the commissioning length is varied. The deployed feature map, coefficient projection, and bounded gain correction are defined in Sec.~\ref{subsec:bound_ml_correc}. The offline target is constructed to cancel the supported component of the local computational defect subject to the correction-radius constraint; it is not chosen to minimize either the represented residual norm or the finite-horizon task-response functional.

\section{Numerical Methods and Supporting Results}

\subsection{Benchmark construction}
\label{app:benchmark_construction}

The benchmark network data are the unmodified IEEE cases distributed with pandapower~3.5.4~\cite{pandapower}. Each case has a unique external-grid bus, which defines the reference bus. For an \(N\)-bus case, the state uses reduced rectangular voltage coordinates: all \(N\) real voltage components and all imaginary components except that of the reference bus, giving \(n=2N-1\). The measurement pool contains these voltage coordinates together with the real and imaginary branch-current coordinates at both ends of each in-service branch, obtained from the branch-admittance matrices. Every scalar measurement row is normalized to unit Euclidean norm.

Within each recursive realization, the model instance is fixed. In every recursive study, a fixed permutation of the complete measurement pool defines nested models: \(H\) at dimension \(m\) contains the first \(m\) rows, and \(R\) is the corresponding leading principal submatrix of a covariance constructed for the complete ordered pool. Thus \(m\) denotes the number of scalar measurement coordinates rather than the number of physical measurement devices. Measurement-noise variances use nominal scales \(2.5\times10^{-3}\) for voltage coordinates and \(6.0\times10^{-3}\) for branch-current coordinates, multiplied by independent log-normal factors with log standard deviation \(0.35\). A common correlation coefficient drawn in \([0,0.15)\) couples the real and imaginary coordinates associated with the same bus voltage or branch-end current; other off-diagonal correlations vanish. The resulting \(R\) is fixed over the recursive chronology and is a synthetic benchmark covariance rather than a field-calibrated sensor-noise model.

The state-transition matrix is synthetic but topology informed. The network graph Laplacian is constructed from the in-service lines and transformers, normalized by its largest eigenvalue, duplicated over the real and imaginary voltage coordinates, and reduced by removing the reference-bus imaginary coordinate. The transition matrix combines an identity contribution with coefficient drawn uniformly from \([0.94,0.985]\), topology-dependent Laplacian damping with coefficient drawn uniformly from \([0.01,0.08]\), and a small random skew-symmetric perturbation generated from Gaussian entries with standard deviation \(0.003\).

The base process covariance \(Q_{\rm base}\) combines heterogeneous diagonal variances with a positive-semidefinite correlated component of rank at most four. Its overall process-noise amplitude is drawn log-uniformly from \([10^{-3},8\times10^{-3}]\) and is squared in forming the covariance; the diagonal factors have log standard deviation \(0.25\), and the correlated-component strength is drawn uniformly from \([0,0.5]\). For transition \(k\to k+1\) during a recursive realization,
\[
Q_k=c_kQ_{\rm base},
\]
where \(\log c_k\) follows a pre-generated first-order autoregressive process with coefficient \(0.98\) and innovation standard deviation \(0.03\), bounded to \([-0.25,0.25]\) at each step. Hence \(c_k\in[e^{-0.25},e^{0.25}].\)
The complete process-covariance schedule is generated before execution and is shared by the compared policies.

For the recursive CG, runtime, and signed-quartic studies, the covariance supplied to the first recorded measurement update is
\[
P_{\rm init}^{-}
=
\frac{\operatorname{tr}(Q_{\rm base})}{n}I,
\]
with no Riccati burn-in. No additional prediction is applied before this first update: \(P_{\rm init}^{-}\) is the common supplied prediction prior at this boundary, with the prediction recursions applied from the next step. For state-trajectory evaluations, the simulated initial state is zero-mean Gaussian with covariance \(P_{\rm init}^{-}\), the reference estimate starts at zero, and compared policies within a configuration share the same physical state and noise realization. After the shared reference posterior at the commissioning stage, a common prediction supplies the first deployment prior.

\subsection{Classical solver implementations}
\label{app:classi_solve_implem}

The classical experiments use zero-start, unpreconditioned CG independently on the \(n\) transposed gain-system right-hand sides of Eq.~\eqref{eq:column_gain_systems} in the main text.
Let \(x_k^{(j)}\) denote the exact solution and \(x_{k,t}^{(j)}\) the order-\(t\) CG approximation. The returned vectors are assembled into \((K_{k,t}^{\rm alg})^{\mathsf T}\). An iteration \(t\) denotes the represented gain associated with the prescribed order-\(t\) CG approximation path.

In exact arithmetic, CG minimizes the solution error in the \(S_k\)-energy norm over the corresponding Krylov approximation space,
\[
x_{k,t}^{(j)}
\in
\arg\min_{x\in\mathcal K_{k,t}^{(j)}}
\|x_k^{(j)}-x\|_{S_k},
\qquad
\|y\|_{S_k}:=(y^{\mathsf T}S_ky)^{1/2},
\]
where, for the zero initial guess and \(r_{k,0}^{(j)}=[H\widetilde P_k]_{:,j}\),
\[
\mathcal K_{k,t}^{(j)}
:=
\operatorname{span}
\left\{
r_{k,0}^{(j)},
S_kr_{k,0}^{(j)},
\ldots,
S_k^{t-1}r_{k,0}^{(j)}
\right\},
\qquad t\ge1,
\]
with \(\mathcal K_{k,0}^{(j)}:=\{0\}\).
Because these Krylov spaces are nested,
\[
\|x_k^{(j)}-x_{k,t+1}^{(j)}\|_{S_k}
\le
\|x_k^{(j)}-x_{k,t}^{(j)}\|_{S_k}.
\]
Moreover, in exact arithmetic CG recovers the solution in at most \(m\) iterations. Its standard worst-case convergence estimate is
\[
\|x_k^{(j)}-x_{k,t}^{(j)}\|_{S_k}
\le
2
\left(
\frac{\sqrt{\kappa(S_k)}-1}
{\sqrt{\kappa(S_k)}+1}
\right)^t
\|x_k^{(j)}-x_{k,0}^{(j)}\|_{S_k},
\]
where \(\kappa(S_k) = \frac{\sigma_{\max}(S_k)} {\sigma_{\min}(S_k)}.\)
This is a worst-case energy-norm bound and does not imply monotonic decrease of the represented matrix residual at every iteration~\cite{greenbaum1997iterative,saad2003iterative}.

Under the conventional CG residual convention,
\[
r_{k,t}^{(j)}
:=
[H\widetilde P_k]_{:,j}
-
S_kx_{k,t}^{(j)},
\]
the corresponding column of the transposed matrix residual satisfies
\[
[(\rho_{k,t}^{\rm alg})^{\mathsf T}]_{:,j}
=
-r_{k,t}^{(j)}
\]
in exact arithmetic. The residual used by the execution interface is nevertheless recomputed from the assembled represented gain,
\[
\rho_{k,t}^{\rm alg}
=
K_{k,t}^{\rm alg}S_k-\widetilde P_kH^{\mathsf T},
\]
rather than taken solely from the recurrence-updated CG residual. This keeps subsequent evaluation tied to the actual returned gain.

Computational work is accounted for separately from the iterations. For an explicitly stored dense \(S_k\), an order-\(t\) CG path over all \(n\) right-hand sides has leading matrix--vector cost \(O(tnm^2)\) in the absence of early convergence, while forming the represented matrix residual requires \(O(nm^2)\) operations once the gain system has been constructed. A dense Cholesky computation of the exact local gain requires an \(O(m^3)\) factorization followed by \(n\) triangular solves with total cost \(O(nm^2)\). In a matrix-free implementation, the corresponding costs are instead determined by applications of the innovation operator. These operation counts describe individual computational components and do not by themselves establish an advantage for truncated CG.

\subsection{Runtime comparison protocol}
\label{app:runtime_compar_proto}

The policies in Table~\ref{tab:runtime-complete} begin from the same state estimate and covariance at the end of commissioning and process the same observation sequence. Each policy subsequently evolves its own state and covariance through its executed gains. The comparison therefore imposes a common admissibility requirement without requiring identical gain systems or equal achieved accuracy. Every candidate is checked by the represented-residual certificate, with verified fallback invoked upon rejection.

Selected M-CG uses a fixed number of CG iterations, while selected LC-CG applies the commissioned bounded correction to its CG candidate. The iteration counts are selected from the previously evaluated certified frontiers and fixed before timing. ``M-CG at LC-CG iteration" uses the selected LC-CG iteration count without learned correction. No iteration count or model parameter is adjusted using the timing results.

Warm-start CG solves the $n$ systems associated with $S_k K^{\mathsf T}=H\widetilde P_k$, using the previous executed gain as the initial guess. At the first deployment step, this guess is the final commissioning gain. Each system uses zero relative tolerance, an absolute residual tolerance of $\ell_k\delta_{\rm adm}/\sqrt{n}$, and at most $m=64$ iterations. Solver termination is followed by independent recomputation and certification of the complete gain residual. Warm-start PCG uses the same initialization and stopping criteria, together with a fixed Cholesky preconditioner constructed from the final commissioning innovation covariance. Its factorization is charged to offline setup, while its application through triangular solves is included in online runtime.

The innovation-form direct method computes a Cholesky factorization of the current $S_k$ and solves all gain right-hand sides together. The information-form method precomputes
\[
G=H^{\mathsf T}R^{-1}, \qquad \Lambda_0=GH
\]
using Cholesky solves with the fixed measurement covariance $R$. This reusable computation is charged to offline setup. At each deployment step, the candidate gain $L_k$ is obtained by solving
\[
\bigl(I_n+\widetilde P_k \Lambda_0\bigr)L_k=\widetilde P_k G
\]
through a general lower--upper factorization, since the coefficient matrix is not necessarily symmetric. Both direct candidates are verified against the current represented innovation system by the same residual certificate used for approximate candidates and for verified fallback.

Matrix computations use double-precision floating-point arithmetic, with execution restricted to one logical processor and single-threaded numerical libraries. Each policy receives an untimed eight-step warm-up before deployment. Seven timing repetitions are performed from the same commissioning handoff, with policy order cyclically rotated across repetitions. Online runtime includes initialization and all operations within the deployment steps, with the trained model already loaded in memory. Post-step diagnostic recording, file input and output, and external orchestration are excluded. Table~\ref{tab:runtime-complete} reports these repeated timings and additional reusable setup costs. The component measurements in Table~\ref{tab:runtime-decomposition} come from separate profiling runs with nonoverlapping timing categories, so their totals need not equal the repeated-runtime means. Common offline preparation and historical frontier-search costs are excluded.

\subsection{Quantum solver implementations}
\label{app:quantu_solve_implem}

\subsubsection{VQLS reconstruction}
\label{app:vqls_reconstr_and_solver_diagn}

A VQLS candidate enters the framework through the represented-gain interface of Sec.~\ref{subsec:solver_residual_interface}. For the \(m=2\) recursive realizations reported in Sec.~\ref{subsec:quantu_solve_candi}, no dimensional padding is required, and the 40-step commissioning interval is partitioned into 10 steps for correction-basis construction, 10 for capacity calibration of the correction radius, and 20 for learner fitting. The Statevector and finite-shot Sampler realizations scale \(S_k\) by \(\|S_k\|_2\), normalize each nonzero right-hand side, and use the one-parameter ansatz \(\lvert\psi(\theta)\rangle=R_Y(\theta)\lvert0\rangle\) with the global VQLS objective and COBYLA \cite{powell1994cobyla} with at most 100 objective evaluations for each nonzero right-hand side. The Statevector objective is evaluated exactly with optimizer tolerance \(10^{-8}\); the finite-shot Sampler uses 64 shots for each \(X\)- and \(Z\)-measurement setting at every stochastic objective evaluation and optimizer tolerance \(10^{-4}\). After COBYLA terminates, the terminal parameter \(\theta^\star\) is frozen and the terminal variational state is reconstructed deterministically; no additional terminal Sampler measurement is performed. For the Origin realization, fresh finite-shot Sampler runs with COBYLA optimization provide the terminal variational parameters, after which fresh terminal \(X\)- and \(Z\)-measurements are executed on the Origin Wukong processor (\texttt{WK\_C180}) using 64 shots per setting; no variational optimization is performed on hardware.

For all three realizations, the 40-step commissioning interval follows the reference covariance chronology. Approximate gains are used to construct the realization-specific correction data but are not propagated into the commissioning covariance. The resulting corrector is frozen before deployment. For each nonzero right-hand side \(y_k^{(j)}:=[H\widetilde P_k]_{:,j}\), let \(\psi_k^{(j)}\in\mathbb R^m\) denote the reconstructed unit solution direction. The least-squares amplitude and reconstructed approximate solution are
\[
\alpha_k^{(j)}
=
\frac{\bigl(S_k\psi_k^{(j)}\bigr)^{\mathsf T}y_k^{(j)}}
{\bigl\|S_k\psi_k^{(j)}\bigr\|_2^2},
\qquad
x_k^{(j)}=\alpha_k^{(j)}\psi_k^{(j)} .
\]
For the Statevector and finite-shot Sampler realizations, the solution direction is obtained from the frozen terminal variational state. For the Origin realization, it is reconstructed from the terminal hardware \(X\)- and \(Z\)-measurement data. A zero right-hand side gives the zero solution analytically. The \(n\) reconstructed solutions are then assembled into \(K_k^{\rm alg}\), after which the represented matrix residual is recomputed from the complete gain. Correction, certification, fallback, and recursive execution are applied only at this complete-gain level.

\subsubsection{Binary encoding of the QUBO formulation}
\label{app:binary_encod_of_the_qubo}

Here we give the QUBO encoding of the current implemented gain system used in Sec.~\ref{subsubsec:qubo_dwave}. The QUBO construction uses the same transposed local-gain systems defined in Eq.~\eqref{eq:column_gain_systems} in the main text, equivalently \(K_k^{\rm loc}S_k=\widetilde P_kH^{\mathsf T}\). The system is separable over the \(n\) columns of \((K_k^{\rm loc})^{\mathsf T}\).
Define
\[
x_k^{(j)}
:=
[(K_k^{\rm loc})^{\mathsf T}]_{:,j}
\in\R^m,
\qquad
y_k^{(j)}
:=
[H\widetilde P_k]_{:,j}
\in\R^m.
\]
Then
\[
S_k x_k^{(j)}
=
y_k^{(j)},
\qquad
j=1,\ldots,n,
\]
and, since \(S_k\succ0\),
\begin{equation}
x_k^{(j)}
=
\arg\min_{x\in\R^m}
\|S_kx-y_k^{(j)}\|_2^2.
\label{eq:colwise_ls}
\end{equation}

Let \(n_{\rm bit}\ge1\) denote the bit depth used to encode each scalar component. For the \(j\)-th right-hand side at step \(k\), choose a scalar encoding interval
\[
[x_{\min,k}^{(j)},x_{\max,k}^{(j)}],
\]
and define
\[
h_k^{(j)}
:=
\frac{
x_{\max,k}^{(j)}-x_{\min,k}^{(j)}
}{
2^{n_{\rm bit}}-1
}.
\]
This gives \(2^{n_{\rm bit}}\) equally spaced scalar levels on the prescribed interval. Applying the encoding independently to the \(m\) components gives
\begin{equation}
[x_{k,n_{\rm bit}}^{(j)}(\boldsymbol{\vartheta})]_i
=
x_{\min,k}^{(j)}
+
h_k^{(j)}
\sum_{\nu=0}^{n_{\rm bit}-1}
2^\nu\vartheta_{i\nu},
\qquad
i=1,\ldots,m,
\label{eq:binary_enc}
\end{equation}
with \(\vartheta_{i\nu}\in\{0,1\}\).

The encoding interval needs to be wide enough to contain the exact solution if range truncation is to be excluded. Since
\[
x_k^{(j)}
=
S_k^{-1}y_k^{(j)},
\]
we have
\begin{equation}
\|x_k^{(j)}\|_\infty
\le
\|x_k^{(j)}\|_2
\le
\frac{\|y_k^{(j)}\|_2}{\sigma_{\min}(S_k)}.
\label{eq:gain_range_bound}
\end{equation}
In exact arithmetic, \(S_k\succeq\underline r I_m\) therefore gives the conservative bound
\begin{equation}
\gamma_{k,\rm cons}^{(j)}
:=
\frac{\|y_k^{(j)}\|_2}{\underline r},
\label{eq:conservative_encoding_range}
\end{equation}
so that the symmetric choice
\[
x_{\min,k}^{(j)}
=
-\gamma_{k,\rm cons}^{(j)},
\qquad
x_{\max,k}^{(j)}
=
\gamma_{k,\rm cons}^{(j)}
\]
contains every component of \(x_k^{(j)}\). For a represented floating-point construction, the corresponding guaranteed interval requires a separately verified positive lower bound for the stored \(S_k\), as in Sec.~\ref{subsec:residual_certifi_and_fail_close}. A conservative interval can be wider than necessary, which increases the grid spacing at fixed \(n_{\rm bit}\). Increasing \(n_{\rm bit}\) reduces this representation spacing but also increases the number of binary variables; it improves encoding resolution, not necessarily the quality of a finite-time QUBO optimizer.

For the remainder of the construction, fix \(k\) and \(j\). Stack the \(mn_{\rm bit}\) binary variables into
\[
\boldsymbol{\vartheta} \in \{0,1\}^{mn_{\rm bit}}.
\]
Define the coding matrix
\[
\mathsf C_{n_{\rm bit}}
:=
I_m
\otimes
(1,2,4,\ldots,2^{n_{\rm bit}-1})
\in
\R^{m\times mn_{\rm bit}}.
\]
Then
\[
x_{k,n_{\rm bit}}^{(j)}(\boldsymbol{\vartheta})
=
x_{\min,k}^{(j)}\mathbf 1_m
+
h_k^{(j)}
\mathsf C_{n_{\rm bit}}
\boldsymbol{\vartheta}.
\]
Define
\[
\zeta_k^{(j)}
:=
y_k^{(j)}
-
S_kx_{\min,k}^{(j)}\mathbf 1_m
\in\R^m.
\]
Substitution into Eq.~\eqref{eq:colwise_ls} then gives
\[
S_kx_{k,n_{\rm bit}}^{(j)}(\boldsymbol{\vartheta})
-
y_k^{(j)}
=
h_k^{(j)}
S_k\mathsf C_{n_{\rm bit}}\boldsymbol{\vartheta}
-
\zeta_k^{(j)}.
\]
Hence, using \(\vartheta_r^2=\vartheta_r\) for binary variables,
\begin{align}
\min_{\boldsymbol{\vartheta}\in\{0,1\}^{mn_{\rm bit}}}
&
\left\|
h_k^{(j)}
S_k\mathsf C_{n_{\rm bit}}\boldsymbol{\vartheta}
-
\zeta_k^{(j)}
\right\|_2^2
\nonumber\\
=
\min_{\boldsymbol{\vartheta}\in\{0,1\}^{mn_{\rm bit}}}
&
\Biggl[
\boldsymbol{\vartheta}^{\mathsf T}
\Bigl(
(h_k^{(j)})^2
\mathsf C_{n_{\rm bit}}^{\mathsf T}
S_k^{\mathsf T}S_k
\mathsf C_{n_{\rm bit}}
\nonumber\\
&
\;
-
\operatorname{diag}
\!\bigl(
2h_k^{(j)}
\mathsf C_{n_{\rm bit}}^{\mathsf T}
S_k^{\mathsf T}\zeta_k^{(j)}
\bigr)
\Bigr)
\boldsymbol{\vartheta}
+
\|\zeta_k^{(j)}\|_2^2
\Biggr].
\label{eq:qubo_obj}
\end{align}
Equation~\eqref{eq:qubo_obj} is a quadratic unconstrained binary optimization problem in \(mn_{\rm bit}\) binary variables for each right-hand side. Writing the symmetric matrix multiplying \(\boldsymbol{\vartheta}\) in Eq.~\eqref{eq:qubo_obj} as \(Q_{\rm sym}\), a representation that stores each pair \(r<s\) only once uses linear coefficient \([Q_{\rm sym}]_{rr}\) for variable \(r\) and pairwise coefficient \(2[Q_{\rm sym}]_{rs}\) for \(r<s\), with \(\|\zeta_k^{(j)}\|_2^2\) as the constant offset. Constructing a complete gain candidate requires obtaining and decoding the corresponding solution for all \(n\) right-hand sides. After decoding, the \(n\) solution vectors are assembled into \(K_k^{\rm alg}\), and the represented matrix residual \(\rho_k^{\rm alg}\) is recomputed independently.

\begin{figure*}[t]
  \centering
  \begin{overpic}[width=\textwidth]{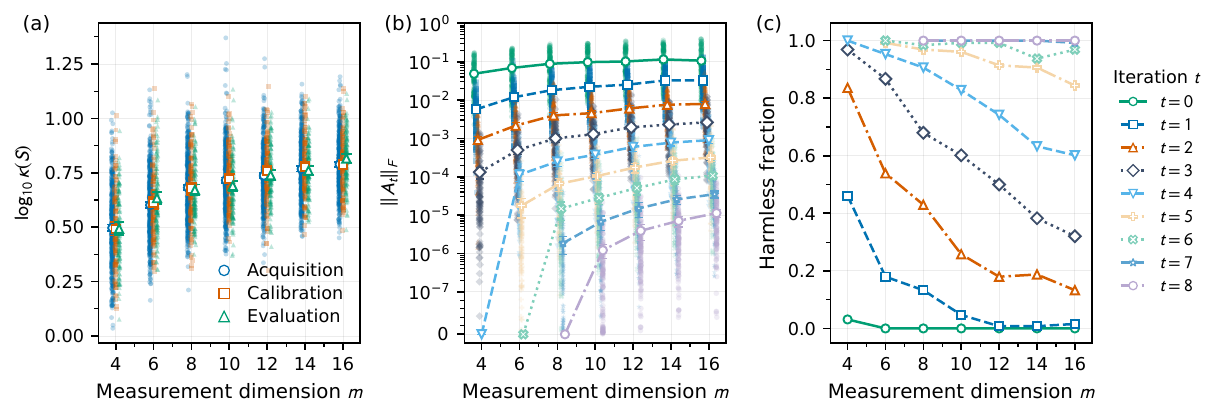}
  \end{overpic}
  \caption{
  Local numerical landscape of finite-CG gain computation for a set of IEEE 14-derived local instances based on the pandapower benchmark \cite{pandapower}. (a) Innovation-system conditioning \(\log_{10}\kappa(S)\) across acquisition (circles), calibration (squares), and evaluation (triangles) instances for \(m\in\{4,6,8,10,12,14,16\}\). Markers denote the mean \(\log_{10}\kappa(S)\) over the corresponding acquisition, calibration, or evaluation instances at each \(m\), and error bars show \(95\%\) confidence intervals based on the Student distribution. (b) Solver-induced local computational defect \(\|A_t\|_F=\|K_t^{\mathrm{alg}}-K^{\mathrm{loc}}\|_F\) along zero-start incremental-CG trajectories at the available checkpoints. The vertical axis is linear up to $10^{-7}$ and logarithmic above, so that numerically near-zero defects, as observed for \(m=t=4\), can be shown. (c) Empirical fraction of evaluation instances whose raw computational defect lies within the corresponding predeclared local classification radius and is therefore classified as harmless (cf.~Table~\ref{tab:regions}).
  }
  \label{fig:local_solver_landscape}
\end{figure*}

For the D-Wave realization, each right-hand-side problem uses the symmetric interval \([-\gamma_k^{(j)},\gamma_k^{(j)}]\), where \(\gamma_k^{(j)}\) is an outward-rounded upper bound on \(\|y_k^{(j)}\|_2/\ell_k\) obtained from the verified lower bound for the represented \(S_k\). Each eight-variable logical QUBO is sampled on the D-Wave Advantage2 System JUPSI (\(\texttt{Advantage2\_system2}\)) located at Forschungszentrum Jülich, using a fixed embedding into 12 physical qubits, 1000 reads, a \(20\,\mu{\rm s}\) annealing time, automatic scaling, uniform-torque-compensation chain strength with prefactor \(1.414\), and majority-vote unembedding. From each returned sample set, the decoded solution is selected solely by the minimum energy recomputed under the original unscaled logical QUBO, with exact ties resolved lexicographically. The nine selected right-hand-side solutions are assembled into the complete gain before its represented residual is recomputed.

\subsection{Supporting numerical results}
\label{app:supporting_numerical_results}

\paragraph{Local solver landscape.}

We characterize finite-CG gain computation on a set of IEEE 14-derived local instances \cite{pandapower}. Figure~\ref{fig:local_solver_landscape} shows innovation-system conditioning, local gain defects, and the fraction of evaluation instances within the corresponding predeclared local classification radius across measurement dimensions and CG checkpoints. This radius is used only to classify local gain defects.

\paragraph{Measurement and network trajectories.}

The frontier comparisons in Figs.~\ref{fig:measurement_scaling} and \ref{fig:network_resolved} are complemented in Figs.~\ref{fig:measurement_trajectory_response} and \ref{fig:network_trajectory_response} by the corresponding deployment trajectories, with each policy evaluated at its own selected zero-fallback iteration.

\begin{figure*}[t]
  \centering
  \includegraphics[width=\textwidth]{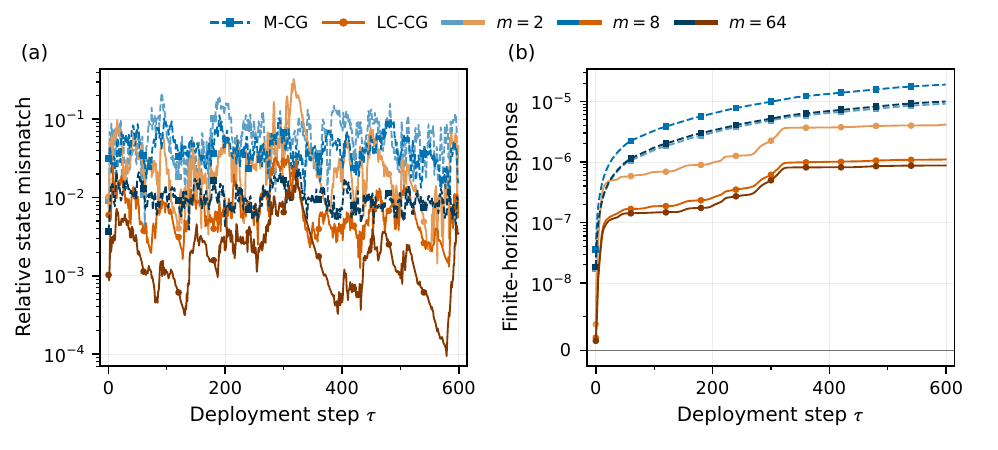}
  \caption{Deployment trajectories corresponding to Fig.~\ref{fig:measurement_scaling}, shown for \(m=2,8,64\). (a) Per-step relative state mismatch [Eq.~\eqref{eq:relative_state_mismatch}]. (b) Cumulative finite-horizon response.}
  \label{fig:measurement_trajectory_response}
\end{figure*}

\begin{figure*}[t]
  \centering
  \includegraphics[width=\textwidth]{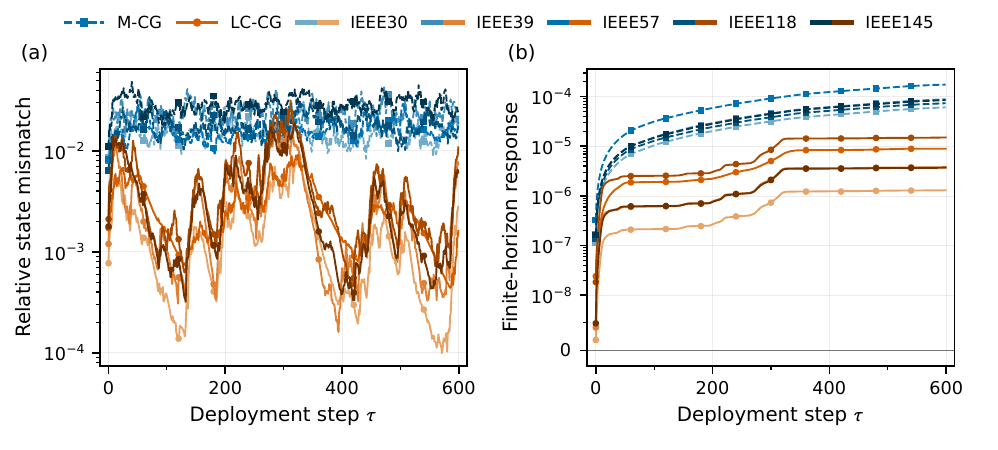}
  \caption{Deployment trajectories corresponding to Fig.~\ref{fig:network_resolved}, shown at \(m=64\) for IEEE 30, 39, 57, 118, and 145. (a) Per-step relative state mismatch [Eq.~\eqref{eq:relative_state_mismatch}]. (b) Cumulative finite-horizon response.}
  \label{fig:network_trajectory_response}
\end{figure*}

\paragraph{Runtime cost decomposition}

\begin{table*}[t]
\caption{
Profiled online-cost decomposition of the selected M-CG and LC-CG policies in the IEEE 14 tolerance study. Each column is a separate instrumented 600-step profiling pass; M$t$ and LC$t$ denote M-CG and LC-CG executed at iteration $t$. All entries are in seconds. Online setup and overhead combines initialization, innovation-problem formation, and online elapsed time not assigned to the remaining component timers. Candidate generation is the measured fixed-checkpoint candidate-construction scope; at LC0 no CG matrix--vector products are performed, and the small nonzero entry is due to zero-checkpoint candidate construction and validation. A dash denotes a component that is not executed. No verified fallback is invoked by the selected M-CG or LC-CG policies shown. Component times sum close to the complete profiling time before display rounding; because these are separate instrumented passes, their totals need not equal the seven-repeat means in Table~\ref{tab:runtime-complete}.
}
\label{tab:runtime-decomposition}
\begin{ruledtabular}
\begin{tabular}{l|rr|rr|rr|rr}
Component
& \multicolumn{2}{c|}{$\eta=10^{-3}$}
& \multicolumn{2}{c|}{$\eta=10^{-2}$}
& \multicolumn{2}{c|}{$\eta=10^{-1}$}
& \multicolumn{2}{c}{$\eta=0.2$} \\
& M10 & LC7
& M7 & LC4
& M3 & LC1
& M3 & LC0 \\
\colrule
Online setup and overhead
& 0.0898 & 0.0961
& 0.0903 & 0.0963
& 0.0899 & 0.0962
& 0.0900 & 0.0950 \\

Candidate generation
& 5.0639 & 3.5761
& 3.5723 & 2.0838
& 1.5830 & 0.5756
& 1.5676 & 0.0209 \\

Learner evaluation
& \textemdash & 0.0311
& \textemdash & 0.0317
& \textemdash & 0.0279
& \textemdash & 0.0251 \\

Correction construction and guarding
& \textemdash & 0.1790
& \textemdash & 0.1695
& \textemdash & 0.1687
& \textemdash & 0.1613 \\

Residual recomputation
& 0.0104 & 0.0190
& 0.0105 & 0.0190
& 0.0104 & 0.0188
& 0.0104 & 0.0184 \\

Certification
& 3.7792 & 3.7586
& 3.8122 & 3.8225
& 3.7829 & 3.7700
& 3.8722 & 3.8380 \\

Recursive update
& 0.1189 & 0.1201
& 0.1188 & 0.1199
& 0.1183 & 0.1189
& 0.1189 & 0.1188 \\
\colrule
Complete profiling time
& 9.0622 & 7.7802
& 7.6041 & 6.3427
& 5.5845 & 4.7762
& 5.6591 & 4.2775 \\
\end{tabular}
\end{ruledtabular}
\end{table*}

The component profiles show that, at each evaluated tolerance, the reduction in candidate-generation cost from the lower LC-CG checkpoint exceeds the additional learner-evaluation, correction-construction, and residual-recomputation costs. At \(\eta=0.2\), the selected LC policy operates at \(t=0\), so no CG matrix--vector products are performed. Learner evaluation, correction construction, and residual recomputation together account for \(4.79\%\) of the complete profiled online time, whereas independent certification accounts for \(89.73\%\). Certification is therefore the dominant remaining online cost in this zero-iteration regime.

\paragraph{VQLS residual quality and solver cost.}
\label{app:vqls_residual_quality_and_solver_cost}

To complement the recursive-execution results, we separately examine the residual quality and computational cost of the underlying innovation-space solver calls. Figure~\ref{fig:vqls_solver_diagnostics} compares classical CG with Statevector and finite-shot VQLS realizations using global and local VQLS objectives as the innovation dimension is varied. The timing reported here characterizes solver-level candidate generation and is distinct from the complete online runtime considered for the classical recursive policies in Sec.~\ref{subsec:runtime}.

\begin{figure*}[t]
  \centering
  \includegraphics[width=\textwidth]{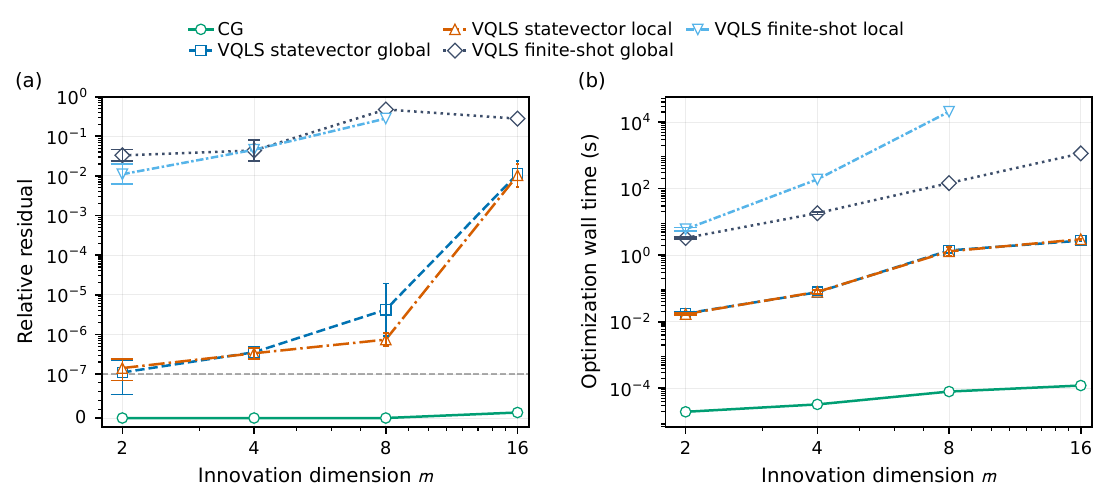}
  \caption{Single-right-hand-side solver comparison on IEEE 14-derived innovation systems as the innovation dimension \(m\) is varied, distinct from the complete-gain recursive experiments. Classical CG is compared with Statevector and finite-shot VQLS using global and local objectives. (a) Relative residual of the returned solution. (b) Wall time per solver invocation. VQLS markers show geometric means over optimizer restarts; error bars indicate \(95\%\) Student-\(t\) intervals in \(\log_{10}\) space. The horizontal \(10^{-7}\) line is a display guide. These implementation-specific timings are not end-to-end or resource-equivalent comparisons.}
  \label{fig:vqls_solver_diagnostics}
\end{figure*}

Figure~\ref{fig:vqls_solver_diagnostics} shows the solver-level residual quality and optimization wall time of the VQLS realizations relative to CG over the evaluated innovation dimensions; the interpretation of these implementation-specific timings is given in Sec.~\ref{subsec:quantu_solve_candi}.

% The \nocite command causes all entries in a bibliography to be printed out
% whether or not they are actually referenced in the text. This is appropriate
% for the sample file to show the different styles of references, but authors
% most likely will not want to use it.
% \nocite{*}

\bibliography{refs}% Produces the bibliography via BibTeX.

@article{ganeshamurthy2024bridging,
  author={Ganeshamurthy, Priyanka Arkalgud and Ghosh, Kumar and O'Meara, Corey and Cortiana, Giorgio and Schiefelbein-Lach, Jan and Monti, Antonello},
  journal={IEEE Access}, 
  title={Next Generation Power System Planning and Operation With Quantum Computation}, 
  year={2024},
  volume={12},
  number={},
  pages={182673-182692},
  doi={10.1109/ACCESS.2024.3509743}}

@article{golestan2023quantum,
  title={Quantum computation in power systems: An overview of recent advances},
  author={Golestan, Saeed and Habibi, MR and Mousavi, SY Mousazadeh and Guerrero, Josep M and Vasquez, Juan C},
  journal={Energy Reports},
  volume={9},
  pages={584--596},
  year={2023},
  publisher={Elsevier}
}

@ARTICLE{feng2024noisy,
  author={Feng, Fei and Zhang, Peng and Zhou, Yifan and Shamash, Yacov A.},
  journal={iEnergy}, 
  title={Noisy-intermediate-scale quantum power system state estimation}, 
  year={2024},
  volume={3},
  number={3},
  pages={135-141},
  doi={10.23919/IEN.2024.0019}}

@article{ji2026quantumdeep,
  title={Quantum Deep Learning: A Comprehensive Review},
  author={Ji, Yanjun and Chen, Zhao-Yun and Roth, Marco and Kreplin, David A and Schiffer, Christian and King, Martin and Anton, Oliver and Alam, M Sahnawaz and Krutzik, Markus and Willsch, Dennis and others},
  journal={arXiv preprint arXiv:2603.06644},
  year={2026}
}

@article{harrow2009quantum,
  title = {Quantum Algorithm for Linear Systems of Equations},
  author = {Harrow, Aram W. and Hassidim, Avinatan and Lloyd, Seth},
  journal = {Phys. Rev. Lett.},
  volume = {103},
  issue = {15},
  pages = {150502},
  numpages = {4},
  year = {2009},
  month = {Oct},
  publisher = {American Physical Society},
  doi = {10.1103/PhysRevLett.103.150502},
  url = {https://link.aps.org/doi/10.1103/PhysRevLett.103.150502}
}

@article{childs2017quantum,
  title={Quantum algorithm for systems of linear equations with exponentially improved dependence on precision},
  author={Childs, Andrew M and Kothari, Robin and Somma, Rolando D},
  journal={SIAM Journal on Computing},
  volume={46},
  number={6},
  pages={1920--1950},
  year={2017},
  publisher={SIAM}
}

@article{bravo2023variational,
  title={Variational quantum linear solver},
  author={Bravo-Prieto, Carlos and LaRose, Ryan and Cerezo, Marco and Subasi, Yigit and Cincio, Lukasz and Coles, Patrick J},
  journal={Quantum},
  volume={7},
  pages={1188},
  year={2023},
  publisher={Verein zur F{\"o}rderung des Open Access Publizierens in den Quantenwissenschaften}
}

@article{rogers2020floating,
  title={Floating-point calculations on a quantum annealer: Division and matrix inversion},
  author={Rogers, Michael L and Singleton Jr, Robert L},
  journal={Frontiers in Physics},
  volume={8},
  pages={265},
  year={2020},
  publisher={Frontiers Media SA}
}

@article{revach2022kalmannet,
  title={KalmanNet: Neural network aided Kalman filtering for partially known dynamics},
  author={Revach, Guy and Shlezinger, Nir and Ni, Xiaoyong and Escoriza, Adria Lopez and Van Sloun, Ruud JG and Eldar, Yonina C},
  journal={IEEE Transactions on Signal Processing},
  volume={70},
  pages={1532--1547},
  year={2022},
  publisher={IEEE}
}

@article{shi2024quantum,
  title={A quantum algorithm for the Kalman filter using block encoding},
  author={Shi, Hao and Zhang, Guofeng and Zhang, Ming},
  journal={arXiv preprint arXiv:2404.04554} ,
  year={2024}
}

@article{zhao2019dse,
  author  = {Zhao, Junbo and G{\'o}mez-Exp{\'o}sito, Antonio and Netto, Marcos and Mili, Lamine and Abur, Ali and Terzija, Vladimir and Kamwa, Innocent and Pal, Bikash Chandra and Singh, Abhinav Kumar and Qi, Junjian and Huang, Zhenyu and Meliopoulos, A. P. Sakis},
  title   = {Power System Dynamic State Estimation: Motivations, Definitions, Methodologies, and Future Work},
  journal = {IEEE Transactions on Power Systems},
  volume  = {34},
  number  = {4},
  pages   = {3188--3198},
  year    = {2019},
  doi     = {10.1109/TPWRS.2019.2894769}
}

@article{Freitag2020,
	author = {Freitag, Melina A.},
	title = {{Numerical linear algebra in data assimilation}},
	journal = {GAMM{-}Mitteilungen.},
	volume = {43},
	number = {3},
	pages = {e202000014},
	year = {2020},
	month = sep,
	issn = {0936-7195},
	publisher = {John Wiley {\&} Sons, Ltd},
	doi = {10.1002/gamm.202000014}
}

@article{LeProvost2022LowRankEnKF,
	author = {Le Provost, Mathieu and Baptista, Ricardo and Marzouk, Youssef and Eldredge, Jeff D.},
	title = {{A low-rank ensemble Kalman filter for elliptic observations}},
	journal = {Proc. A},
	volume = {478},
	number = {2266},
	year = {2022},
	month = oct,
	issn = {1364-5021},
	publisher = {The Royal Society},
	doi = {10.1098/rspa.2022.0182}
}

@InProceedings{pfortner2025compaware,
  title = 	 {Computation-Aware Kalman Filtering and Smoothing},
  author =       {Pf{\"o}rtner, Marvin and Wenger, Jonathan and Cockayne, Jon and Hennig, Philipp},
  booktitle = 	 {Proceedings of The 28th International Conference on Artificial Intelligence and Statistics},
  pages = 	 {2071--2079},
  year = 	 {2025},
  editor = 	 {Li, Yingzhen and Mandt, Stephan and Agrawal, Shipra and Khan, Emtiyaz},
  volume = 	 {258},
  series = 	 {Proceedings of Machine Learning Research},
  month = 	 {03--05 May},
  publisher =    {PMLR},
  url = 	 {https://proceedings.mlr.press/v258/pfortner25a.html}
}

@ARTICLE{Shlezinger2025AIAidedKF,
  author={Shlezinger, Nir and Revach, Guy and Ghosh, Anubhab and Chatterjee, Saikat and Tang, Shuo and Imbiriba, Tales and Dunik, Jindrich and Straka, Ondrej and Closas, Pau and Eldar, Yonina C.},
  journal={IEEE Signal Processing Magazine}, 
  title={Artificial Intelligence-Aided Kalman Filters: AI-Augmented Designs for Kalman-Type Algorithms}, 
  year={2025},
  volume={42},
  number={3},
  pages={52-76},
  doi={10.1109/MSP.2025.3569395}}

@misc{mortada2025recursivekalmannet,
  author       = {Mortada, Hassan and Falcon, Cyril and Kahil, Yanis and Clavaud, Math{\'e}o and Michel, Jean-Philippe},
  title        = {Recursive KalmanNet: Deep Learning-Augmented Kalman Filtering for State Estimation with Consistent Uncertainty Quantification},
  year         = {2025},
  eprint       = {2506.11639},
  archivePrefix= {arXiv},
  primaryClass = {cs.LG},
  url          = {https://arxiv.org/abs/2506.11639}
}

@INPROCEEDINGS{Stoyanova2025Quantum,
  author={Stoyanova, Ivelina and Şensebat, Orkun and Ji, Yanjun and Ganeshamurthy, Priyanka Arkalgud and Kajganic, Sonja and Willsch, Dennis and Monti, Antonello},
  booktitle={2025 IEEE PES Innovative Smart Grid Technologies Conference Europe (ISGT Europe)}, 
  title={Quantum Computing Methods for Dynamic State Estimation in Power Systems}, 
  year={2025},
  volume={},
  number={},
  pages={1-5},
  doi={10.1109/ISGTEurope64741.2025.11305506}}

@book{anderson1979optimal,
   title = "Optimal Filtering",
    author = {Anderson, Brian D. O. and Moore, John B.},
    year = "1979",
    isbn = "0-13-638122-7",
    publisher = "Prentice-Hall"
}

@book{kailath2000linear,
  title={Linear estimation},
  author={Kailath, Thomas and Sayed, Ali H and Hassibi, Babak},
  year={2000},
  publisher={Prentice Hall}
}

@article{hestenes1952methods,
  title={Methods of conjugate gradients for solving linear systems},
  author={Hestenes, Magnus R and Stiefel, Eduard},
  journal={Journal of research of the National Bureau of Standards},
  volume={49},
  number={6},
  pages={409--436},
  year={1952}
}

@book{greenbaum1997iterative,
  title={Iterative methods for solving linear systems},
  author={Greenbaum, Anne},
  year={1997},
  publisher={SIAM}
}

@book{saad2003iterative,
  title={Iterative methods for sparse linear systems},
  author={Saad, Yousef},
  year={2003},
  publisher={SIAM}
}

@article{pandapower,
  author={Thurner, Leon and Scheidler, Alexander and Schäfer, Florian and Menke, Jan-Hendrik and Dollichon, Julian and Meier, Friederike and Meinecke, Steffen and Braun, Martin},
  journal={IEEE Transactions on Power Systems}, 
  title={Pandapower—An Open-Source Python Tool for Convenient Modeling, Analysis, and Optimization of Electric Power Systems}, 
  year={2018},
  volume={33},
  number={6},
  pages={6510-6521},
  doi={10.1109/TPWRS.2018.2829021}}

@Article{JURECA,
  author           = {{J\"{u}lich Supercomputing Centre}},
  title            = {{JURECA: Data Centric and Booster Modules implementing the Modular Supercomputing Architecture at J\"{u}lich Supercomputing Centre}},
  doi              = {10.17815/jlsrf-7-182},
  number           = {A182},
  url              = {http://dx.doi.org/10.17815/jlsrf-7-182},
  volume           = {7},
  journal          = {J. of Large-Scale Res. Facil.},
  year             = {2021},
}

@article{li2025policy,
title = {Policy optimization of finite-horizon Kalman filter with unknown noise covariance},
journal = {Automatica},
volume = {177},
pages = {112320},
year = {2025},
issn = {0005-1098},
doi = {10.1016/j.automatica.2025.112320},
url = {https://www.sciencedirect.com/science/article/pii/S0005109825002134},
author = {Haoran Li and Yuan-Hua Ni}
}

@article{sun1998sensitivity,
title = {Sensitivity analysis of the discrete-time algebraic Riccati equation},
journal = {Linear Algebra and its Applications},
volume = {275-276},
pages = {595-615},
year = {1998},
note = {Proceedings of the Sixth Conference of the International Linear Algebra Society},
issn = {0024-3795},
doi = {10.1016/S0024-3795(97)10017-9},
url = {https://www.sciencedirect.com/science/article/pii/S0024379597100179},
author = {Ji-guang Sun}
}

@article{sun1998perturbation,
author = {Sun, Ji-guang},
title = {Perturbation Theory for Algebraic Riccati Equations},
journal = {SIAM Journal on Matrix Analysis and Applications},
volume = {19},
number = {1},
pages = {39-65},
year = {1998},
doi = {10.1137/S0895479895291303},
URL = {https://doi.org/10.1137/S0895479895291303}
}

@article{aalto2018spatial,
    author = {Aalto, Atte},
    title = {Spatial discretization error in Kalman filtering for discrete-time infinite dimensional systems},
    journal = {IMA Journal of Mathematical Control and Information},
    volume = {35},
    number = {Supplement_1},
    pages = {i51-i72},
    year = {2018},
    month = {04},
    issn = {0265-0754},
    doi = {10.1093/imamci/dnx015},
    url = {https://doi.org/10.1093/imamci/dnx015}
}

@article{BardsleyEtAl2013,
author = {Bardsley, Johnathan M. and Parker, Albert and Solonen, Antti and Howard, Marylesa},
title = {Krylov space approximate Kalman filtering},
journal = {Numerical Linear Algebra with Applications},
volume = {20},
number = {2},
pages = {171-184},
doi = {10.1002/nla.805},
url = {https://onlinelibrary.wiley.com/doi/abs/10.1002/nla.805},
year = {2013}
}

@ARTICLE{verhaegen1986numerical,
  author={Verhaegen, M. and Van Dooren, P.},
  journal={IEEE Transactions on Automatic Control}, 
  title={Numerical aspects of different Kalman filter implementations}, 
  year={1986},
  volume={31},
  number={10},
  pages={907-917},
  doi={10.1109/TAC.1986.1104128}}

@article{simoncini2003inexact,
author = {Simoncini, Valeria and Szyld, Daniel B.},
title = {Theory of Inexact Krylov Subspace Methods and Applications to Scientific Computing},
journal = {SIAM Journal on Scientific Computing},
volume = {25},
number = {2},
pages = {454-477},
year = {2003},
doi = {10.1137/S1064827502406415},
URL = {https://doi.org/10.1137/S1064827502406415}
}

@article{meidner2009goal,
url = {https://doi.org/10.1515/JNUM.2009.009},
title = {Goal-oriented error control of the iterative solution of finite element equations},
author = {D. Meidner and R. Rannacher and J. Vihharev},
pages = {143--172},
volume = {17},
number = {2},
journal = {Journal of Numerical Mathematics},
doi = {doi:10.1515/JNUM.2009.009},
year = {2009}
}

@article{endtmayer2020twoside,
author = {Endtmayer, B. and Langer, U. and Wick, T.},
title = {Two-Side a Posteriori Error Estimates for the Dual-Weighted Residual Method},
journal = {SIAM Journal on Scientific Computing},
volume = {42},
number = {1},
pages = {A371-A394},
year = {2020},
doi = {10.1137/18M1227275},
URL = {https://doi.org/10.1137/18M1227275}
}

@article{cockayne2019bayescg,
  title={A Bayesian conjugate gradient method (with discussion)},
  author={Cockayne, Jon and Oates, Chris J and Ipsen, Ilse CF and Girolami, Mark},
  year={2019},
journal = {Bayesian Analysis},
volume  = {14},
  number  = {3},
  pages   = {937--1012},
  doi     = {10.1214/19-BA1145},
  url     = {https://doi.org/10.1214/19-BA1145}
}

@article{mahmoud2004resilient,
	author = {Mahmoud, Magdi S.},
	title = {{Resilient linear filtering of uncertain systems}},
	journal = {Automatica},
	volume = {40},
	number = {10},
	pages = {1797--1802},
	year = {2004},
	month = oct,
	issn = {0005-1098},
	publisher = {Pergamon},
	doi = {10.1016/j.automatica.2004.05.007}
}

@article{deoliveira2005implementation,
	author = {de Oliveira, M. C. and Geromel, J. C.},
	title = {{H2 and H{$\infty$} Filtering Design Subject to Implementation Uncertainty}},
	journal = {IFAC Proceedings Volumes},
	volume = {36},
	number = {11},
	pages = {121--126},
	year = {2003},
	month = jun,
	issn = {1474-6670},
	publisher = {Elsevier},
	doi = {10.1016/S1474-6670(17)35650-1}
}

@article{fan2013ekfdse,
	author = {Fan, Lingling and Wehbe, Yasser},
	title = {{Extended Kalman filtering based real-time dynamic state and parameter estimation using PMU data}},
	journal = {Electr. Power Syst. Res.},
	volume = {103},
	pages = {168--177},
	year = {2013},
	month = oct,
	issn = {0378-7796},
	publisher = {Elsevier},
	doi = {10.1016/j.epsr.2013.05.016}
}

@ARTICLE{wang2018pmujacobian,
  author={Wang, Xiaozhe and Bialek, Janusz W. and Turitsyn, Konstantin},
  journal={IEEE Transactions on Power Systems}, 
  title={PMU-Based Estimation of Dynamic State Jacobian Matrix and Dynamic System State Matrix in Ambient Conditions}, 
  year={2018},
  volume={33},
  number={1},
  pages={681-690},
  doi={10.1109/TPWRS.2017.2712762}}

@INPROCEEDINGS{akhlaghi2017adaptive,
  author={Akhlaghi, Shahrokh and Zhou, Ning and Huang, Zhenyu},
  booktitle={2017 IEEE Power \& Energy Society General Meeting}, 
  title={Adaptive adjustment of noise covariance in Kalman filter for dynamic state estimation}, 
  year={2017},
  volume={},
  number={},
  pages={1-5},
  doi={10.1109/PESGM.2017.8273755}}

@article{poyneer2023lqg,
	author = {Poyneer, Lisa A. and Ammons, S. Mark and Kim, Mike K. and Bauman, Brian and Terrel-Perez, Jesse and Lemmer, Aaron J. and Nguyen, Jayke and Nguyen, Jayke},
	title = {{Laboratory demonstration of the prediction of wind-blown turbulence by adaptive optics at 8{\hspace{0.167em}}{\hspace{0.167em}}kHz with use of LQG control}},
	journal = {Appl. Opt.},
	volume = {62},
	number = {8},
	pages = {1871--1885},
	year = {2023},
	month = mar,
	issn = {2155-3165},
	publisher = {Optica Publishing Group},
	doi = {10.1364/AO.474730}
}

@article{setter2018realtime,
  title = {Real-time Kalman filter: Cooling of an optically levitated nanoparticle},
  author = {Setter, Ashley and Toro\ifmmode \check{s}\else \v{s}\fi{}, Marko and Ralph, Jason F. and Ulbricht, Hendrik},
  journal = {Phys. Rev. A},
  volume = {97},
  issue = {3},
  pages = {033822},
  numpages = {8},
  year = {2018},
  month = {Mar},
  publisher = {American Physical Society},
  doi = {10.1103/PhysRevA.97.033822},
  url = {https://link.aps.org/doi/10.1103/PhysRevA.97.033822}
}

@article{magrini2021realtime,
	author = {Magrini, Lorenzo and Rosenzweig, Philipp and Bach, Constanze and Deutschmann-Olek, Andreas and Hofer, Sebastian G. and Hong, Sungkun and Kiesel, Nikolai and Kugi, Andreas and Aspelmeyer, Markus},
	title = {{Real-time optimal quantum control of mechanical motion at room temperature}},
	journal = {Nature},
	volume = {595},
	pages = {373--377},
	year = {2021},
	month = jul,
	issn = {1476-4687},
	publisher = {Nature Publishing Group},
	doi = {10.1038/s41586-021-03602-3}
}

@misc{fzj2026jion,
  author       = {{Forschungszentrum J{\"u}lich}},
  title        = {Official Launch of {JION} Quantum Computer at J{\"u}lich},
  year         = {2026},
  month        = sep,
  day          = {3},
  howpublished = {Press release},
  url          = {https://www.fz-juelich.de/en/news/archive/press-release/2026/launch-jion-quantum-computer-juelich},
  note         = {Accessed 2026-09-20}
}

@misc{hsieh2019learning,
      title={Learning Neural PDE Solvers with Convergence Guarantees}, 
      author={Jun-Ting Hsieh and Shengjia Zhao and Stephan Eismann and Lucia Mirabella and Stefano Ermon},
      year={2019},
      eprint={1906.01200},
      archivePrefix={arXiv},
      primaryClass={math.NA},
      url={https://arxiv.org/abs/1906.01200}, 
}

@inproceedings{kaneda2023deep,
author = {Kaneda, Ayano and Akar, Osman and Chen, Jingyu and Kala, Victoria Alicia Trevino and Hyde, David and Teran, Joseph},
title = {A deep conjugate direction method for iteratively solving linear systems},
year = {2023},
publisher = {JMLR.org},
booktitle = {Proceedings of the 40th International Conference on Machine Learning},
articleno = {643},
numpages = {17},
location = {Honolulu, Hawaii, USA},
series = {ICML'23}
}

@incollection{powell1994cobyla,
	author = {Powell, M. J. D.},
	title = {{A Direct Search Optimization Method That Models the Objective and Constraint Functions by Linear Interpolation}},
	booktitle = {{Advances in Optimization and Numerical Analysis}},
	journal = {SpringerLink},
	pages = {51--67},
	year = {1994},
	isbn = {978-94-015-8330-5},
	publisher = {Springer},
	address = {Dordrecht, The Netherlands},
	doi = {10.1007/978-94-015-8330-5_4}
}

@techreport{mcgeoch2022advantage2,
  author      = {McGeoch, Catherine and Farr{\'e}, Pau and Boothby, Kelly},
  title       = {The {D-Wave} {Advantage2} Prototype},
  institution = {D-Wave Systems Inc.},
  address     = {Burnaby, BC, Canada},
  number      = {14-1063A-A},
  year        = {2022},
  url         = {https://www.dwavequantum.com/media/eixhdtpa/14-1063a-a_the_d-wave_advantage2_prototype-4.pdf}
}

@misc{originqcloud,
  author       = {{Origin Quantum Computing Technology (Hefei) Co., Ltd.}},
  title        = {Origin Quantum Cloud Platform},
  howpublished = {\url{https://qcloud.originqc.com.cn}},
  note         = {Origin Wukong superconducting processor; accessed September 23, 2026},
  year         = {2026}
}

\end{document}